\newtheorem{algo}{Algorithm}
\renewcommand{\algocf@captiontext}[2]{#1\algocf@typo. \AlCapFnt{}#2} % text of caption
\def\@algocf@capt@plain{top}
\renewcommand{\algocf@makecaption}[2]{%
  \addtolength{\hsize}{\algomargin}%
  \sbox\@tempboxa{\algocf@captiontext{#1}{#2}}%
  \ifdim\wd\@tempboxa >\hsize%     % if caption is longer than a line
    \hskip .5\algomargin%
    \parbox[t]{\hsize}{\algocf@captiontext{#1}{#2}}% then caption is not centered
  \else%
    \global\@minipagefalse%
    \hbox to\hsize{\box\@tempboxa}% else caption is centered
  \fi%
  \addtolength{\hsize}{-\algomargin}%
}
\DeclareMathOperator*{\argmax}{arg\,max}
\DeclareMathOperator{\tree}{\textsc{tree}}
\DeclareMathOperator{\desc}{\textsc{desc}}
\DeclareMathOperator{\branch}{\textsc{branch}}
\DeclareMathOperator{\gain}{\textsc{gain}}
\DeclareMathOperator{\term}{\textsc{term}}
\def\T{{ \mathrm{\scriptscriptstyle T} }}
\newtheorem{proposition}{Proposition}
\newtheorem{lemma}{Lemma}
\newtheorem{definition}{Definition}
\newtheorem{condition}{Condition}
\newtheorem{theorem}{Theorem}
\title{\emph{Tree-Values:} selective inference for regression trees}
\author[a]{Anna C. Neufeld}
\author[b]{Lucy L. Gao}
\author[a,c]{Daniela M. Witten}
\affil[a]{Department of Statistics, University of Washington}
\affil[b]{Department of Statistics, University of British Columbia}
\affil[c]{Department of Biostatistics, University of Washington}
\begin{document}
\maketitle

\begin{abstract}%
We consider conducting inference on the output of the Classification and Regression Tree (CART) \citep{breiman1984classification} algorithm. A naive approach to inference that does not account for the fact that the tree was estimated from the data will not achieve standard  guarantees, 
such as Type 1 error rate control and nominal coverage. 
 Thus,  we propose a 
selective inference framework for conducting inference on a fitted CART tree. In a nutshell, we condition on the fact that the tree was estimated from the data.  We propose a test for the difference in the mean response between a pair of terminal nodes that controls the selective Type 1 error rate, and  a confidence interval for the mean response within a single terminal node that attains the nominal selective coverage. Efficient algorithms for computing the necessary conditioning sets are provided. We apply these methods in simulation and to a dataset involving the association between portion control interventions and caloric intake. 
 \end{abstract}%

 \newpage
 
\section{Introduction}
\label{section_intro}

Regression tree algorithms recursively partition  covariate space using binary splits to obtain regions that are maximally homogeneous with respect to a continuous response. 
 The Classification and Regression Tree (CART; \citealt{breiman1984classification}) proposal, which involves growing a large tree and then pruning it back, is by far the most popular of these algorithms. 

The regions defined by the splits in a fitted CART tree induce a piecewise constant regression model
 where the predicted response within each region is the mean of the observations in that region. CART is popular in large part because it is highly interpretable; someone without technical expertise can easily “read” the tree to make predictions, and to understand why a certain prediction is made. However, its interpretability belies the fact that CART trees are highly unstable: a small change to the training dataset can drastically change the structure of the fitted tree. In the absence of an established notion of statistical significance associated with a given split in the tree, it is hard for a practitioner to know whether they are interpreting signal or noise. In this paper, we use the framework of selective inference to fill this gap by providing a toolkit to conduct inference on hypotheses motivated by the output of the CART algorithm.

Given a CART tree,  consider testing for a difference in the mean response of the regions resulting from a binary split. A very naive approach, such as a two-sample $Z$-test, that   does not account for the fact that the regions were themselves estimated from the data will fail to control the selective Type 1 error rate: the probability of rejecting a true null hypothesis, given that we decided to test it \citep{fithian2014optimal}. Similarly, a naive $Z$-interval for the mean response in a region will not attain nominal selective coverage: the probability that the interval covers the parameter, given that we chose to construct it.

In fact, approaches for conducting inference on the output of a regression tree are quite limited. Sample splitting involves fitting a CART tree using a subset of the observations, which will naturally lead to an inferior tree to the one resulting from all of the observations, and thus is unsatisfactory in many applied settings; see \cite{athey2016recursive}. 
 \citet{wager2015adaptive} develop convergence guarantees for unpruned CART trees that can be leveraged to build confidence intervals for the mean response within a region; however, they do not provide finite-sample results and cannot accommodate pruning. \cite{loh2016identification} and \cite{loh2019subgroups} develop bootstrap calibration procedures that attempt to provide confidence intervals for the regions of a regression tree. In Appendix~\ref{appendix:loh}, we show that this bootstrap calibration approach fails to provide intervals that achieve nominal coverage for the parameters of interest in this paper. 
 
As an alternative to performing inference on a CART tree, one could turn to the conditional inference tree (CTree) framework of \cite{hothorn2006unbiased}. This framework uses a different tree-growing algorithm than CART, and at each split tests for linear association between the split covariate and the response. As summarized in \cite{loh2014fifty}, the CTree framework alleviates issues with instability and variable selection bias associated with CART. Despite these advantages,  CTree remains far less widely-used than CART. Furthermore, while CTree attaches a notion of statistical significance to each split in a tree, it does not directly allow for inference on the mean response within a region or the difference in mean response between two regions. Finally, while the CTree framework requires few assumptions, its inference is based on asymptotics. 

In this paper, we introduce a finite-sample selective inference \citep{fithian2014optimal} framework for the difference between the mean responses in two regions, and for the mean response in a single region, in a pruned or unpruned CART tree. 
We condition on the event  that CART yields a particular set of regions, and thereby achieve selective Type 1 error rate control as well as nominal selective coverage. 

The rest of this paper is organized as follows. In Section~\ref{section_Background}, we review the CART algorithm, and briefly define some key ideas in selective inference. In Section~\ref{section_frame}, we present our proposal for selective inference on the regions estimated via CART. We show that the necessary conditioning sets can be efficiently computed in Section~\ref{section_computing}. 
In Section~\ref{section_simstudy} we compare our framework to sample splitting and CTree via simulation. In Section~\ref{section_realData} we compare our framework to CTree on data from the Box Lunch Study. The discussion is in Section~\ref{section_disc}. Technical details are relegated to the supplementary materials.  

\section{Background}
\label{section_Background}

\subsection{Notation for Regression Trees}
\label{subsec_treenotation} 

Given $p$ covariates $(X_1,\ldots,X_p)$ measured on each of $n$ observations $(x_1,\ldots,x_n)$,  let $x_{j, (s)}$ denote the $s$th order statistic of the $j$th covariate, and define the half-spaces 
\begin{equation}
\label{eq:halfspace}	
\chi_{j,s,1} = \left\{z \in \mathbb{R}^p: z_{j} \leq x_{j, (s)} \right\}, \ \ \ \ \ \ \ \chi_{j,s,0} = \left\{z \in \mathbb{R}^p : z_{j} > x_{j, (s)} \right\}.
\end{equation}
The following definitions are illustrated in Figure~\ref{fig_simpleTree}. 

\begin{definition}[Tree and Region]
\label{def_tree}
Consider a set $\mathcal{S}$ such that $R \subseteq \mathbb{R}^p$ for all $R \in \mathcal{S}$. Then $\mathcal{S}$ is a \emph{tree} if and only if (i) $\mathbb{R}^p \in \mathcal{S}$; (ii) every element of $\mathcal{S} \setminus \mathbb{R}^p$ equals $R \cap \chi_{j,s,e}$ for some $R \in \mathcal{S}$,  
$j \in \{1,\ldots,p\}$, $s \in \{1,\ldots,n-1\}$, $e \in \{0,1\}$; (iii) $R \cap \chi_{j,s,e} \in \mathcal{S}$ implies that $R \cap \chi_{j,s,1-e} \in \mathcal{S}$ for $e \in \{0,1\}$; and (iv) for any $R, R' \in \mathcal{S}, R \cap R' \in \{\emptyset, R, R'\}$. If $R \in \mathcal{S}$ and $\mathcal{S}$ is a tree, then we refer to $R$ as a \emph{region}. 
\end{definition}
We use the notation $\tree$ to refer to a particular tree.  
Definition~\ref{def_tree} implies that any region $R \in \tree \setminus \{\mathbb{R}^p\}$ is of the form  $R = \cap_{l=1}^L \chi_{j_l,s_l,e_l}$, where for each $l=1,\ldots,L$, we have that $j_l \in \{1,\ldots,p\}$, $s_l \in \{1,\ldots,n-1\}$, and $e_l \in \{0,1\}$. We call $L$ the \emph{level} of the region, and use the convention that the level of $\mathbb{R}^p$ is $0$. 

\begin{definition}[Siblings and Children]
Suppose that $\{R, R \cap \chi_{j,s,1},R \cap \chi_{j,s,0}\} \subseteq \tree$. Then $R \cap \chi_{j,s,1}$ and $R \cap \chi_{j,s,0}$ are \emph{siblings}. Furthermore, they are the \emph{children} of $R$.  
\end{definition}

\begin{definition}[Descendant and Ancestor]
\label{def_descendant}
If $R,R' \in \tree$ and $R \subseteq R'$, then $R$ is a \emph{descendant} of $R'$, and $R'$ is an \emph{ancestor} of $R$. 
\end{definition}

\begin{definition}[Terminal Region]
\label{def_term}
A region $R \in \tree$ without descendants is a  \emph{terminal region}. 
\end{definition}
We let $\desc(R, \tree)$ denote the set of descendants of region $R$ in $\tree$, and we let $\term(R,\tree)$ denote the subset of $\desc(R, \tree)$ that are terminal regions.

 Given a response vector ${y} \in \mathbb{R}^n$, let $\bar{y}_R =  \left( \sum_{i: x_i \in R} y_i \right) / \left\{ \sum_{i=1}^n 1_{(x_i \in R)} \right\}$, where $1_{(A)}$ is an indicator variable that equals $1$ if the event $A$ holds, and $0$ otherwise. Then, a tree $\tree$ induces the regression model $\hat{\mu}(x) = \sum_{R \in \term(\mathbb{R}^p, \tree)} \bar{y}_R 1_{(x \in R)}$. In other words, it predicts the response within each terminal region to be the mean of the observations in that region. 

\begin{figure}
\centering
\includegraphics[width=6.2cm, height=4cm]{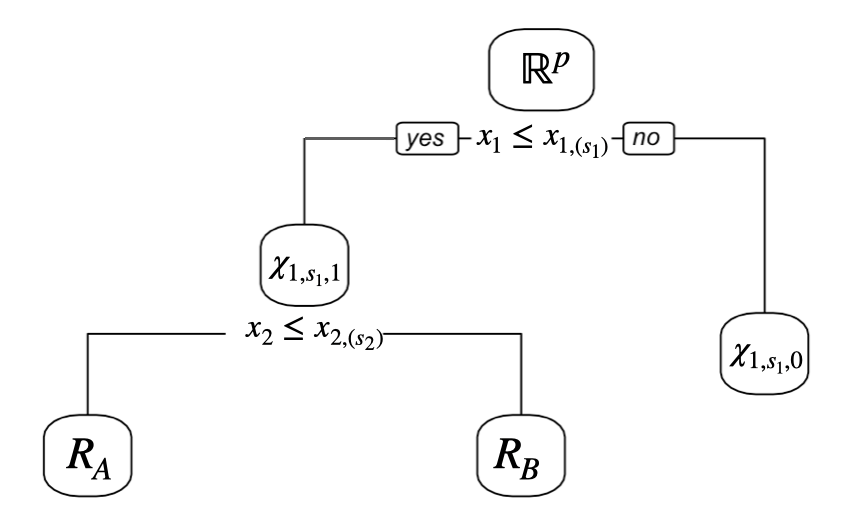}
\hspace{20mm}
\begin{tikzpicture}[scale=0.29, sibling distance=7em]
\node[fill, outer sep=0pt, inner sep = 0pt] (1) at (-5, -5) {};
\node[fill, outer sep=0pt, inner sep = 0pt]  (2) at (-5, 5) {};
\node[fill, outer sep=0pt, inner sep = 0pt]  (3) at (6.7, 5) {};
\node[fill, outer sep=0pt, inner sep = 0pt]  (4) at (6.7, -5) {} ;
\node[fill, outer sep=0pt, inner sep = 0pt]  (5) at (2.7, -5) {} ;
\node[fill, outer sep=0pt, inner sep = 0pt]  (6) at (2.7, 5) {} ;
\node (7) at (3.9, -6) {$x_{1,(s_1)}$} ;
\node[fill, outer sep=0pt, inner sep = 0pt]  (8) at (-5, 2.5) {} ;
\node[fill, outer sep=0pt, inner sep = 0pt]  (9) at (2.7, 2.5) {} ;
\node (10) at (-7, 2.5) {$x_{2,(s_2)}$} ;
\node[outer sep=0pt, inner sep = 0pt]  (11) at (2.5, -1) {} ;
\node[outer sep=0pt, inner sep = 0pt]  (12) at (5, -1) {} ;
\node (15) at (-2, -1.5) {$R_A$};
\node (16) at (-2, 3.5) {$R_B$} ;
\node (18) at (4.7, 0) {$\chi_{1,s_1,0}$} ;
\path[-,font=\scriptsize]
(1) edge[thick] node[left] {$\bm{X_2}$} (2)
(2) edge[thick] node[above] {}(3)
(3) edge[thick] node[above] {}(4)
(4) edge[thick] node[below] {$\bm{X_1}$}(1)
(8) edge[thick] node[below] {} (9)
(5) edge[thick] node[below] {} (6);
\end{tikzpicture}

\caption{The regression tree takes the form  $\tree = \{ \mathbb{R}^p, \chi_{1,s_1,1}, \chi_{1,s_1,0}, \chi_{1,s_1,1} \cap \chi_{2,s_2,1}, \chi_{1,s_1,1} \cap \chi_{2,s_2,0}\}$. The regions 
 $R_A = \chi_{1,s_1,1} \cap \chi_{2,s_2,1}$ and  $R_B = \chi_{1,s_1,1} \cap \chi_{2,s_2,0}$ are siblings, and are children, and therefore descendants, of the region $\chi_{1,s_1,1}$. The ancestors of $R_A$ and $R_B$ are $\mathbb{R}^p$ and $\chi_{1,s_1,1}$.  Furthermore, $R_A$, $R_B$, and $\chi_{1,s_1,0}$ are terminal regions. 
}  
\label{fig_simpleTree}
\end{figure}

\subsection{A Review of the CART Algorithm \citep{breiman1984classification}}
\label{subsec_CARTreview} 

The CART algorithm \citep{breiman1984classification} greedily searches for a tree that minimizes the sum of squared errors  
$\sum_{R \in \term(\mathbb{R}^p, \tree)} \sum_{i: x_i \in R} (y_i - \bar{y}_R)^2$. It first grows a very large tree via recursive binary splits, starting with the full covariate space $\mathbb{R}^p$. To split a region $R$, it selects the covariate  $x_{j}$ and the split point $x_{j,(s)}$ to maximize the \emph{gain}, defined as 
\begin{equation}
\label{eq:gain}
\small
\textsc{gain}_R\left(y,j,s\right) \equiv \sum_{i \in R} \left(y_i - \bar{y}_R\right)^2 - \left\{ \sum_{i \in R \cap \chi_{j,s, 1}} \left(y_i - \bar{y}_{R \cap \chi_{j,s,1}}\right)^2 + \sum_{i \in R \cap \chi_{j,s,0}} \left(y_i - \bar{y}_{R \cap \chi_{j,s,0}}\right)^2 \right\}. 
\end{equation}
\normalsize
Details are provided in Algorithm~\ref{alg_growing}. 

Once a very large tree has been grown, cost-complexity pruning is applied. We define the average per-region gain in sum-of-squared errors provided by the descendants of a region $R$, 
\small
\begin{equation}
\label{def_g}	
g(R, \tree,y) = \frac{\underset{i: x_i \in R}{\sum} (y_i - \bar{y}_{R})^2 - \underset{r \in \term(R, \tree)}{\sum} \ \ \underset{i: x_i \in r}{\sum} (y_i - \bar{y}_r)^2}{ |\term(R, \tree)|-1}.
\end{equation}
\normalsize
Given a complexity parameter $\lambda \geq 0$, if $g(R, \tree,y) < \lambda$ for some $R \in \tree$, then cost-complexity pruning removes $R$'s descendants from $\tree$, turning $R$ into a terminal region. Details are  in Algorithm~\ref{alg_pruning}, which involves the notion of a \emph{bottom-up ordering}. 

\begin{definition}[Bottom-up ordering]
\label{def_bottomup}
Let $\tree = \{R_1,\ldots, R_K\}$. Let $\pi$ be a permutation of the integers $(1, \ldots, K)$. Then $\mathcal{O} = \left(R_{\pi(1)}, \ldots, R_{\pi(K)}\right)$ is a \emph{bottom-up ordering} of the regions in $\tree$ if, for all $k =1,\ldots,K$, $\pi(k) \leq \pi(j)$ if $R_k \in \desc(R_j, \tree)$.
\end{definition}
There are other equivalent formulations for cost-complexity pruning (see Proposition 7.2 in \cite{ripley1996pattern}); the formulation in Algorithm~\ref{alg_pruning} is  convenient for establishing the results in this paper.

To summarize, the CART algorithm 
first applies Algorithm~\ref{alg_growing} to the initial region $\mathbb{R}^p$ and the data $y$ to obtain an unpruned tree, 
 which we call $\tree^0(y)$. It then applies Algorithm~\ref{alg_pruning} to $\tree^0(y)$ to obtain an optimally-pruned tree using complexity parameter $\lambda$, which we call $\tree^\lambda(y)$. 

\SetKwFunction{printgrow}{\textsc{Grow}}
\SetKwFunction{printprune}{\textsc{Prune}}
 
\begin{algo}[Growing a tree] \hspace{5mm}\\
 \printgrow{$R$, $y$}
 \vspace{-3mm}
\begin{tabbing}
  \qquad 1. If a stopping condition is met, return $R$. \\
  \qquad 2. Else return $\{ R, $ \printgrow{$R \cap \chi_{\tilde{j}, \tilde{s}, 1}, y$}, \printgrow{$R \cap \chi_{\tilde{j}, \tilde{s}, 0}$, $y$}$\}$, where \\
 \qquad \qquad \qquad $
 (\tilde{j},\tilde{s}) \in 
\argmax_{(j,s):
s \in \{1,\ldots,n-1\}, j \in \{1,  \ldots, p\}}
\textsc{gain}_R\left(y,j,s\right).
 $
\end{tabbing}
\label{alg_growing}
\end{algo}

\begin{algo}[Cost-complexity pruning]
\label{alg_pruning}
Parameter $\mathcal{O}$ is a bottom-up ordering of the $K$ regions in $\tree$.\\
\printprune{$\tree$, $y$, $\lambda$, $\mathcal{O}$}
\begin{tabbing}
\qquad 1. Let $\tree_0=\tree$. Let $K$ be the number of regions in $\tree_0$. \\
\qquad 2. For $k=1,\ldots,K$: \\
 \qquad  \qquad (a) Let $R$ be the $k$th region in $\mathcal{O}$. \\
 \qquad  \qquad (b) Update $\tree_k$ as follows, where $g(\cdot)$ is defined in \eqref{def_g}: \\
\qquad \qquad \qquad 
$\tree_k \leftarrow 
\begin{cases}
\tree_{k-1}\setminus \desc(R, \tree_{k-1}) & \textrm{ if }  g(R, \tree_{k-1},y) < \lambda , \\
\tree_{k-1} & \textrm {otherwise}.
\end{cases}$ \\
\normalsize
\qquad 3. Return $\tree_K$.
\end{tabbing}
\end{algo}

\subsection{A Brief Overview of Selective Inference}
\label{subsection_selectivereview}

Here, we provide a very brief overview of selective inference; see \cite{fithian2014optimal} or \cite{taylor2015statistical} for a more detailed treatment. 

Consider conducting inference on a parameter $\theta$. Classical approaches assume that we were already interested in conducting inference on $\theta$ before looking at our data. If, instead, our interest in $\theta$ was sparked by looking at our data, then 
 inference must be performed with care: we must account for the fact that we ``selected" $\theta$ based on the data \citep{fithian2014optimal}. 
 In this setting, interest focuses on a p-value $p(Y)$ such that the test for $H_0: \theta = \theta_0$ based on $p(Y)$ controls the \emph{selective Type 1 error} rate, in the sense that 
\begin{equation}
\label{eq_st1e}
pr_{H_0: \theta = \theta_0} \left\{ p(Y) \leq \alpha  \mid \theta \text{ selected} \right\} \leq \alpha, \text{ for all } 0 \leq \alpha \leq 1.
\end{equation}
Also of interest are confidence intervals $[L(Y),U(Y)]$ that achieve \emph{$(1-\alpha)$-selective coverage} for the parameter $\theta$, meaning that
\begin{equation}
\label{eq_selcov}
pr\left\{ \theta \in [L(Y),U(Y)] \mid \theta \text{ selected} \right\} \geq 1-\alpha.
\end{equation}
Roughly speaking, the inferential guarantees in \eqref{eq_st1e} and \eqref{eq_selcov} can be  achieved by defining p-values and confidence intervals that condition on the aspect of the data that led to the selection of $\theta$.  In recent years, a number of papers have taken this approach to perform selective inference on parameters selected from the data in the  regression \citep{lee2016exact,  liu2018more, tian2018selective, tibshirani2016exact}, clustering \citep{gao2020selective}, and changepoint detection \citep{hyun2018post, jewell2019testing} settings.

In the next section, we propose p-values that satisfy~\eqref{eq_st1e} and confidence intervals that satisfy~\eqref{eq_selcov}
in the setting of CART, where the parameter of interest is either the mean response within a region, or the difference between the mean responses of two sibling regions.

\section{The Selective Inference Framework for CART}
\label{section_frame}

\subsection{Inference on a Pair of Sibling Regions}
\label{subsec_siblings}

Throughout this paper, we assume that $Y \sim N_n({\mu}, \sigma^2 I_n)$ with $\sigma > 0$ known. 

We let $X \in \mathbb{R}^{n \times p}$ denote a fixed covariate matrix. Suppose that we apply CART with complexity parameter $\lambda$ to a realization $y = (y_1,\ldots,y_n)^\T$ from $Y$ to obtain $\tree^{\lambda}(y)$. 
 Given sibling regions $R_A$ and $R_B$ in $\tree^\lambda(y)$, we define a contrast vector $\nu_{sib} \in \mathbb{R}^n$ such that 
 \begin{equation}
\label{eq_nusib}
\left(\nu_{sib}\right)_i = \frac{1_{(x_i \in R_A)}}{\sum_{i'=1}^n 1_{(x_{i'} \in R_A)}} - \frac{1_{(x_i \in R_B)}}{\sum_{i'=1}^n 1_{(x_{i'} \in R_B)}},
\end{equation} 
and $\nu_{sib}^\T \mu = \left( \sum_{i: x_i \in R_A} \mu_i\right) /   \left\{ \sum_{i=1}^n 1_{(x_i \in R_A)} \right\}       - \left( \sum_{i: x_i \in R_B} \mu_i \right) / \left\{ \sum_{i=1}^n 1_{(x_i \in R_B)} \right\}$.
 Now, consider testing the null hypothesis of no difference in means between $R_A$ and $R_B$, i.e.  $H_0: \nu_{sib}^\T{\mu}  = 0  \text{ versus } H_1: \nu_{sib}^\T{\mu} \neq 0$.  
 This null hypothesis is of interest because  $R_A$ and $R_B$ appeared as siblings in $\tree^\lambda(y)$. A test based on a p-value of the form $ pr_{H_0} \left( |\nu_{sib}^\T Y| \geq |\nu_{sib}^\T {y}| \right)$ that does not account for this will not control the selective Type 1 error rate   
 in \eqref{eq_st1e}.

To control the selective Type 1 error rate, we propose a p-value that conditions on the aspect of the data that led us to select $\nu_{sib}^\T \mu$,
 \begin{equation}
\label{eq_prePval}
 pr_{H_0} \left\{ |\nu_{sib}^\T Y| \geq |\nu_{sib}^\T {y}| \mid R_A,R_B \text{ are siblings in } \tree^\lambda(Y) \right\}.
\end{equation}
But  \eqref{eq_prePval}
 depends on a nuisance parameter, the portion of $\mu$ that is orthogonal to $\nu_{sib}$. To remove the dependence on this nuisance parameter, we condition on its sufficient statistic $\mathcal{P}_{\nu_{sib}}^\perp Y$, where $\mathcal{P}_{\nu}^\perp = I - \nu \nu^\T / \|\nu\|_2^2$. The resulting p-value, or ``tree-value", is 
defined as
\begin{equation}
\label{def_mainpval}
p_{sib}(y) = 
pr_{H_0} \left\{ |\nu_{sib}^\T Y| \geq |\nu_{sib}^\T {y}| \mid  R_A, R_B \text{ are siblings in } \tree^\lambda(Y), \mathcal{P}_{\nu_{sib}}^\perp Y = \mathcal{P}_{\nu_{sib}}^\perp {y} \right\}.
\end{equation}

Results similar to Theorem~\ref{theorem_1jewell} can be found in \cite{jewell2019testing, lee2016exact, liu2018more}, and \cite{tibshirani2016exact}. 

\begin{theorem}
\label{theorem_1jewell}
The test based on the p-value $p_{sib}(y)$ in \eqref{def_mainpval} controls the selective Type 1 error rate for $H_0: \nu_{sib}^\T \mu=0$, where $\nu_{sib}$ is defined in \eqref{eq_nusib}, in the sense that
\begin{equation}
pr_{H_0}\left\{ p_{sib}(Y) \leq \alpha \mid R_A, R_B \text{ are siblings in } \tree^\lambda(Y)  \right\} = \alpha, \text{ for all } 0 \leq \alpha \leq 1. \label{eq:type1}
\end{equation}
Furthermore, $p_{sib}(y) = pr\left\{|\phi| \geq |\nu_{sib}^\T {y}| \mid \phi \in S^\lambda_{sib}(\nu_{sib}) \right\},$
where $\phi \sim N(0, \|\nu_{sib}\|_2^2 \sigma^2)$, $y'(\phi,\nu) = \mathcal{P}_{\nu}^\perp {y} + \phi( \nu / \|\nu \|_2^2)$, and 
\begin{equation}
\label{def_S}
S_{sib}^\lambda(\nu_{sib}) = \{ \phi : R_A, R_B \text{ are siblings in } \tree^\lambda\{y'(\phi,\nu_{sib})\} \}. 
\end{equation}
\end{theorem}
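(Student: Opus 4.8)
The plan is to reduce the selective test to a one-dimensional problem by exploiting the Gaussian decomposition of $Y$ along and orthogonal to $\nu_{sib}$, and then to invoke the probability integral transform. Write $\nu = \nu_{sib}$ and set $\phi = \nu^\T Y$. Since $\mathcal{P}_\nu^\perp = I - \nu\nu^\T/\|\nu\|_2^2$, every realization satisfies $Y = \mathcal{P}_\nu^\perp Y + \phi\,\nu/\|\nu\|_2^2 = y'(\phi,\nu)$. Under $Y \sim N_n(\mu,\sigma^2 I_n)$ the two linear statistics $\mathcal{P}_\nu^\perp Y$ and $\phi = \nu^\T Y$ are jointly Gaussian with cross-covariance $\sigma^2 \mathcal{P}_\nu^\perp \nu = 0$, hence independent; under $H_0$ we have $\phi \sim N(0, \|\nu\|_2^2\sigma^2)$, and this law is unaffected by conditioning on $\mathcal{P}_\nu^\perp Y$.

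First I would establish the ``Furthermore'' characterization. Because $X$ and $\lambda$ are fixed and CART is a deterministic measurable function of $Y$, once we condition on $\mathcal{P}_\nu^\perp Y = \mathcal{P}_\nu^\perp y$ the entire tree $\tree^\lambda(Y)$ is a function of the scalar $\phi$ alone, namely $\tree^\lambda\{y'(\phi,\nu)\}$. Consequently the selection event ``$R_A,R_B$ are siblings in $\tree^\lambda(Y)$'' becomes exactly $\{\phi \in S_{sib}^\lambda(\nu)\}$ with $S_{sib}^\lambda$ as in \eqref{def_S}, and the event $\{|\nu^\T Y| \geq |\nu^\T y|\}$ becomes $\{|\phi| \geq |\nu^\T y|\}$. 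Substituting these into \eqref{def_mainpval}, and using the independence above to replace the conditional law of $\phi$ by its marginal $N(0,\|\nu\|_2^2\sigma^2)$, yields $p_{sib}(y) = pr_{H_0}\{|\phi| \geq |\nu^\T y| \mid \phi \in S_{sib}^\lambda(\nu)\}$, as claimed.

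Next I would prove \eqref{eq:type1}. Fix a value $w$ of $\mathcal{P}_\nu^\perp Y$ compatible with the selection event; then $S := S_{sib}^\lambda(\nu)$, which enters through $\mathcal{P}_\nu^\perp y = w$, is a fixed measurable subset of $\mathbb{R}$, and so is the conditional survival function $g_w(t) = pr\{|\phi'| \geq t \mid \phi' \in S\}$ for $\phi' \sim N(0,\|\nu\|_2^2\sigma^2)$. The characterization just derived reads $p_{sib}(Y) = g_w(|\phi|)$ on the event $\{\mathcal{P}_\nu^\perp Y = w\}$. Since the law of $\phi$ given $\{\phi \in S\}$ equals that of $\phi'$ given $\{\phi' \in S\}$, and $g_w$ is the survival function of the continuous random variable $|\phi'|$ under this conditional law, the probability integral transform gives $g_w(|\phi|) \mid \{\phi \in S\} \sim \mathrm{Uniform}(0,1)$, whence $pr_{H_0}\{p_{sib}(Y) \leq \alpha \mid \phi \in S,\ \mathcal{P}_\nu^\perp Y = w\} = \alpha$ for every admissible $w$. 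Averaging this identity over the conditional law of $\mathcal{P}_\nu^\perp Y$ given the selection event, via the tower property, delivers \eqref{eq:type1}.

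I expect the main obstacle to be the careful justification of the two reductions rather than any nontrivial computation. Specifically, the crux is arguing that conditioning on $\mathcal{P}_\nu^\perp Y$ renders both the fitted tree and the selection event deterministic functions of the scalar $\phi$; this requires that CART be well-defined as a measurable map of $Y$, with ties in the gain \eqref{eq:gain} broken by a fixed rule, and that $\nu_{sib}$ be a fixed vector, which it is since $R_A, R_B$ and $X$ are fixed. The remaining care is measure-theoretic bookkeeping: the selection event must have positive probability and the conditional distributions must be handled so that the equality, rather than a mere inequality, in \eqref{eq:type1} follows from the continuity of the Gaussian through the probability integral transform.
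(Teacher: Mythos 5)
Your proposal is correct and follows essentially the same route as the paper: the orthogonal decomposition $Y = \mathcal{P}_{\nu_{sib}}^\perp Y + (\nu_{sib}^\T Y)\,\nu_{sib}/\|\nu_{sib}\|_2^2$ together with Gaussian independence of the two pieces to obtain the truncated-normal characterization (the paper's Lemma~\ref{lemma_tn}), followed by the law of total expectation over $\mathcal{P}_{\nu_{sib}}^\perp Y$ given the selection event to pass from the doubly-conditional statement to \eqref{eq:type1}. The only difference is cosmetic: where the paper asserts the doubly-conditional uniformity of $p_{sib}(Y)$ by citing Proposition 3 of \cite{fithian2014optimal}, you verify it directly via the probability integral transform applied to the continuous conditional law of $|\nu_{sib}^\T Y|$, which is precisely what that citation encodes.
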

Proofs of all theoretical results are provided in the appendix. Theorem~\ref{theorem_1jewell} says that given the set $S^\lambda_{sib}(\nu_{sib})$, we can compute the p-value in \eqref{def_mainpval} using
\begin{equation}
\label{truncNormCDFpval}
p_{sib}(y) = 1 - F\left\{|\nu_{sib}^\T {y}| ; 0, \|\nu_{sib}\|_2^2 \sigma^2, S^\lambda_{sib}(\nu_{sib}) \right\}  + F\left\{-|\nu_{sib}^\T {y}| ; 0, \|\nu_{sib}\|_2^2 \sigma^2, S^\lambda_{sib}(\nu_{sib}) \right\},
\end{equation}
where $F\left( ~\cdot~ ; 0, \|\nu\|^2 \sigma^2, S \right)$ denotes the cumulative distribution function of the $N(0, \|\nu \|_2^2 \sigma^2)$ distribution truncated to the set $S$. In Section~\ref{section_computing}, we provide an efficient approach for analytically characterizing the truncation 
set $S^\lambda_{sib}(\nu_{sib})$. To avoid numerical issues associated with the truncated normal distribution, we compute \eqref{truncNormCDFpval} using methods described in the supplement of
\cite{chen2019valid}. Note that the proof of Theorem~\ref{theorem_1jewell}, and consequently the efficient computation of $p_{sib}(y)$ discussed in Section~\ref{section_computing}, relies on the assumption that $Y\sim N_n(\mu, \sigma^2 I_n)$.

We now consider inverting the test proposed in  
\eqref{def_mainpval} to construct an equitailed confidence interval for $\nu_{sib}^\T{\mu}$ that has $(1-\alpha)$-selective coverage \eqref{eq_selcov}, in the sense that
\begin{equation}
pr\left\{\nu_{sib}^\T {\mu} \in \left[ L(Y),U(Y) \right] \mid R_A, R_B \text{ are siblings in } \tree^\lambda(Y) \right\} = 1-\alpha.
\label{eq:selcov}
\end{equation}
  
\begin{proposition}
 \label{prop_CI}
 For any $0 \leq \alpha \leq 1$ and any realization $y \in \mathbb{R}^n$,
the values $L(y)$ and $U(y)$ that satisfy
\small
\begin{equation}
\label{eq_mainCI}
F \{\nu_{sib}^\T y ; L(y), \sigma^2\|\nu_{sib}\|_2^2, S^\lambda_{sib}(\nu_{sib})\} = 1 - \alpha/2,   \;\;\; F \{\nu_{sib}^\T y ; U(y), \sigma^2\|\nu_{sib}\|_2^2, S^\lambda_{sib}(\nu_{sib})\}  =\alpha/2,
\end{equation}	
\normalsize
are unique, and $[L(Y), U(Y)]$ achieves $(1-\alpha)$-selective coverage for $\nu_{sib}^\T \mu$.
 \end{proposition}

\subsection{Inference on a Single Region}
\label{subsec_framingsingle}

Given a single region $R_A$ in a CART tree, we define the contrast vector $\nu_{reg}$ such that 
\begin{equation}
\label{eq_nureg}
(\nu_{reg})_i = 1_{(x_i \in R_A)}/\left\{ \sum_{i'=1}^n 1_{(x_{i'} \in R_A)}\right\}. 
\end{equation}
Then, $\nu_{reg}^\T \mu = \left( \sum_{i: x_i \in R_A} \mu_i \right)/ \left\{ \sum_{i=1}^n 1_{(x_i \in R_A)}\right\}$. We now consider testing the null hypothesis $H_0: \nu_{reg}^\T \mu=c$ for some fixed $c$. Because our interest in this null hypothesis results from the fact that 
$R_A  \in \tree^\lambda(y)$, we must condition on this event in defining the p-value. We define
\begin{equation}
p_{reg}(y)=pr_{H_0}\left\{ | \nu_{reg}^\T Y - c | \geq | \nu_{reg}^\T y - c | \mid R_A \in \tree^\lambda(Y), \mathcal{P}_{\nu_{reg}}^\perp Y = \mathcal{P}_{\nu_{reg}}^\perp {y} \right\},
\label{eq:pvalreg}
\end{equation}
and introduce the following theorem. 

\begin{theorem}
\label{theorem_1modified}
The test based on the p-value $p_{reg}(y)$ in \eqref{eq:pvalreg} controls the selective Type 1 error rate for $H_0: \nu_{reg}^\T \mu = c$, where $\nu_{reg}$ is defined in \eqref{eq_nureg}. Furthermore, 
$p_{reg}(y) = pr\left\{|\phi - c| \geq |\nu_{reg}^\T {y} - c| \mid \phi \in S_{reg}(\nu_{reg}) \right\},$
where $\phi \sim N(c, \|\nu_{reg}\|_2^2 \sigma^2)$ and, for $y'(\phi,\nu) = \mathcal{P}_{\nu}^\perp {y} + \phi(\nu / \|\nu \|_2^2)$, 
\begin{equation}
\label{def_S2}
S^\lambda_{reg}(\nu_{reg}) = \{ \phi : R_A \in \tree^\lambda\{y'(\phi,\nu_{reg})\} \}.
\end{equation}
\end{theorem}
Theorem 2 and the resulting efficient computations in Section~\ref{section_computing} rely on the assumption that $Y \sim N_n(\mu, \sigma^2 I_n)$.

We can also define a confidence interval for $\nu_{reg}^\T \mu$ that attains nominal selective coverage. 
\begin{proposition}
 \label{prop_CIreg} 
For any $0 \leq \alpha \leq 1$ and any realization $y \in \mathbb{R}^n$, the values $L(y)$ and $U(y)$ that satisfy
\small
\begin{equation}
\label{eq_mainCIreg}
F \{ \nu_{reg}^\T y ; L(y), \sigma^2\|\nu_{reg}\|_2^2, S^\lambda_{reg}(\nu_{reg})\} = 1 - \alpha/2, \;\;\; F \{\nu_{reg}^\T y ; U(y), \sigma^2\|\nu_{reg}\|_2^2, S^\lambda_{reg}(\nu_{reg})\}  = \alpha/2,
\end{equation}
\normalsize
are unique, and $[L(Y), U(Y)]$ achieves $(1-\alpha)$-selective coverage for $\nu_{reg}^\T \mu$. 
\end{proposition}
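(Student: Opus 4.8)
The plan is to prove Proposition~\ref{prop_CIreg} by inverting the family of tests from Theorem~\ref{theorem_1modified}, in exact parallel with the sibling argument behind Proposition~\ref{prop_CI}. Two ingredients suffice: (i) a monotonicity-and-continuity property of the truncated normal CDF in its mean argument, which yields uniqueness of $L(y)$ and $U(y)$; and (ii) a probability-integral-transform argument that converts the exactness of the selective test into exact selective coverage. The crucial structural observation underlying both is that the truncation set $S^\lambda_{reg}(\nu_{reg})$ in~\eqref{def_S2} depends on the data only through the fixed nuisance direction $\mathcal{P}_{\nu_{reg}}^\perp y$, and in particular does \emph{not} depend on the candidate mean value that we vary to define the endpoints.

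For uniqueness, I would first record the key lemma that, for a fixed real $t$, a fixed variance $\sigma^2\|\nu_{reg}\|_2^2 > 0$, and a fixed set $S$ of positive Lebesgue measure, the map $\theta \mapsto F(t; \theta, \sigma^2\|\nu_{reg}\|_2^2, S)$ is continuous and strictly decreasing, with $F(t;\theta,\ldots) \to 1$ as $\theta \to -\infty$ and $F(t;\theta,\ldots) \to 0$ as $\theta \to +\infty$. This holds because the $N(\theta,\sigma^2\|\nu_{reg}\|_2^2)$ family truncated to $S$ is stochastically increasing in $\theta$, a consequence of its monotone likelihood ratio. Applying the lemma with $t = \nu_{reg}^\T y$ and $S = S^\lambda_{reg}(\nu_{reg})$ — which is nonempty since $\nu_{reg}^\T y \in S^\lambda_{reg}(\nu_{reg})$ because the observed tree selected $R_A$, and which has positive measure by the characterization in Section~\ref{section_computing} — the intermediate value theorem delivers unique solutions $L(y)$ and $U(y)$ to the two equations in~\eqref{eq_mainCIreg}, because $1-\alpha/2$ and $\alpha/2$ both lie in $(0,1)$.

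For coverage, I would invoke Theorem~\ref{theorem_1modified}, which establishes that conditional on $R_A \in \tree^\lambda(Y)$ and on $\mathcal{P}_{\nu_{reg}}^\perp Y = \mathcal{P}_{\nu_{reg}}^\perp y$, the scalar $\nu_{reg}^\T Y$ follows an $N(\nu_{reg}^\T\mu, \sigma^2\|\nu_{reg}\|_2^2)$ distribution truncated to $S^\lambda_{reg}(\nu_{reg})$. Hence the pivot $W = F\{\nu_{reg}^\T Y; \nu_{reg}^\T\mu, \sigma^2\|\nu_{reg}\|_2^2, S^\lambda_{reg}(\nu_{reg})\}$ is $\mathrm{Uniform}(0,1)$ conditional on selection. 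Using the monotonicity lemma again — now in the mean argument, comparing $\theta = \nu_{reg}^\T\mu$ against $\theta = L(Y)$ and $\theta = U(Y)$ — I would establish the event equivalences $\{\nu_{reg}^\T\mu \ge L(Y)\} = \{W \le 1-\alpha/2\}$ and $\{\nu_{reg}^\T\mu \le U(Y)\} = \{W \ge \alpha/2\}$. Intersecting them gives $\{L(Y) \le \nu_{reg}^\T\mu \le U(Y)\} = \{\alpha/2 \le W \le 1-\alpha/2\}$, whose conditional probability is exactly $1-\alpha$ by uniformity of $W$, yielding the selective coverage~\eqref{eq_selcov} with $\theta = \nu_{reg}^\T\mu$.

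The main obstacle is the monotonicity lemma: one must confirm that truncation to a fixed but possibly complicated set $S^\lambda_{reg}(\nu_{reg})$ — which, per Section~\ref{section_computing}, is generically a finite union of intervals — preserves strict stochastic monotonicity in $\theta$ together with the stated limits. Since the monotone-likelihood-ratio argument accommodates an arbitrary fixed $S$, the only genuine care needed is in checking that $S$ stays fixed as $\theta$ varies, which is precisely the structural point that $S^\lambda_{reg}(\nu_{reg})$ enters through $\mathcal{P}_{\nu_{reg}}^\perp y$ alone; the remaining steps are routine once this lemma is in hand.
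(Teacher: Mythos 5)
Your proposal is correct and takes essentially the same route as the paper's proof: uniqueness of $L(y)$ and $U(y)$ from the monotone likelihood ratio (equivalently, strict stochastic monotonicity in the mean) of the truncated normal family (Theorem 6.1 of \cite{lee2016exact}), and coverage by inverting the conditional truncated normal distribution of $\nu_{reg}^\T Y$ given $\{R_A \in \tree^\lambda(Y), \, \mathcal{P}_{\nu_{reg}}^\perp Y = \mathcal{P}_{\nu_{reg}}^\perp y\}$ from Lemma~\ref{lemma_tn_reg}. The only difference is presentational: your pivot and event-equivalence steps make the test inversion explicit where the paper simply asserts it, while the tower-property step that converts coverage conditional on both the selection event and $\mathcal{P}_{\nu_{reg}}^\perp Y$ into coverage conditional on the selection event alone---which the paper carries out explicitly as an integral over the nuisance statistic, following Proposition 3 of \cite{fithian2014optimal}---is left implicit in your assertion that the pivot $W$ is uniform ``conditional on selection.''
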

In Section~\ref{section_computing}, we propose an approach to analytically characterize the set $S^\lambda_{reg}(\nu_{reg})$ in \eqref{def_S2}.

\subsection{Intuition for the Conditioning Sets $S^\lambda_{sib}(\nu_{sib})$ and  $S^\lambda_{reg}(\nu_{reg})$}
\label{subsec_intuition}

We first develop intuition for the set $S^\lambda_{sib}(\nu_{sib})$ defined in \eqref{def_S}. 
From Theorem~\ref{theorem_1jewell}, 
$$
\left\{ y'(\phi,\nu_{sib}) \right\}_i =
y_i  + 
(\phi - \nu_{sib}^\T y)\left\{  \frac{\sum_{i'=1}^n 1_{(x_{i'} \in R_B)}}{\sum_{i'=1}^n 1_{(x_{i'} \in R_A \cup R_B)}}{1}_{(x_i \in R_A)} - \frac{\sum_{i'=1}^n 1_{(x_{i'} \in R_A)}}{\sum_{i'=1}^n 1_{(x_{i'} \in R_A \cup R_B)}}{1}_{(x_i \in R_B)}\right\}.
$$
Thus, $y'(\phi,\nu_{sib})$ is a perturbation of $y$ that exaggerates the difference between  the observed sample mean responses of   $R_A$ and $R_B$  if $|\phi| > |\nu_{sib}^\T y|$, and shrinks that difference  if $|\phi | < |\nu_{sib}^\T y|$. The set $S^\lambda_{sib}(\nu_{sib})$ quantifies the amount that we can shift the difference in sample mean responses between $R_A$ and $R_B$ while still producing a tree containing these sibling regions. 
The top row of Figure~\ref{fig_intuition} displays  $\tree^0\{y'(\phi, \nu_{sib})\}$, as a function of $\phi$, in an example where $S^0_{sib}(\nu_{sib}) = (-19.8, -1.8) \cup (0.9, 34.9)$. 

\begin{figure}[!h]
    \centering
    \includegraphics[width=13cm, height=6.5cm]
   {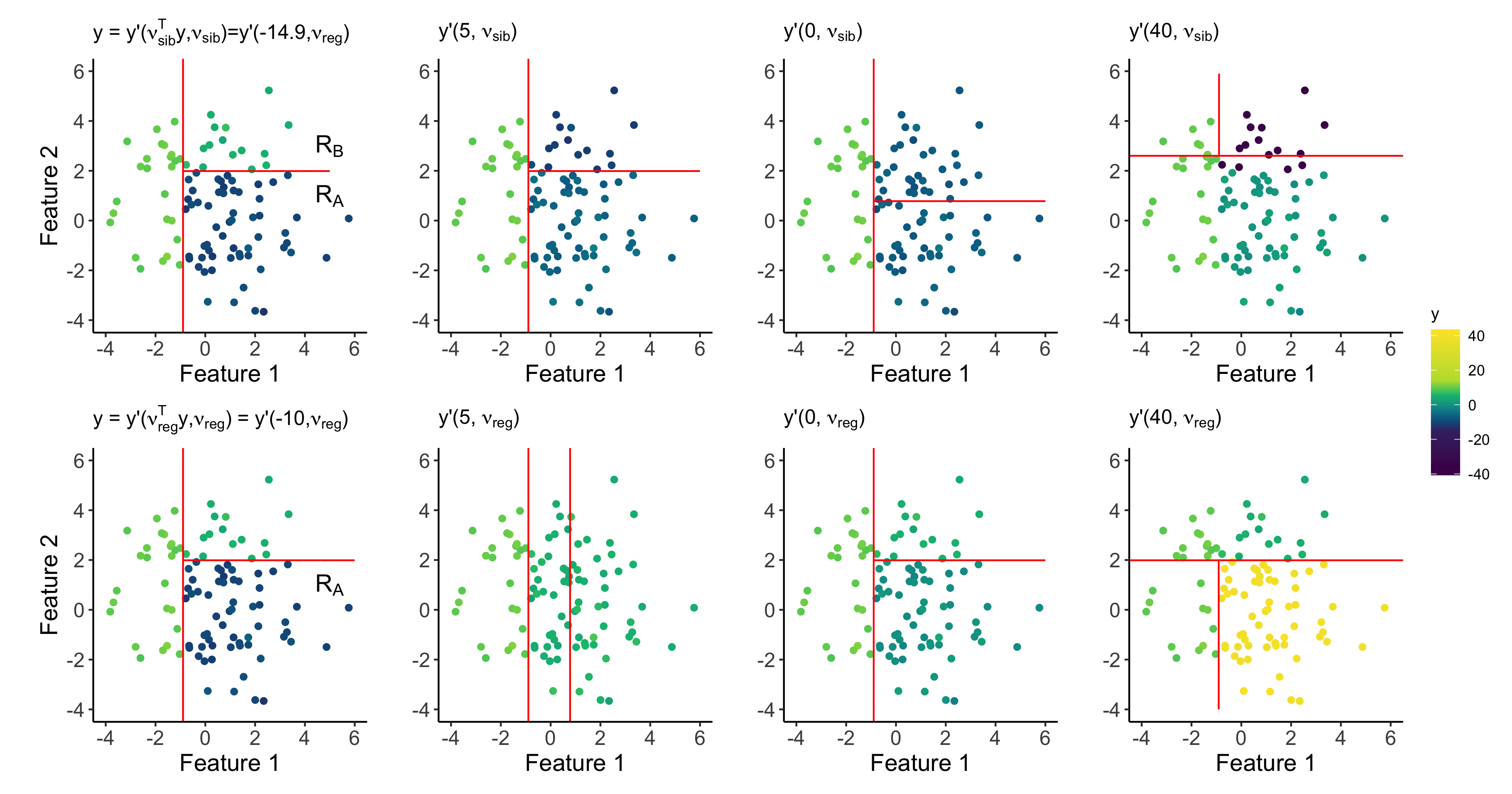}
    \caption{Data with $n=100$ and $p=2$. Regions resulting from CART ($\lambda=0$) are delineated using solid lines. Here, $R_A=\chi_{1, 26, 0} \cap \chi_{2, 72, 1}$ and $R_B = \chi_{1, 26, 0} \cap \chi_{2, 72, 0}$. \emph{Top:} Output of CART applied to $y'(\phi, \nu_{sib})$, where $\nu_{sib}$ in \eqref{eq_nusib} encodes the contrast between $R_A$ and $R_B$, for various values of $\phi$. The left-most panel displays $y=y'(\nu_{sib}^\T y, \nu_{sib})$. By inspection, we see that $-14.9 \in S_{sib}^0(\nu_{sib})$ and $5 \in S_{sib}^0(\nu_{sib})$, but $0 \not\in S_{sib}^0(\nu_{sib})$ and $40 \not\in S_{sib}^0(\nu_{sib})$. In fact, $S^0_{sib}(\nu_{sib}) = (-19.8, -1.8) \cup (0.9, 34.9)$. \emph{Bottom:} Output of CART applied to $y'(\phi, \nu_{reg})$, where $\nu_{reg}$ in \eqref{eq_nureg} encodes membership in $R_A$. The left-most panel displays $y=y'(\nu_{reg}^\T y, \nu_{reg})$.  Here, $S^0_{reg}(\nu_{reg}) = (-\infty, 3.1) \cup (5.8,8.8) \cup (14.1, \infty)$. }
 \label{fig_intuition}
\end{figure}

We next develop intuition for $S^\lambda_{reg}(\nu_{reg})$, defined in \eqref{def_S2}. Note that $\left\{ y'(\phi,\nu_{reg})\right\}_i = y_i  + (\phi - \nu_{reg}^\T y) {1}_{(x_i \in R_A)}$, where $y'(\phi, \nu_{reg})$ is defined in Theorem~\ref{theorem_1modified}. Thus, $y'(\phi, \nu_{reg})$ shifts the responses of the observations in $R_A$ so that their sample mean equals $\phi$, and leaves the others unchanged. The set  $S^\lambda_{reg}(\nu_{reg})$ quantifies the amount that we can exaggerate or shrink the sample mean response in region $R_A$ while still producing a tree that contains $R_A$. 
The bottom row of Figure~\ref{fig_intuition} displays   $y'(\phi, \nu_{reg})$ as $\phi$ is varied, in an example with 
 $S^0_{reg}(\nu_{reg}) = (-\infty, 3.1) \cup (5.8,8.8) \cup (14.1, \infty)$.

\section{Computing the conditioning sets $S_{sib}^\lambda(\nu_{sib})$ and $S_{reg}^\lambda(\nu_{reg})$}
\label{section_computing}

\subsection{Recharacterizing the conditioning sets in terms of branches}
\label{subsec_branches}

We begin by introducing the concept of a branch.  
\begin{definition}[Branch]
\label{def:branch}
A \emph{branch} is an ordered sequence of triples $\mathcal{B} = \left((j_1,s_1,e_1),\ldots,(j_L,s_L,e_L)\right)$ such that $j_l \in \{1,\ldots, p\}$, $s_l \in \{1,\ldots, n-1\}$, and $e_l \in \{0,1\}$ for $l=1,\ldots,L$. The branch $\mathcal{B}$ induces a nested set of regions $\mathcal{R}(\mathcal{B}) =\{R^{(0)}, R^{(1)}, \ldots, R^{(L)} \}$, where $R^{(l)} = \bigcap_{l'=1}^{l} \chi_{j_{l'},s_{l'},e_{l'}}$ for $l = 1,\ldots,L$, and $R^{(0)} = \mathbb{R}^p$.
\end{definition}

For a branch $\mathcal{B}$ and a vector $\mathcal{\nu}$, we define
\begin{equation}
 \label{eq:sbnu} 
S^\lambda(\mathcal{B}, \nu) = \left\{ \phi : \mathcal{R}(\mathcal{B}) \subseteq  \tree^\lambda\{y'(\phi,\nu)\}\right\}.
\end{equation}
For $R \in \tree$, we let $\branch(R, \tree)$ denote the branch such that 
$\mathcal{R}\{\branch(R, \tree)\}$ contains $R$ and all of its ancestors in $\tree$.  

\begin{lemma}
\label{lemma_orderedsplits}
Suppose that $R_A$ and $R_B$ are siblings in $\tree^\lambda(y)$. Then $R_A$ and $R_B$ are siblings in  $\tree^\lambda\{y'(\phi,\nu_{sib})\}$ if and only if $\mathcal{R}[\branch\{R_A, \tree^\lambda(y)\}] \subseteq \tree^\lambda\{y'(\phi,\nu_{sib})\}$. Therefore, $S_{sib}^\lambda(\nu_{sib})=S^\lambda[\branch\{R_A,\tree^\lambda(y)\}, {\nu_{sib}}]$,  defined in \eqref{def_S} and \eqref{eq:sbnu}. 
\end{lemma}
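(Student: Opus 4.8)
The plan is to establish the biconditional for an arbitrary fixed $\phi$, i.e.\ for the perturbed vector $y' = y'(\phi,\nu_{sib})$, and then read off the set identity by intersecting over $\phi$. Let $R$ denote the common parent of $R_A,R_B$ in $\tree^\lambda(y)$, so that $R_A = R\cap\chi_{j,s,1}$, $R_B = R\cap\chi_{j,s,0}$ for the split $(j,s)$ that created them, and $R_A\cup R_B = R$. Two structural facts drive everything. First, since $(\nu_{sib})_i$ is supported on $R_A\cup R_B = R$ with $\sum_{i\in R}(\nu_{sib})_i = 1-1 = 0$, the perturbation $y'-y$ is supported on $R$ and preserves the response sum over $R$; hence $\bar y'_A = \bar y_A$ for every region $A\supseteq R$, while $y'_i = y_i$ for every $i$ in a region disjoint from $R$. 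Second, I will use the elementary identity $\gain_A(y,j,s) = \tfrac{n_1 n_0}{n_1+n_0}\big(\bar y_{A\cap\chi_{j,s,1}} - \bar y_{A\cap\chi_{j,s,0}}\big)^2$, where $n_1,n_0$ are the child sizes, which are fixed by $X$. Together these give the key invariance: for any $A\supseteq R$ and any split leaving $R$ entirely on one side (a \emph{whole-$R$} split), both child means, and hence $\gain_A$, are unchanged when $y$ is replaced by $y'$; only splits that cut $R$ can alter the gain.

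The ($\Leftarrow$) direction is immediate. If $\mathcal{R}[\branch\{R_A,\tree^\lambda(y)\}]\subseteq\tree^\lambda(y')$, then $R\in\tree^\lambda(y')$ (it is an ancestor of $R_A$) and $R_A = R\cap\chi_{j,s,1}\in\tree^\lambda(y')$. Property (iii) of Definition~\ref{def_tree} then forces $R_B = R\cap\chi_{j,s,0}\in\tree^\lambda(y')$, so $\{R,R_A,R_B\}\subseteq\tree^\lambda(y')$ and $R_A,R_B$ are siblings by the definition of siblings.

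For ($\Rightarrow$), suppose $R_A,R_B$ are siblings in $\tree^\lambda(y')$. Because complementary half-spaces partition $\mathbb{R}^p$, siblings partition their parent, so the parent of $R_A,R_B$ in $\tree^\lambda(y')$ equals the set $R_A\cup R_B = R$; thus $R\in\tree^\lambda(y')\subseteq\tree^0(y')$. I claim the unique chain of regions from $\mathbb{R}^p$ down to $R$ in $\tree^0(y')$ coincides with the original branch. Write this chain $\mathbb{R}^p = A_0\supsetneq\cdots\supsetneq A_m = R$; since each $A_{k+1}\supseteq R$, every split along it is whole-$R$. Induct: given $A_k = R^{(k)}$, the split selected by Algorithm~\ref{alg_growing} at $A_k$ under $y'$ is the overall gain-maximizer, and it must be whole-$R$, for had it cut $R$ then $R$ could never reappear as a region, contradicting $R\in\tree^0(y')$. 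By the invariance above the whole-$R$ gains under $y'$ equal those under $y$, and the original split at $R^{(k)}$ was the unique maximizer under $y$; hence the selected split is the original one and $A_{k+1}=R^{(k+1)}$. This reproduces $R^{(0)},\ldots,R^{(L-1)}=R$ in $\tree^0(y')$, and since $R_A\in\tree^0(y')$ the full branch lies in $\tree^0(y')$. Finally, Algorithm~\ref{alg_pruning} only ever deletes the descendants of a region, so any surviving region retains all of its ancestors; as $R_A\in\tree^\lambda(y')$, every ancestor of $R_A$, i.e.\ every remaining branch region, is in $\tree^\lambda(y')$, giving $\mathcal{R}[\branch\{R_A,\tree^\lambda(y)\}]\subseteq\tree^\lambda(y')$.

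With the biconditional in hand for each $\phi$, the set identity is immediate: $\phi\in S^\lambda_{sib}(\nu_{sib})$ iff $R_A,R_B$ are siblings in $\tree^\lambda(y')$ iff $\mathcal{R}[\branch\{R_A,\tree^\lambda(y)\}]\subseteq\tree^\lambda(y')$ iff $\phi\in S^\lambda[\branch\{R_A,\tree^\lambda(y)\},\nu_{sib}]$, by \eqref{def_S} and \eqref{eq:sbnu}. The main obstacle is the forward direction, and in particular the claim that the grown chain to $R$ must be the original branch: the engine is the pairing of the gain identity with the observation that cutting $R$ is irreversible, which turns survival of $R$ into a sequence of whole-$R$ splits whose preserved gains pin down the original splits. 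This step quietly relies on the maximizing split being unique at each node (a genericity or fixed tie-breaking assumption) and on the stopping rule of Algorithm~\ref{alg_growing} depending on the data only through region membership, which is determined by the fixed matrix $X$.
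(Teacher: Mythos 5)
Your proof is correct, and its engine is the same as the paper's: a gain-invariance lemma for splits that leave $R_A \cup R_B$ entirely on one side (the paper's Lemma~\ref{lemma_sameGain}, which you obtain from the mean-difference form of the gain rather than the paper's expansion of the gain into sums of squared region means), combined with a consistent tie-breaking assumption and an induction along the branch. The genuine difference is in how the forward direction is organized. The paper argues by contradiction directly on the pruned tree: it supposes the branch first fails at level $l$ in $\tree^\lambda\{y'(\phi,\nu_{sib})\}$, asserts that a competing split $(\tilde j,\tilde s)$ must then have strictly larger gain at $R^{(l)}$ under $y'(\phi,\nu_{sib})$, and contradicts optimality of $(j_{l+1},s_{l+1})$ under $y$. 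You instead prove, by forward induction in the \emph{unpruned} tree $\tree^0\{y'(\phi,\nu_{sib})\}$, that the chain of regions down to $R = R_A \cup R_B$ reproduces the original branch, and then dispatch pruning separately with the observation that Algorithm~\ref{alg_pruning} deletes only descendants, so every ancestor of a surviving region survives. This decomposition buys some rigor at points the paper leaves implicit: it makes explicit why the failure of the branch must be a growing event rather than a pruning event (pruning away $R^{(l+1)}$ while keeping $R^{(l)}$ would also prune $R_A$, contradicting $R_A \in \tree^\lambda\{y'(\phi,\nu_{sib})\}$), and why the split selected at a branch node under $y'(\phi,\nu_{sib})$ cannot cut $R$ (your irreversibility/laminarity argument, which the paper needs implicitly in order to apply Lemma~\ref{lemma_sameGain} to the competing split $(\tilde j,\tilde s)$). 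Both proofs rest on the same uniqueness/tie-breaking caveat, which you correctly flag.
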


Lemma~\ref{lemma_orderedsplits} says that $\tree^\lambda\{y'(\phi, \nu_{sib})\}$ contains siblings $R_A$ and $R_B$ if and only if it contains the entire branch associated with $R_A$ in $\tree^\lambda(y)$. However, Lemma~\ref{lemma_orderedsplits} does not apply in the single region case: 
for $\nu_{reg}$ defined in \eqref{eq_nureg} and some $R_A \in \tree^\lambda(y)$, the fact that  $R_A \in \tree^\lambda\{y'(\phi, \nu_{reg})\}$ does not imply that $\mathcal{R}[\branch\{R_A, \tree^\lambda(y)\}] \subseteq \tree^\lambda\{y'(\phi, \nu_{reg})\}$. Instead, a result similar to Lemma~\ref{lemma_orderedsplits} holds, involving permutations of  the branch.  
\begin{definition}[Permutation of a branch]
\label{def:branch_perm}
Let $\Pi$ denote the set of all $L!$ 
permutations of
$(1,2,\ldots,L)$.  Given $\pi \in \Pi$ and a branch $\mathcal{B} = ((j_1,s_1,e_1), \ldots ,(j_L,s_L,e_L))$, we say that   $\pi\left( \mathcal{B}\right) = $$( (j_{\pi(1)},s_{\pi(1)},e_{\pi(1)}), \ldots ,$
$(j_{\pi(L)},s_{\pi(L)},e_{\pi(L)}) )$ is a \emph{permutation of the branch} $\mathcal{B}$.  
\end{definition}
Branch $\mathcal{B}$ and its permutation $\pi\left( \mathcal{B}\right)$ induce the same region $R^{(L)}$, but $\mathcal{R}\{\pi\left( \mathcal{B}\right)\} \neq \mathcal{R}(\mathcal{B})$.  

\begin{lemma}
\label{lemma_permutations}
Let $R_A \in \tree^{\lambda}(y)$. Then $R_A \in \tree^\lambda\{y'(\phi,\nu_{reg})\}$ if and only if there exists a $\pi \in \Pi$ such that $\mathcal{R}[\pi\left\{ \branch_{R_A}(y)\right\}] \subseteq \tree^\lambda\{y'(\phi, \nu_{reg})\}$. Thus, for $S^\lambda_{reg}(\nu_{reg})$ in \eqref{def_S2},
\begin{align}
\label{def_Sreg_permuted}
S^\lambda_{reg}(\nu_{reg}) 
&= \bigcup_{\pi \in \Pi} S^{\lambda}
\left(\pi\left[ \branch\{R_A,\tree^\lambda(y)\}\right] , \nu_{reg}\right).
\end{align}
\end{lemma}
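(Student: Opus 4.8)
The plan is to prove the stated equivalence for a single value of $\phi$ and then read off the union in \eqref{def_Sreg_permuted}. Write $T = \tree^\lambda\{y'(\phi,\nu_{reg})\}$ and $\mathcal{B} = \branch\{R_A, \tree^\lambda(y)\} = ((j_1,s_1,e_1),\ldots,(j_L,s_L,e_L))$. The \emph{if} direction is immediate: as noted after Definition~\ref{def:branch_perm}, every $\pi \in \Pi$ leaves the terminal region $R^{(L)} = \bigcap_{l=1}^L \chi_{j_l,s_l,e_l} = R_A$ unchanged, so if $\mathcal{R}[\pi(\mathcal{B})] \subseteq T$ for some $\pi$ then in particular $R_A \in T$. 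The real content is the \emph{only if} direction, and it is here that the single-region case departs from Lemma~\ref{lemma_orderedsplits}: because $y'(\phi,\nu_{reg})$ shifts the responses of the observations in $R_A$ by the common amount $\phi - \nu_{reg}^\T y$ (see Section~\ref{subsec_intuition}), it changes the mean of every ancestor of $R_A$, so the growing decisions \emph{above} $R_A$ need not reproduce the original branch in its original order. Allowing an arbitrary $\pi \in \Pi$ is exactly what accommodates this reordering.

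For the \emph{only if} direction, suppose $R_A \in T$ and let $\mathcal{B}_T = \branch\{R_A, T\}$ be the branch of $R_A$ in the perturbed tree, with induced regions $\mathcal{R}(\mathcal{B}_T) = \{P^{(0)}, \ldots, P^{(M)}\}$, where $P^{(M)} = R_A$ and, by Definition~\ref{def_tree}, every $P^{(m)}$ lies in $T$. I would show that $\mathcal{B}_T$ matches a permutation of $\mathcal{B}$: concretely, that the multiset of coordinate-threshold constraints in $\mathcal{B}_T$ agrees with that of $\mathcal{B}$ up to reordering and up to \emph{redundant repeats} — constraints $\chi_{j,s,e}$ whose addition does not shrink the current region because a tighter bound on coordinate $j$ in direction $e$ is already present. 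Since $R_A$ is an axis-aligned box, its \emph{active} bounds (the tightest upper and lower threshold imposed on each coordinate) are determined by $R_A$ itself whenever the covariate values are distinct, so any branch realizing $R_A$ must tighten each coordinate to exactly these thresholds. Granting the matching, I order the triples of $\mathcal{B}$ by the position at which their active bound is first attained along $\mathcal{B}_T$, placing any redundant original triple immediately after the triple that already enforces its bound; the resulting $\pi \in \Pi$ then has every prefix intersection $\bigcap_{l=1}^{k}\chi_{j_{\pi(l)},s_{\pi(l)},e_{\pi(l)}}$ equal to some $P^{(m)} \in T$, whence $\mathcal{R}[\pi(\mathcal{B})] \subseteq T$.

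The main obstacle is justifying the matching — specifically, ruling out that $T$ reaches $R_A$ through an ancestor $P^{(m)}$ carrying a ``contaminating'' constraint, i.e.\ a looser threshold on some coordinate that never appears in $\mathcal{B}$. Such a constraint would force a region into $\mathcal{R}(\mathcal{B}_T)$ that no permutation of $\mathcal{B}$ can reproduce, and excluding it is where the structure of $\nu_{reg}$ must be used rather than a purely combinatorial argument about boxes. The tool I would use is the dependence of $\gain$ on $\phi$: since $y'(\phi,\nu_{reg})$ perturbs only the entries indexed by $R_A$, the quantity $\gain_R(y'(\phi,\nu_{reg}),j,s)$ is constant in $\phi$ for every region $R$ disjoint from $R_A$, while for $R \supseteq R_A$ its $\phi$-dependence enters only through the child containing $R_A$; expanding the sums of squares shows each such gain is a quadratic in $\phi$ whose leading coefficient is governed by the fraction of $R$'s observations lying in $R_A$. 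Comparing these quadratics for a candidate contaminating split against the split that actually tightens an active bound of $R_A$ should show that the former cannot be simultaneously gain-optimal under $y'(\phi,\nu_{reg})$ and compatible with $R_A$ surviving to appear in $T$. I would organize this as an induction on the depth $M$ of $R_A$ in $T$, peeling off the last split $P^{(M-1)} \to R_A$ and matching it to an original triple on the same coordinate and direction.

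Two further points require care in a complete argument. First, the trees here are cost-complexity pruned, so membership $P^{(m)} \in T$ is governed not only by growing but also by the pruning rule of Algorithm~\ref{alg_pruning} through the statistics $g(\cdot)$ in \eqref{def_g}; the gain analysis must therefore be carried through $\tree^0$ and then checked to survive pruning, since a contaminating ancestor could in principle be removed or retained depending on $\lambda$. Second, the redundant-repeat bookkeeping must be handled so that repeated constraints collapse onto a single $P^{(m)}$ under the chosen $\pi$. Once the equivalence $R_A \in T$ if and only if there exists $\pi \in \Pi$ with $\mathcal{R}[\pi(\mathcal{B})] \subseteq T$ is in hand, the union formula \eqref{def_Sreg_permuted} is immediate: by \eqref{def_S2} and \eqref{eq:sbnu}, $\phi \in S^\lambda_{reg}(\nu_{reg})$ iff $R_A \in T$ iff $\phi \in S^\lambda(\pi[\branch\{R_A,\tree^\lambda(y)\}],\nu_{reg})$ for some $\pi \in \Pi$, which is precisely membership in $\bigcup_{\pi \in \Pi} S^\lambda(\pi[\branch\{R_A,\tree^\lambda(y)\}],\nu_{reg})$.
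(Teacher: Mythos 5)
Your proposal does not actually contain a proof of the lemma: its entire content is the ``only if'' direction, and that is precisely the step you leave unexecuted. After correctly disposing of the easy direction, you reduce everything to a ``matching'' claim --- that the branch of $R_A$ in the perturbed tree agrees with $\mathcal{B} = \branch\{R_A,\tree^\lambda(y)\}$ up to reordering and redundant repeats --- and then defer its justification to a gain-comparison argument that is never carried out (``should show that the former cannot be simultaneously gain-optimal,'' ``I would organize this as an induction''). The scenario you would need to exclude is concrete: with $\mathcal{B} = ((1,s_1,1),(2,s_2,1))$, suppose $\tree^\lambda\{y'(\phi,\nu_{reg})\}$ reaches $R_A$ through the interleaved branch $\chi_{1,s'',1}$, then $\chi_{2,s_2,1}$, then $\chi_{1,s_1,1}$, where $x_{1,(s'')}>x_{1,(s_1)}$. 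Then the ancestors of $R_A$ in that tree are $\mathbb{R}^p$, $\chi_{1,s'',1}$, and $\chi_{1,s'',1}\cap\chi_{2,s_2,1}$; since the regions of a tree form a laminar family, neither $\chi_{1,s_1,1}$ nor $\chi_{2,s_2,1}$ can appear anywhere else in the tree, so no $\pi\in\Pi$ satisfies $\mathcal{R}[\pi(\mathcal{B})]\subseteq\tree^\lambda\{y'(\phi,\nu_{reg})\}$. Ruling this configuration out requires a genuine argument about CART's gain maximization together with the pruning step, uniformly over the values of $\phi$ at which such a contaminating split could be optimal; nothing in your sketch establishes this, and the two ``points requiring care'' you flag at the end (pruning, and collapsing redundant repeats) are unresolved pieces of the same problem.

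For comparison, the paper's own proof is essentially two sentences and purely set-theoretic: it observes that regions are sets produced by intersecting halfspaces, that intersection is order-invariant, and concludes directly that $R_A \in \tree^\lambda\{y'(\phi,\nu_{reg})\}$ if and only if some permutation of $\mathcal{B}$ has all of its prefix regions in that tree; the union formula \eqref{def_Sreg_permuted} then follows exactly as in your final paragraph. In other words, the paper never invokes gains, optimality, or pruning for this lemma --- it treats as immediate exactly the matching step that you (rightly, in my view) identify as the place where something must be checked, and it does not engage with interleaved redundant thresholds at all. So your instinct located a real subtlety that the paper's argument glosses over, but as a submitted proof your proposal is incomplete at its central step: it neither supplies the missing argument nor falls back on the paper's short set-theoretic route.
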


Lemmas~\ref{lemma_orderedsplits} and \ref{lemma_permutations} reveal that computing  $S_{sib}^\lambda(\nu_{sib})$ and $S_{reg}^\lambda(\nu_{reg})$ requires characterizing sets of the form $S^\lambda(\mathcal{B},\nu)$, defined in \eqref{eq:sbnu}. 
To compute $S_{sib}^\lambda(\nu_{sib})$ we will only need to consider $S^\lambda(\mathcal{B},\nu)$ where $\mathcal{R}(\mathcal{B}) \subseteq \tree^\lambda(y)$. However, to compute $S_{reg}^\lambda(\nu_{reg})$, we will need to consider $S^\lambda\{\pi(\mathcal{B}),\nu\}$ where $\mathcal{R}(\mathcal{B}) \subseteq \tree^\lambda(y)$ but $\mathcal{R}\{\pi(\mathcal{B})\} \nsubseteq \tree^\lambda(y)$.

\subsection{Computing $S^\lambda(\mathcal{B},\nu)$ in \eqref{eq:sbnu}}
\label{subsec_growingcalculations}

Throughout this section, we consider a   vector $\nu \in \mathbb{R}^n$ and a  branch $\mathcal{B}=\left( (j_1,s_1,e_1), \ldots ,(j_L,s_L,e_L) \right)$, where $\mathcal{R}(\mathcal{B})$ may 
or may not be in $\tree^\lambda(y)$. 
Recall from Definition~\ref{def:branch} that $\mathcal{B}$ induces the nested regions $R^{(l)} = \bigcap_{l'=1}^l \chi_{j_{l'},s_{l'},e_{l'}}$ for $l=1,\ldots,L$, and $R^{(0)} = \mathbb{R}^p$. Throughout this section, our only requirement on $\mathcal{B}$ and $\nu$ is the following condition. 
\begin{condition}
\label{cond_nuprop}
 For $y'(\phi,\nu)$ defined in Theorem~\ref{theorem_1jewell},  
 $\mathcal{B}$ and $\nu$ satisfy $\left\{y'(\phi, \nu)\right\}_i = y_i + c_1 {1}_{\left\{ x_i \in R^{(L)} \right\} } + c_2 {1}_{\left[ x_i \in \left\{ R^{(L-1)} \cap \chi_{j_L,s_L,1-e_L}\right\} \right]}$ for $i=1,\ldots,n$ and for some constants $c_1$ and $c_2$. 
\end{condition}

To characterize  $S^\lambda(\mathcal{B},\nu)$ in \eqref{eq:sbnu}, recall that the CART algorithm  in Section~\ref{subsec_CARTreview} involves growing a very large tree $\tree^0(y)$, and then pruning it. We first characterize the set
\begin{align}
\label{eq:grow}
S_{grow}(\mathcal{B},\nu) &= \{ \phi : \mathcal{R}(\mathcal{B}) \subseteq \tree^0\{y'(\phi,\nu)\}\}.
\end{align}

\begin{proposition}
\label{prop_Sisintersection} 
Recall the definition of  $\gain_{R^{(l)}}\{y'(\phi,\nu), j,s\}$  in \eqref{eq:gain}, and let 
$S_{l,j,s} =  \{\phi :  \gain_{R^{(l-1)}}\{y'(\phi,\nu), j,s\} \leq \gain_{R^{(l-1)}}\{y'(\phi,\nu), j_l,s_l\}\}.
$
Then, $
S_{grow}(\mathcal{B}, \nu) = \bigcap_{l=1}^L \bigcap_{j=1}^p \bigcap_{s=1}^{n-1} S_{l,j,s}.
$
\end{proposition}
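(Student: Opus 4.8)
The plan is to peel the event $\{\mathcal{R}(\mathcal{B}) \subseteq \tree^0\{y'(\phi,\nu)\}\}$ apart one level at a time, using the recursive structure of the grow step in Algorithm~\ref{alg_growing}, and then to re-express each per-level requirement as the intersection of the sets $S_{l,j,s}$. First I would record the trivial fact that $R^{(0)} = \mathbb{R}^p$ lies in $\tree^0\{y'(\phi,\nu)\}$ for every $\phi$, since Algorithm~\ref{alg_growing} is always initialized at $\mathbb{R}^p$. Hence, recalling the definition of $S_{grow}(\mathcal{B},\nu)$ in \eqref{eq:grow}, membership $\phi \in S_{grow}(\mathcal{B},\nu)$ is equivalent to $R^{(l)} \in \tree^0\{y'(\phi,\nu)\}$ for all $l = 1,\ldots,L$.

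The central step is to establish, for each $l$, the equivalence
\[
R^{(l)} \in \tree^0\{y'(\phi,\nu)\} \iff \left( R^{(l-1)} \in \tree^0\{y'(\phi,\nu)\} \ \text{ and }\ (j_l,s_l) \in \argmax_{(j,s)} \gain_{R^{(l-1)}}\{y'(\phi,\nu), j, s\} \right).
\]
This follows directly from Algorithm~\ref{alg_growing}: a region $R$ already present in the tree is split into exactly the two children $R \cap \chi_{\tilde j,\tilde s,1}$ and $R \cap \chi_{\tilde j,\tilde s,0}$, where $(\tilde j,\tilde s)$ maximizes $\gain_R$, and $R^{(l)} = R^{(l-1)} \cap \chi_{j_l,s_l,e_l}$ is precisely the $e_l$-child of $R^{(l-1)}$. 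Because the branch regions are strictly nested, $R^{(l)}$ can enter the grown tree only as the direct child of $R^{(l-1)}$, so there is no alternative path producing it. Iterating this equivalence downward in $l$ (a short induction) shows that $R^{(l)} \in \tree^0\{y'(\phi,\nu)\}$ for all $l=1,\ldots,L$ if and only if $(j_l,s_l)$ is a gain-maximizing split at $R^{(l-1)}$ for every $l$.

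Finally I would translate each argmax requirement into the sets $S_{l,j,s}$ from the statement: saying that $(j_l,s_l)$ maximizes $\gain_{R^{(l-1)}}\{y'(\phi,\nu),\cdot,\cdot\}$ is, by definition of $\argmax$, exactly the requirement that $\gain_{R^{(l-1)}}\{y'(\phi,\nu), j, s\} \le \gain_{R^{(l-1)}}\{y'(\phi,\nu), j_l, s_l\}$ for every admissible $(j,s)$, i.e. that $\phi \in \bigcap_{j=1}^p \bigcap_{s=1}^{n-1} S_{l,j,s}$. Intersecting over $l=1,\ldots,L$ then gives $S_{grow}(\mathcal{B},\nu) = \bigcap_{l=1}^L \bigcap_{j=1}^p \bigcap_{s=1}^{n-1} S_{l,j,s}$, which is the claim.

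The main obstacle is the structural equivalence above, specifically the ``only child'' assertion and the role of the stopping rule of Algorithm~\ref{alg_growing}. I would handle the stopping rule by invoking the standing assumption that it depends on the data only through the fixed matrix $X$ (for instance through the number of observations falling in a region), so that whether each fixed region $R^{(l-1)}$ is eligible to be split is constant in $\phi$ and never interferes with the gain comparison; this is what allows the weak inequalities defining $S_{l,j,s}$ to capture the selection. Relatedly, ties in the argmax could in principle let the algorithm pick a competing split even when $(j_l,s_l)$ is also a maximizer, so the set identity is exact only up to the boundary set on which two gains coincide; under the assumed Gaussian model this set has $\phi$-measure zero and hence does not affect the truncated-normal probabilities computed from $S_{grow}(\mathcal{B},\nu)$, so I would note this and otherwise treat the inequalities as characterizing the selection event.
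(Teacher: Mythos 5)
Your proof is correct and takes essentially the same route as the paper's: both arguments peel off one level at a time via the equivalence that, given $R^{(l-1)} \in \tree^0\{y'(\phi,\nu)\}$, the region $R^{(l)}$ appears in the grown tree if and only if $(j_l,s_l)$ maximizes $\gain_{R^{(l-1)}}\{y'(\phi,\nu),\cdot,\cdot\}$ --- i.e.\ $\phi \in \bigcap_{j=1}^p\bigcap_{s=1}^{n-1} S_{l,j,s}$ --- and then chain over $l$ starting from the fact that $R^{(0)} = \mathbb{R}^p$ is always in the tree. Your added caveats about tie-breaking and the stopping rule are sensible refinements that the paper handles only implicitly (consistent tie-breaking is assumed in the proof of Lemma~\ref{lemma_orderedsplits}, and the stopping rule is not discussed).
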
 

Proposition~\ref{prop_quadratic} says that we can compute $S_{grow}(\mathcal{B}, \nu)$ efficiently. 
\begin{proposition}
\label{prop_quadratic}
The set $S_{l,j,s}$ is defined by a quadratic inequality in $\phi$. Furthermore, we can evaluate all of the sets $S_{l,j,s}$, for $l=1,\ldots,L$, $j=1,\ldots,p$, $s=1,\ldots,n-1$, in $O\left\{npL+np \log(n)\right\}$ operations. Intersecting these sets to obtain $S_{grow}(\mathcal{B}, \nu)$ requires at most $O\left\{npL \times \log(npL)\right\}$ operations, and only $O(npL)$ operations if $\mathcal{B} = \branch\{R_A, \tree^\lambda(y)\}$ and $\nu$ is of the form $\nu_{sib}$ in \eqref{eq_nusib}.
\end{proposition}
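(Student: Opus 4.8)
The plan is to establish the three assertions of Proposition~\ref{prop_quadratic} in turn, using Proposition~\ref{prop_Sisintersection}, which already writes $S_{grow}(\mathcal{B},\nu)$ as the intersection $\bigcap_{l,j,s} S_{l,j,s}$. To show each $S_{l,j,s}$ is a quadratic inequality, I would write $\{y'(\phi,\nu)\}_i = a_i + b_i\phi$ with $a = \mathcal{P}_\nu^\perp y$ and $b = \nu/\|\nu\|_2^2$, so each coordinate is affine in $\phi$. For any region $R$ partitioned into children $R_1,R_0$, the gain in \eqref{eq:gain} equals the two-sample form $\gain_R(y',j,s) = \frac{n_1 n_0}{n_1+n_0}(\bar y'_{R_1} - \bar y'_{R_0})^2$, where $n_1,n_0$ are the child counts; since each child mean is affine in $\phi$, the gain is the square of an affine function of $\phi$, hence a quadratic with nonnegative leading coefficient. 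The defining relation of $S_{l,j,s}$ compares two such gains sharing the common parent $R^{(l-1)}$ (whose own sum of squares cancels), so it reduces to a difference of two quadratics being $\le 0$, i.e.\ a single quadratic inequality whose coefficients depend only on the within-child sums $\sum a_i,\sum b_i,\sum a_i^2,\sum b_i^2,\sum a_i b_i$ and the counts.

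For the evaluation cost, I would pre-sort the $n$ observations by each of the $p$ covariates once, at total cost $O(np\log n)$. For each level $l$ and covariate $j$, I then sweep through the pre-sorted order, skipping observations outside $R^{(l-1)}$, while maintaining the five running sums and count above; advancing the split point $s$ by one moves a single observation across the partition, so the coefficients of $S_{l,j,s}$ update in $O(1)$. Each sweep costs $O(n)$, and there are $Lp$ of them, giving $O(npL)$; adding the pre-sort yields $O(npL + np\log n)$. For the intersection, each $S_{l,j,s}$ is the solution set of one quadratic inequality, hence a union of at most two intervals contributing $O(1)$ endpoints. Collecting the $O(npL)$ endpoints, sorting them in $O(npL\log(npL))$ time, and sweeping left-to-right while maintaining a counter of how many of the $npL$ sets contain the running point (incrementing at left endpoints, decrementing at right endpoints) identifies exactly the subintervals lying in all sets, which is $S_{grow}(\mathcal{B},\nu)$.

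Finally, the refined $O(npL)$ bound for $\mathcal{B}=\branch\{R_A,\tree^\lambda(y)\}$ and $\nu=\nu_{sib}$ is the delicate part, and is where I expect the main obstacle. The strategy is to eliminate the sort above by assembling the intersection with running extrema alone. Since $\nu_{sib}$ in \eqref{eq_nusib} is supported on $R_A\cup R_B$ and is zero-sum there, at every level $l<L$ the actual split sends all of $R_A\cup R_B$ into a single child, so $\bar b$ is zero on both children and the realized gain $\gain_{R^{(l-1)}}(\cdot,j_l,s_l)$ is constant in $\phi$; the candidate comparison then has nonnegative leading coefficient, making each $S_{l,j,s}$ a single interval that contains $\phi_0 = \nu_{sib}^\T y$ because the observed tree grows this branch. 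The intersection over $l<L$ is therefore a single interval computed by a running maximum of left endpoints and minimum of right endpoints, with no sort.

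The obstacle is level $l=L$, the split separating $R_A$ from $R_B$: there the realized gain is genuinely quadratic, and the ANOVA bound $\frac{n_1 n_0}{n_1+n_0}(\bar b_1-\bar b_0)^2 \le \sum_{i\in R^{(L-1)}}(b_i-\bar b)^2$ shows the actual split attains the maximal leading coefficient, so each $S_{L,j,s}$ is instead the complement of a bounded interval that excludes $\phi_0$. Completing the argument requires showing that these complementary intervals emerge already in sorted order along each covariate sweep, so that their union, and hence the final intersection with the level-$(l<L)$ interval, can be formed in linear time without a separate sort. Making this monotonicity rigorous, and reconciling it with the requirement that the entire truncation set be returned rather than only the component containing $\phi_0$, is the crux of the $O(npL)$ claim.
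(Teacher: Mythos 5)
Your first three steps match the paper's argument essentially exactly: the quadratic form of each $S_{l,j,s}$ (the paper writes $\gain_R(y,j,s)=y^\T M_{R,j,s}y$ for a PSD matrix $M_{R,j,s}$, which is equivalent to your two-sample identity), the $O\{npL+np\log(n)\}$ coefficient computation via one pre-sort per covariate plus constant-time sweep updates of running sums, and the general $O\{npL\log(npL)\}$ intersection of $O(npL)$ interval-unions. Your treatment of levels $l<L$ in the special case is also the paper's: since $\nu_{sib}$ is zero-sum and supported on $R_A\cup R_B$, which falls entirely inside one child of the realized split, the realized gain is constant in $\phi$, the comparison quadratic has non-negative leading coefficient, and each $S_{l,j,s}$ is a single interval containing $\nu_{sib}^\T y$; intersecting these needs only running extrema.

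The gap is at level $l=L$, and the route you propose to close it is not the right one: no monotonicity of the excluded intervals along the covariate sweep is needed (and none is used in the paper). The fact you are missing is \emph{where} the excluded intervals sit. For any candidate $(j,s)$ at level $L$, the quadratic $\gain_{R^{(L-1)}}\{y'(\phi,\nu_{sib}),j,s\}-\gain_{R^{(L-1)}}\{y'(\phi,\nu_{sib}),j_L,s_L\}$ has non-positive leading coefficient (your ANOVA bound; the paper uses a projection-norm argument), and at $\phi=0$ the \emph{realized} gain vanishes, because $\phi=0$ eliminates the contrast between $R_A$ and $R_B$ so the split on $(j_L,s_L)$ yields zero gain, while the candidate gain is $\geq 0$ by positive semidefiniteness. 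Hence this downward-facing quadratic is non-negative at $\phi=0$, so $S_{L,j,s}=(-\infty,c_{j,s}]\cup[d_{j,s},\infty)$ with $c_{j,s}\leq 0\leq d_{j,s}$. Since every excluded interval $(c_{j,s},d_{j,s})$ contains the common point $0$, their union is the single interval $\left(\min_{j,s}c_{j,s},\ \max_{j,s}d_{j,s}\right)$, and therefore $\bigcap_{j,s}S_{L,j,s}=\left(-\infty,\ \min_{j,s}c_{j,s}\right]\cup\left[\max_{j,s}d_{j,s},\ \infty\right)$ is obtained with a running minimum and maximum in $O(np)$ operations; no sort is needed, and the full truncation set (not merely the component containing $\phi_0$) comes out automatically. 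Intersecting this two-ray set with the single interval from levels $l<L$ costs $O(1)$, giving $O(npL)$ overall. (The degenerate case where a candidate split induces the identical partition to $(j_L,s_L)$ gives $c_{j,s}=d_{j,s}=0$ and is handled identically.)
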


Noting that
$S^\lambda(\mathcal{B},\nu) = \left\{ \phi \in S_{grow}(\mathcal{B},\nu) : R^{(L)} \in \tree^\lambda\{y'(\phi,\nu)\} \right\}$,  it remains to characterize the set of $\phi \in  S_{grow}(\mathcal{B},\nu)$ such that $R^{(L)}$ is not removed during pruning.
Recall that $g(\cdot)$ was defined in \eqref{def_g}. 
\begin{proposition}
\label{prop_s.pruning}
There exists a  tree $\tree(\mathcal{B},\nu,\lambda)$ such that
\begin{equation}
\label{eq:mainpruning}
S^\lambda(\mathcal{B},\nu) = S_{grow}(\mathcal{B},\nu) \cap \left( \bigcap_{l=0}^{L-1} \left\{ \phi : g\left\{
R^{(l)}, \tree(\mathcal{B},\nu,\lambda), y'(\phi,\nu)
\right\} \geq \lambda \right\} \right).
\end{equation} 
If $\mathcal{R}(\mathcal{B}) \in \tree^\lambda(y)$, then $\tree(\mathcal{B},\nu,\lambda)=\tree^\lambda(y)$ satisfies \eqref{eq:mainpruning}. Otherwise, given the set $S_{grow}(\mathcal{B},\nu)$, computing a $\tree(\mathcal{B},\nu,\lambda)$ that satisfies \eqref{eq:mainpruning} has a worst-case computational cost of $O(n^2p)$. 
\end{proposition}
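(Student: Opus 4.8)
The plan is to reduce the claim to a statement about when the deepest branch region $R^{(L)}$ survives cost-complexity pruning, and then to show that the sequence of $\phi$-dependent pruning tests along the branch can all be read off a single fixed tree. Since, as noted just before the proposition, $S^\lambda(\mathcal{B},\nu) = \{\phi \in S_{grow}(\mathcal{B},\nu) : R^{(L)} \in \tree^\lambda\{y'(\phi,\nu)\}\}$, and since $R^{(L)}$ is removed during Algorithm~\ref{alg_pruning} exactly when one of its ancestors $R^{(0)},\ldots,R^{(L-1)}$ is collapsed, it suffices to characterize the $\phi$ for which none of these ancestors is collapsed. The first step is to pin down the structure of $\tree^0\{y'(\phi,\nu)\}$ as $\phi$ ranges over $S_{grow}(\mathcal{B},\nu)$. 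By Condition~\ref{cond_nuprop}, $y'(\phi,\nu)$ differs from $y$ only by adding the constant $c_1$ to the responses in $R^{(L)}$ and the constant $c_2$ to those in its sibling $R^{(L-1)} \cap \chi_{j_L,s_L,1-e_L}$, leaving all other responses fixed. Because $\gain$ in \eqref{eq:gain} is invariant to adding a common constant to all responses inside the region being split, the subtrees grown below $R^{(L)}$ and below its sibling (whose responses are shifted by the constants $c_1$ and $c_2$), together with the subtrees below every higher off-branch sibling $R^{(l-1)}\cap\chi_{j_l,s_l,1-e_l}$ for $l \leq L-1$ (which are disjoint from $R^{(L-1)}$ and hence carry unchanged responses), do not depend on $\phi$; combined with the fact that the branch splits are forced for $\phi \in S_{grow}(\mathcal{B},\nu)$, this shows the grown tree is structurally constant over $S_{grow}(\mathcal{B},\nu)$.

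The same translation-invariance argument shows that $g\{R',\cdot,y'(\phi,\nu)\}$ is constant in $\phi$ for every region $R'$ contained in $R^{(L)}$, in its sibling, or in a higher off-branch sibling, since the numerator and denominator of \eqref{def_g} are then built from sums of squared deviations within regions on which $y'(\phi,\nu)$ is merely a constant shift of $y$. Hence every pruning decision made by Algorithm~\ref{alg_pruning} off the branch---and in particular all pruning below $R^{(L)}$ and below the two siblings---is independent of $\phi$. I would therefore define $\tree(\mathcal{B},\nu,\lambda)$ to be the structurally fixed grown tree with all of these $\phi$-independent prunings applied but with the branch regions $R^{(0)},\ldots,R^{(L)}$ retained; this is a valid tree in the sense of Definition~\ref{def_tree}.

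The heart of the argument, and the step I expect to be the main obstacle, is proving that for $\phi \in S_{grow}(\mathcal{B},\nu)$ we have $R^{(L)} \in \tree^\lambda\{y'(\phi,\nu)\}$ if and only if $g\{R^{(l)},\tree(\mathcal{B},\nu,\lambda),y'(\phi,\nu)\} \geq \lambda$ for every $l = 0,\ldots,L-1$. The subtlety is that, in the actual run of Algorithm~\ref{alg_pruning}, the quantity tested at $R^{(l)}$ uses the evolving tree $\tree_{k-1}$ rather than the fixed tree $\tree(\mathcal{B},\nu,\lambda)$. I would resolve this by induction over the ancestors in the bottom-up order (reaching $R^{(L-1)}$ first and $R^{(0)}$ last): on the event that the branch is still intact when $R^{(l)}$ is processed, all descendants of $R^{(l)}$ have already been handled, the branch-descendants $R^{(l+1)},\ldots,R^{(L)}$ are present, and by the previous paragraph the terminal regions $\term\{R^{(l)},\tree_{k-1}\}$ coincide exactly with $\term\{R^{(l)},\tree(\mathcal{B},\nu,\lambda)\}$. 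Since $g$ in \eqref{def_g} depends on the tree only through this set of terminal regions, the tested value equals $g\{R^{(l)},\tree(\mathcal{B},\nu,\lambda),y'(\phi,\nu)\}$, so $R^{(l)}$ is collapsed precisely when the fixed-tree quantity falls below $\lambda$. Chaining this equivalence across $l$ in both directions yields the stated iff, and hence \eqref{eq:mainpruning}.

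Finally I would verify the two cases. If $\mathcal{R}(\mathcal{B}) \subseteq \tree^\lambda(y)$, then evaluating the construction at the observed response shows that $\tree^\lambda(y)$ already retains the branch and carries precisely the $\phi$-independent off-branch prunings, so $\tree(\mathcal{B},\nu,\lambda) = \tree^\lambda(y)$ satisfies \eqref{eq:mainpruning} with no additional work. Otherwise $\mathcal{R}(\mathcal{B}) \not\subseteq \tree^\lambda(y)$, the case needed for the permuted branches of Lemma~\ref{lemma_permutations}, and the reference tree must be built explicitly: starting from $S_{grow}(\mathcal{B},\nu)$, I would grow the subtrees hanging off the (possibly permuted) branch and below $R^{(L)}$ and its sibling---each a deterministic computation, since the relevant responses are fixed up to a common shift---and then apply the $\phi$-independent pruning to each. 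Because these subtrees partition the $n$ observations and growing followed by pruning a subtree on $m$ observations costs $O(m^2 p)$ in the worst case, the bound $\sum_i m_i^2 \leq (\sum_i m_i)^2 \leq n^2$ gives a total cost of $O(n^2 p)$.
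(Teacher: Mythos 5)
Your proposal is correct and takes essentially the same route as the paper's proof: you use Condition~\ref{cond_nuprop} and translation invariance of $\gain$ and $g$ to show that the grown tree and all off-branch pruning decisions are constant over $S_{grow}(\mathcal{B},\nu)$, define $\tree(\mathcal{B},\nu,\lambda)$ as that fixed structure with the branch retained (this is exactly the paper's intermediate tree $\tree_{K-L}$ from its Lemma on well-definedness), and handle the evolving-tree-versus-fixed-tree subtlety by the same induction along the ancestors, in which the tested $g$ value coincides with the fixed-tree value as long as the branch is intact. The special case $\tree(\mathcal{B},\nu,\lambda)=\tree^\lambda(y)$ and the $O(n^2p)$ worst-case cost are also argued as in the paper, up to minor bookkeeping differences in how the construction cost is accounted.
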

We explain how to compute a $\tree(\mathcal{B},\nu,\lambda)$ satisfying \eqref{eq:mainpruning}  when $\mathcal{R}(\mathcal{B}) \not\in \tree^\lambda(y)$ in the supplementary materials. 

\begin{proposition}
\label{prop_pruning.efficient}
The set $\bigcap_{l=0}^{L-1} \left\{ \phi  :  g\left\{
R^{(l)},\tree(\mathcal{B},\nu,\lambda), y'(\phi,\nu)
\right\} \geq \lambda \right\}$ in \eqref{eq:mainpruning} is the intersection of the solution sets of $L$ quadratic inequalities in $\phi$. Given $\tree(\mathcal{B},\nu,\lambda)$, the coefficients of these quadratics can be obtained in $O(nL)$ operations.  After  $S_{grow}(\mathcal{B},\nu)$ has been computed, intersecting it with these quadratic sets to obtain  $S^{\lambda}(\mathcal{B},\nu)$ from \eqref{eq:mainpruning} requires $O\{npL \times \log(npL)\}$ operations in general, and only $O(L)$ operations if $\mathcal{B} = \branch\{ R_A, \tree^\lambda(y) \}$ and $\nu=\nu_{sib}$ from \eqref{eq_nusib}. 
\end{proposition}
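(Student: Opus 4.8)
The plan is to show, in order, that (i) each of the $L$ pruning constraints is a quadratic inequality in $\phi$, (ii) the coefficients of these quadratics can be assembled in $O(nL)$ time, and (iii) intersecting their solution sets with the precomputed $S_{grow}(\mathcal{B},\nu)$ costs $O\{npL\log(npL)\}$ in general and only $O(L)$ in the sibling case. Throughout I use that $\{y'(\phi,\nu)\}_i = a_i + b_i\phi$ is affine in $\phi$, with $a_i = (\mathcal{P}_\nu^\perp y)_i$ and $b_i = \nu_i/\|\nu\|_2^2$, and that the tree $\tree(\mathcal{B},\nu,\lambda)$ supplied by Proposition~\ref{prop_s.pruning} does not depend on $\phi$.

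For (i): for any fixed region $R$, write $m_R(\phi)$ for the mean of $\{y'(\phi,\nu)\}_i$ over $\{i : x_i \in R\}$; then the within-region residual sum of squares $\sum_{i : x_i \in R}(\{y'(\phi,\nu)\}_i - m_R(\phi))^2$ is a sum of squares of affine functions of $\phi$, hence a quadratic in $\phi$. Since $\tree(\mathcal{B},\nu,\lambda)$ is fixed, $\term(R^{(l)},\tree(\mathcal{B},\nu,\lambda))$ and the integer $|\term(R^{(l)},\tree(\mathcal{B},\nu,\lambda))| - 1$ are constants, so the numerator of $g$ in \eqref{def_g} is a difference of such quadratics, and $g\{R^{(l)},\tree(\mathcal{B},\nu,\lambda),y'(\phi,\nu)\} \geq \lambda$ becomes, after clearing the positive constant denominator, a single quadratic inequality. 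Letting $l$ range over $0,\ldots,L-1$ gives exactly $L$ of them.

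For (ii): each such quadratic is determined by the statistics $\sum_{i : x_i \in R} 1$, $\sum a_i$, $\sum b_i$, $\sum a_i^2$, $\sum a_i b_i$, and $\sum b_i^2$ over $R = R^{(l)}$ and over each terminal descendant of $R^{(l)}$. Because the terminal descendants partition $R^{(l)}$, all of these accumulate in $O(|R^{(l)}|)$ operations once $a_i,b_i$ are known, and $a_i,b_i$ follow from a single $O(n)$ pass. The nesting $R^{(0)} \supseteq \cdots \supseteq R^{(L-1)}$ with $|R^{(0)}| = n$ gives $\sum_{l=0}^{L-1}|R^{(l)}| \leq nL$, so the coefficients are obtained in $O(nL)$. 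For the general intersection, the $L$ quadratics contribute $O(L)$ roots; merging these with the $O(npL)$ breakpoints carried by the representation of $S_{grow}(\mathcal{B},\nu)$ (an intersection of $O(npL)$ quadratic sets by Proposition~\ref{prop_quadratic}), sorting, and a single sweep that tracks membership in $S_{grow}(\mathcal{B},\nu)$ together with a running count of violated pruning inequalities yields $S^\lambda(\mathcal{B},\nu)$ in $O\{npL\log(npL)\}$ operations.

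The crux, and the main obstacle, is the $O(L)$ bound in the sibling case $\mathcal{B} = \branch\{R_A,\tree^\lambda(y)\}$, $\nu = \nu_{sib}$, where $\tree(\mathcal{B},\nu,\lambda) = \tree^\lambda(y)$. Here $b_i$ is supported on $R^{(L-1)} = R_A \cup R_B$ with $\sum_{i : x_i \in R^{(L-1)}} b_i = 0$. For every $l \leq L-1$, the region $R^{(l)}$ contains the entire support of $b$, so $m_{R^{(l)}}(\phi)$ is constant in $\phi$; and every terminal descendant of $R^{(l)}$ lies entirely within $R_A$, entirely within $R_B$, or entirely outside $R^{(L-1)}$, on each of which $b_i$ is constant, so every terminal residual sum of squares is constant in $\phi$. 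Thus the only $\phi$-dependence in the numerator of $g\{R^{(l)},\tree^\lambda(y),y'(\phi,\nu_{sib})\}$ comes from the residual sum of squares of $R^{(l)}$, whose $\phi^2$- and $\phi$-coefficients equal $\sum_{i : x_i \in R^{(L-1)}} b_i^2$ and $2\sum_{i : x_i \in R^{(L-1)}} a_i b_i$ (the linear coefficient simplifying via $\sum_{i : x_i \in R^{(L-1)}} b_i = 0$), both independent of $l$. Hence all $L$ inequalities read $A\phi^2 + B\phi \geq \gamma_l$ for common $A,B$ and level-specific $\gamma_l$, and their intersection collapses to the single inequality $A\phi^2 + B\phi \geq \max_{0 \leq l \leq L-1}\gamma_l$. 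Forming it needs only the $O(L)$ comparisons to compute $\max_l \gamma_l$, after which intersecting the resulting at-most-two-interval set with the representation of $S_{grow}(\mathcal{B},\nu)$ from Proposition~\ref{prop_quadratic} adds $O(L)$ operations. I expect that verifying this last, data-structure-level accounting --- rather than the coefficient collapse itself, which is the clean essential step --- will be the delicate part, since it leans on the specific construction of $S_{grow}(\mathcal{B},\nu)$ used in the proof of Proposition~\ref{prop_quadratic}; claims (i) and (ii) and the general bound are comparatively routine consequences of the affine dependence of $y'(\phi,\nu)$ on $\phi$ and the terminal-descendant partition.
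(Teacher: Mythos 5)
Your proof is correct, and while it follows the proposition's three-part skeleton, its key technical devices genuinely differ from the paper's. The paper establishes quadraticness by recursively decomposing the numerator $h(R,\tree,y)$ of $g$ into on-branch gains plus off-branch terms (its Lemmas~\ref{lemma_decompose} and \ref{lemma_decompose_specific}), using Condition~\ref{cond_nuprop} to freeze the off-branch terms at a reference point $\tilde\phi \in S_{grow}(\mathcal{B},\nu)$, and it obtains the $O(nL)$ bound largely by recycling the coefficients $a(\cdot), b(\cdot), c(\cdot)$ already computed for $S_{grow}(\mathcal{B},\nu)$, plus $O(nL)$ work for the constants $\gamma_l$. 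You instead argue directly from the affine dependence $\{y'(\phi,\nu)\}_i = a_i + b_i\phi$: residual sums of squares of affine functions over fixed regions are quadratics, so each pruning constraint is a quadratic inequality, and the coefficients follow from fresh sufficient statistics over $R^{(l)}$ and its terminal descendants, with the partition structure and $\sum_l |R^{(l)}| \leq nL$ giving the $O(nL)$ bound; this is more elementary and self-contained (it does not even need Condition~\ref{cond_nuprop} for quadraticness), at the cost of not exploiting coefficient reuse. In the sibling case, the paper invokes its Lemma~\ref{lemma_sameGain} to reduce every constraint to $\gain_{R^{(L-1)}}\{y'(\phi,\nu_{sib}),j_L,s_L\} \geq \tilde\gamma_l$ and then intersects the resulting two-ray sets by a min/max over endpoints, whereas your zero-sum/piecewise-constant argument for $\nu_{sib}$ shows all $L$ constraints share a common left-hand side $A\phi^2 + B\phi$, so the intersection collapses to a single inequality with $\max_l \gamma_l$. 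These two arguments are equivalent in content (your $A\phi^2+B\phi$ is, up to an additive constant, the paper's $\gain_{R^{(L-1)}}$ along the perturbation), but yours bypasses the gain-decomposition machinery entirely. Your final accounting, which you flagged as delicate, is in fact exactly what the paper does: the $O(1)$-component intersection rests on the fact, proved inside Proposition~\ref{prop_quadratic}'s proof (Lemma~\ref{lemma_nusibfaster}), that in the sibling case $S_{grow}(\mathcal{B},\nu_{sib})$ is a single interval intersected with one two-ray set, hence has at most two components.
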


The results in this section have relied upon Condition~\ref{cond_nuprop}. Indeed, this condition holds for branches $\mathcal{B}$
and vectors $\nu$ that arise in characterizing the sets $S^\lambda_{sib}(\nu_{sib})$ and $S^\lambda_{reg}(\nu_{reg})$. 
\begin{proposition}
\label{prop_nuprop}
If either
(i) $\mathcal{B} =  \branch\{R_A, \tree^\lambda(y)\}$ and $\nu=\nu_{sib}$ \eqref{eq_nusib}, where $R_A$ and $R_B$ are siblings in $\tree^\lambda(y)$, or (ii) $\mathcal{B}$ is a permutation of $\branch\{R_A, \tree^\lambda(y)\}$ and $\nu=\nu_{reg}$ \eqref{eq_nureg}, where  $R_A \in \tree^\lambda(y)$, then Condition~\ref{cond_nuprop} holds.
\end{proposition}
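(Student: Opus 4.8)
The plan is to verify Condition~\ref{cond_nuprop} by computing the perturbation $y'(\phi,\nu)$ explicitly in each of the two cases and checking that it has the required support structure. First I would unwind the definition $y'(\phi,\nu) = \mathcal{P}_{\nu}^\perp y + \phi(\nu/\|\nu\|_2^2)$ from Theorem~\ref{theorem_1jewell}. Since $\mathcal{P}_{\nu}^\perp = I - \nu\nu^\T/\|\nu\|_2^2$, this yields the coordinatewise expression
\[
\{y'(\phi,\nu)\}_i = y_i + \frac{\nu_i}{\|\nu\|_2^2}\left(\phi - \nu^\T y\right),
\]
so everything reduces to computing $\nu_i/\|\nu\|_2^2$ and identifying the observations on which it is nonzero. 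Note that $c_1$ and $c_2$ in Condition~\ref{cond_nuprop} are permitted to depend on $\phi$ but not on the index $i$; the content of the condition is that the perturbation is constant on $R^{(L)}$, constant on $R^{(L-1)}\cap\chi_{j_L,s_L,1-e_L}$, and zero elsewhere.

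For case (ii), where $\nu=\nu_{reg}$, I would set $n_A = \sum_{i'}1_{(x_{i'}\in R_A)}$ and use \eqref{eq_nureg} to get $\|\nu_{reg}\|_2^2 = 1/n_A$, hence $(\nu_{reg})_i/\|\nu_{reg}\|_2^2 = 1_{(x_i\in R_A)}$. The perturbation is therefore $y_i + (\phi - \nu_{reg}^\T y)\,1_{(x_i\in R_A)}$, supported only on $R_A$. Because permuting a branch leaves its induced final region unchanged (the intersection defining $R^{(L)}$ is permutation-invariant, as noted after Definition~\ref{def:branch_perm}), the final region of $\mathcal{B}$ is $R^{(L)}=R_A$. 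Thus Condition~\ref{cond_nuprop} holds with $c_1 = \phi - \nu_{reg}^\T y$ and $c_2 = 0$, the sibling term being simply absent.

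For case (i), where $\nu=\nu_{sib}$ and $R_A,R_B$ are disjoint siblings, I would likewise compute $\|\nu_{sib}\|_2^2 = 1/n_A + 1/n_B$, where the cross terms vanish by disjointness, and then obtain
\[
\frac{(\nu_{sib})_i}{\|\nu_{sib}\|_2^2} = \frac{n_B}{n_A+n_B}1_{(x_i\in R_A)} - \frac{n_A}{n_A+n_B}1_{(x_i\in R_B)}.
\]
The substantive step is to identify $R_A$ and $R_B$ with the appropriate branch regions. Since $\mathcal{B}=\branch\{R_A,\tree^\lambda(y)\}$ ends at $R^{(L)}=R_A$, its parent is $R^{(L-1)}$ and its final split $(j_L,s_L,e_L)$ satisfies $R_A = R^{(L-1)}\cap\chi_{j_L,s_L,e_L}$; because $R_A$ and $R_B$ are siblings sharing parent $R^{(L-1)}$, the sibling relation forces $R_B = R^{(L-1)}\cap\chi_{j_L,s_L,1-e_L}$. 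Substituting these identifications gives precisely the form in Condition~\ref{cond_nuprop} with $c_1 = (\phi-\nu_{sib}^\T y)\,n_B/(n_A+n_B)$ and $c_2 = -(\phi-\nu_{sib}^\T y)\,n_A/(n_A+n_B)$.

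The computations are routine; the only point requiring care, and hence the main obstacle, is the correct identification of the support regions with $R^{(L)}$ and $R^{(L-1)}\cap\chi_{j_L,s_L,1-e_L}$. In case (i) this hinges on matching the sibling $R_B$ to the complementary half-space at the branch's final split, and in case (ii) on recognizing that permutation invariance of the final region lets the single support region $R_A$ play the role of $R^{(L)}$ regardless of how the branch is ordered.
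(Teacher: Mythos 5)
Your proposal is correct and takes essentially the same route as the paper's own proof: the paper simply cites the coordinatewise expressions for $\{y'(\phi,\nu_{sib})\}_i$ and $\{y'(\phi,\nu_{reg})\}_i$ from Section~\ref{subsec_intuition} (which you derive explicitly from $y'(\phi,\nu)=\mathcal{P}_\nu^\perp y + \phi\nu/\|\nu\|_2^2$) and then makes exactly your identifications, namely $R^{(L)}=R_A$ and $R_B = R^{(L-1)}\cap\chi_{j_L,s_L,1-e_L}$ in case (i), and $c_2=0$ together with permutation-invariance of the induced final region in case (ii). Your reading of Condition~\ref{cond_nuprop} --- that $c_1,c_2$ may depend on $\phi$ but not on the index $i$ --- is also the intended one, consistent with how the condition is used later in the paper.
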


Combining Lemma~\ref{lemma_orderedsplits}  with Propositions~\ref{prop_Sisintersection}--\ref{prop_nuprop}, we see that  $S_{sib}^\lambda(\nu_{sib})$ can be computed in $O\{npL+np\log(n)\}$ operations.
 However, computing $S_{reg}^\lambda(\nu_{reg})$ is much more computationally intensive: by Lemma~\ref{lemma_permutations}  and Propositions~\ref{prop_Sisintersection}--\ref{prop_nuprop}, it  requires computing \\ $S^\lambda(\pi\left[\branch\{R_A,\tree^\lambda(y)\}\right],\nu_{reg})$  for all $L!$ permutations $\pi \in \Pi$, for a total of $O\left[ L! \left\{ n^2 p L log(p L)\right\} \right]$ operations. In Section~\ref{subsec_actuallycomputeSreg}, we discuss ways to avoid these calculations.

\subsection{A Computationally-Efficient Alternative to $S^\lambda_{reg}(\nu_{reg})$} 
\label{subsec_actuallycomputeSreg}

Lemma~\ref{lemma_permutations} suggests that carrying out
inference on a single region requires computing \\$S^\lambda(\pi\left[\branch\{R_A,\tree^\lambda(y)\}\right],\nu_{reg})$ for every $\pi \in \Pi$. We now present a less computationally demanding alternative.

\begin{proposition}
\label{prop:subsetpermutation}
Let $Q$ be a subset of the $L!$ permutations in $\Pi$, i.e. $Q \subseteq \Pi$. 
Define 
\footnotesize
\begin{equation}
\nonumber
p_{reg}^{Q}(y) = pr_{H_0}\left\{ |\nu_{reg}^\T Y - c | \geq |\nu_{reg}^\T y - c| \mid \bigcup_{\pi \in Q} \left( \mathcal{R}(\pi[\branch\{R_A, \tree^\lambda(y)\}]) \subseteq \tree^{\lambda}(Y) \right), \mathcal{P}_{\nu_{reg}}^\perp Y = \mathcal{P}_{\nu_{reg}}^\perp {y} \right\}.
\end{equation}
\normalsize
The test based on $p_{reg}^{Q}(y)$ controls the selective Type 1 error rate \eqref{eq_st1e} for
$H_0: \nu_{reg}^\T \mu = c$. Furthermore, \small 
$p_{reg}^{Q}(y) = pr\left\{ |\phi- c | \geq |\nu_{reg}^\T y - c| \mid \phi \in \bigcup_{\pi \in Q} S^\lambda\left( \pi[\branch\{R_A, \tree^\lambda(y)\}], \nu_{reg}\right) \right\}$,
\normalsize 
where $\phi \sim N(c, \|\nu_{reg}\|_2^2 \sigma^2)$. 
\end{proposition}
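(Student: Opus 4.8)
The plan is to follow the argument that establishes Theorem~\ref{theorem_1modified} essentially verbatim, with the single-region conditioning set $S^\lambda_{reg}(\nu_{reg})$ replaced throughout by the union $S_Q \equiv \bigcup_{\pi \in Q} S^\lambda(\pi[\branch\{R_A, \tree^\lambda(y)\}], \nu_{reg})$. The content of the proposition is precisely that the validity argument never used any special feature of the conditioning region beyond its being a fixed, data-determined, measurable subset of the line; an arbitrary union indexed by $Q \subseteq \Pi$ is therefore handled with no new ideas.

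First I would set $\phi = \nu_{reg}^\T Y$ and write $Y = \mathcal{P}_{\nu_{reg}}^\perp Y + (\phi / \|\nu_{reg}\|_2^2)\nu_{reg}$. Since $Y \sim N_n(\mu, \sigma^2 I_n)$ and $\mathrm{Cov}(\nu_{reg}^\T Y, \mathcal{P}_{\nu_{reg}}^\perp Y) = \sigma^2 \mathcal{P}_{\nu_{reg}}^\perp \nu_{reg} = 0$, the scalar $\phi$ and the nuisance statistic $\mathcal{P}_{\nu_{reg}}^\perp Y$ are independent, and under $H_0$ we have $\phi \sim N(c, \|\nu_{reg}\|_2^2 \sigma^2)$. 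Conditioning on $\mathcal{P}_{\nu_{reg}}^\perp Y = \mathcal{P}_{\nu_{reg}}^\perp y$ then reduces everything to a one-dimensional problem in which $Y = y'(\phi, \nu_{reg})$ and $\phi$ is the only remaining source of randomness.

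The key step is to translate the conditioning event into $\phi$-space. Each $\pi[\branch\{R_A, \tree^\lambda(y)\}]$ is a fixed branch, since it depends on $Y$ only through the observed $y$. Hence, once the nuisance is fixed and $Y = y'(\phi, \nu_{reg})$, the definition \eqref{eq:sbnu} of $S^\lambda(\mathcal{B},\nu)$ gives that $\mathcal{R}(\pi[\branch\{R_A, \tree^\lambda(y)\}]) \subseteq \tree^\lambda(Y)$ holds if and only if $\phi \in S^\lambda(\pi[\branch\{R_A, \tree^\lambda(y)\}], \nu_{reg})$. Taking the union over $\pi \in Q$ shows that the conditioning event defining $p_{reg}^{Q}(y)$ is exactly $\{\phi \in S_Q\}$, which immediately yields the stated closed form $p_{reg}^{Q}(y) = pr_{H_0}\{|\phi - c| \geq |\nu_{reg}^\T y - c| \mid \phi \in S_Q\}$ with $\phi \sim N(c, \|\nu_{reg}\|_2^2 \sigma^2)$.

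Finally, I would establish selective Type 1 error control by a probability integral transform. Conditional on $\phi \in S_Q$, the law of $\phi$ is a Gaussian truncated to $S_Q$, hence absolutely continuous, so $W = |\phi - c|$ is atomless (its only candidate atom, at $\phi = c$, has probability zero). Because $p_{reg}^{Q}(Y)$ is exactly the conditional survival function of $W$ evaluated at the observed value $|\nu_{reg}^\T y - c|$, it follows that $p_{reg}^{Q}(Y) \mid \{\phi \in S_Q\} \sim \mathrm{Unif}(0,1)$ under $H_0$, and therefore $pr_{H_0}\{p_{reg}^{Q}(Y) \leq \alpha \mid \phi \in S_Q\} = \alpha$, i.e. control of the selective Type 1 error rate \eqref{eq_st1e} for the selection event $\bigcup_{\pi \in Q}(\mathcal{R}(\pi[\branch\{R_A, \tree^\lambda(y)\}]) \subseteq \tree^\lambda(Y))$. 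The only genuinely load-bearing step is the set identity of the preceding paragraph; since $Q$ enters solely through the fixed set $S_Q$, its arbitrariness creates no additional difficulty. I expect the one point requiring care is confirming that $S_Q$ has positive probability and is atomless under the truncated Gaussian, so that the probability integral transform applies cleanly.
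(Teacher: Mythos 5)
Your derivation of the closed form $p_{reg}^{Q}(y) = pr_{H_0}\{|\phi - c| \geq |\nu_{reg}^\T y - c| \mid \phi \in S_Q\}$ is sound, and it matches the argument the paper omits as ``similar to the proof of Theorem~\ref{theorem_1jewell}'': after conditioning on the nuisance, $Y = y'(\phi,\nu_{reg})$, each fixed permuted-branch event becomes $\{\phi \in S^\lambda(\pi[\branch\{R_A,\tree^\lambda(y)\}],\nu_{reg})\}$ by \eqref{eq:sbnu}, and the independence of $\nu_{reg}^\T Y$ and $\mathcal{P}^\perp_{\nu_{reg}}Y$ gives the truncated Gaussian law.

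The gap is in the Type 1 error claim. The selective Type 1 error rate \eqref{eq_st1e} for $H_0: \nu_{reg}^\T\mu=c$ is defined relative to the event that caused this hypothesis to be selected, namely $\mathcal{E}_1 = \{R_A \in \tree^\lambda(Y)\}$, not relative to whatever event the p-value happens to condition on. Your final paragraph establishes $pr_{H_0}\{p_{reg}^{Q}(Y) \leq \alpha \mid \mathcal{E}_2 \cap \mathcal{E}_3\} = \alpha$, where $\mathcal{E}_2$ is the nuisance event and $\mathcal{E}_3 = \bigcup_{\pi \in Q}\{\mathcal{R}(\pi[\branch\{R_A,\tree^\lambda(y)\}]) \subseteq \tree^\lambda(Y)\}$, and then declares this to be \eqref{eq_st1e} ``for the selection event'' $\mathcal{E}_3$. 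But $\mathcal{E}_3$ is not the selection event; it is a finer, $Q$-dependent event. The entire point of the proposition (see the discussion after \eqref{eq_identitypvalue}: the p-value conditions on ``an extra piece of information,'' the ancestors of $R_A$) is that conditioning on this finer event still yields validity given $\mathcal{E}_1$. Closing that gap requires exactly the two ingredients in the paper's proof that your argument never uses: (i) $\mathcal{E}_3 \subseteq \mathcal{E}_1$, which needs a Lemma~\ref{lemma_permutations}-style observation that every permuted branch of $R_A$ induces $R_A$ as its final region, so containing $\mathcal{R}(\pi[\branch\{R_A,\tree^\lambda(y)\}])$ forces $R_A \in \tree^\lambda(Y)$; and (ii) a law-of-total-expectation step in the spirit of Proposition 3 of Fithian et al., averaging the conditional level-$\alpha$ property over the cells $\mathcal{E}_2 \cap \mathcal{E}_3$ inside $\mathcal{E}_1$, i.e. $E_{H_0}(\psi \mid \mathcal{E}_1) = E_{H_0}\{E_{H_0}(\psi \mid \mathcal{E}_2 \cap \mathcal{E}_3) \mid \mathcal{E}_1\} = \alpha$ with $\psi = 1_{\{p_{reg}^{Q}(Y) \leq \alpha\}}$, a step that is legitimate precisely because of (i). Without (i) and (ii), you have proved error control for a different notion of selection than the one the proposition asserts.
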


Using the notation in Proposition~\ref{prop:subsetpermutation}, $p_{reg}(y)$ introduced in \eqref{eq:pvalreg} equals $p_{reg}^\Pi(y)$.   
If we take $Q =\{\mathcal{I}\}$, where $\mathcal{I}$ is the identity permutation, then 
we arrive at 
\begin{equation}
	p_{reg}^\mathcal{I}(y) =  P\left(\left|\phi - c\right| \geq |\nu_{reg}^\T {y} - c| \mid \phi \in S^\lambda[\branch\{R_A, \tree^\lambda(y)\}, \nu_{reg}] \right),
\label{eq_identitypvalue}
\end{equation}
 where $\phi \sim N(c, \|\nu_{reg}\|_2^2 \sigma^2)$. 
  The set $S^\lambda[\branch\{R_A, \tree^\lambda(y)\}, \nu_{reg}]$ can be easily computed by Proposition~\ref{prop_s.pruning}. 
  
Compared to~\eqref{eq:pvalreg}, \eqref{eq_identitypvalue} conditions on an extra piece of information: the ancestors of $R_A$. Thus, while \eqref{eq_identitypvalue} controls the selective Type 1 error rate, it may have lower power than \eqref{eq:pvalreg} \citep{fithian2014optimal}. Similarly, inverting \eqref{eq_identitypvalue} to form a confidence interval provides correct selective coverage, but may yield intervals that are wider than those in Proposition~\ref{prop_CIreg}. Proposition~\ref{prop:subsetpermutation} is motivated by a proposal by \cite{lee2016exact} to condition on both the selected model (necessary information) and the signs of the selected variables (extra information) in the lasso setting, to gain computational efficiency at the possible expense of precision and power.

In Appendix~\ref{appendix:permutation_sims}, we show through simulation that the loss in power associated with using \eqref{eq_identitypvalue} rather than \eqref{eq:pvalreg} is negligible. Thus, in practice, we suggest using \eqref{eq_identitypvalue} for its computational efficiency. We use \eqref{eq_identitypvalue} for the remainder of this paper. 

Furthermore, we can consider computing confidence intervals of the form $[L_{S_{reg}^\mathcal{I}}(y), U_{S_{reg}^\mathcal{I}}(y)]$ rather than \eqref{eq_mainCIreg}, where $L_{S_{reg}^\mathcal{I}}(y)$ and $U_{S_{reg}^\mathcal{I}}(y)$  satisfy
\begin{align}
\label{eq:branchCI}   
 & F \left(\nu_{reg}^\T y ; L_{S_{reg}^\mathcal{I}}(y), \sigma^2\|\nu_{reg}\|_2^2, S^\lambda\left[\branch\{R_A, \tree^\lambda(y)\}, \nu_{reg}\right] \right) = 1 - \frac{\alpha}{2},  \nonumber \\ 
 & F \left(\nu_{reg}^\T y ; U_{S_{reg}^\mathcal{I}}(y), \sigma^2\|\nu_{reg}\|_2^2, S^\lambda\left[\branch\{R_A, \tree^\lambda(y)\}, \nu_{reg}\right] \right)  = \frac{\alpha}{2}. 
\end{align}
In Appendix~\ref{appendix:permutation_sims}, we show that the confidence intervals resulting from \eqref{eq:branchCI} are not much wider than those resulting from \eqref{eq_mainCIreg}. We therefore make use of confidence intervals of the form \eqref{eq:branchCI} in the remainder of this paper. 

\section{Simulation Study}
\label{section_simstudy}

\subsection{Data Generating Mechanism}
\label{subsubsec_datagen}

We simulate ${X} \in \mathbb{R}^{n \times p}$ with $n=200, p=10,$ $X_{ij} \overset{i.i.d.}{\sim} N(0,1)$, and $y \sim N_n(\mu, \sigma^2 I_n)$ with $\sigma=5$ and
$\mu_i = b \times \left[ {1}_{\left(x_{i,1} \leq 0\right)} \times \{ 1 + a {1}_{\left(x_{i,2} > 0\right)}+ {1}_{\left( x_{i,3}  \times x_{i,2}  > 0 \right)} \} \right]$. This $\mu$ vector defines a three-level tree, shown in Figure~\ref{fig:truetree} for three values of $a \in \mathbb{R}$. 
\begin{figure}
\includegraphics[width=\textwidth]{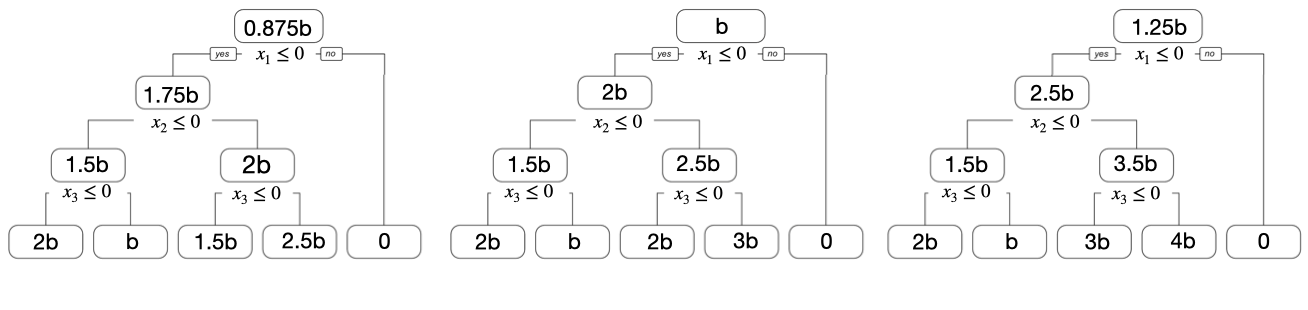}
\caption{The true mean model in Section~\ref{section_simstudy}, for $a=0.5$ (left), $a=1$ (center), and $a=2$ (right). The difference in means between the sibling nodes at level two in the tree is $ab$, while the difference in means between the sibling nodes at level three is $b$.}
\label{fig:truetree}
\end{figure}

\subsection{Methods for Comparison}
\label{subsubsec_methods}

All CART trees are fit using the \texttt{R} package \texttt{rpart} \citep{therneau2015package} with $\lambda=200$, a maximum level of three, and a minimum node size of one. We compare three approaches for conducting inference. 
(i) \emph{Selective $Z$-methods:} Fit a CART tree to the data. For each split, test for a difference in means between the two sibling regions using \eqref{def_mainpval}, and compute the corresponding confidence interval in \eqref{eq_mainCI}. Compute the confidence interval for the mean of each region using \eqref{eq:branchCI}. 
(ii) \emph{Naive $Z$-methods:} Fit a CART tree to  the data. For each split, conduct a naive $Z$-test for the difference in means between the two sibling regions, and compute the corresponding naive $Z$-interval. Compute a naive $Z$-interval for each region's mean.
(iii) \emph{Sample splitting:} Split the data into equally-sized training and test sets. Fit a CART tree to the training set. On the test set, conduct a naive $Z$-test for each split and compute a naive $Z$-interval for each split and each region. If a region has no test set observations, then we fail to reject the null hypothesis and fail to cover the parameter. 

The conditional inference tree (CTree) framework of \citet{hothorn2006unbiased} uses a different criterion than CART to perform binary splits. Within a region, it  tests for linear association between each covariate and the response. The covariate with the smallest p-value for this linear association is selected as the split variable, and a Bonferroni corrected p-value that accounts for the number of covariates is reported in the final tree. Then, the split point is selected. If, after accounting for multiple testing, no variable has a p-value below a pre-specified significance level $\alpha$, then the recursion stops. While CTree's p-values assess linear association and thus are not directly comparable to the p-values in (i)--(iii) above, it is the most popular framework currently available for determining if a regression tree split is statistically significant. Thus, we also evaluate the performance of (iv) \emph{CTree:} Fit a CTree to all of the data using the \texttt{R} package \texttt{partykit} \citep{hothorn2015partykit} with $\alpha=0.05$. For each split, record the p-value reported by \texttt{partykit}. 

In Sections~\ref{subsubsec_Type1}--\ref{subsubsec_width}, we assume that $\sigma$ is known. We consider the case of unknown $\sigma$ in Section~\ref{subsec:unknownvar}.

\subsection{Uniform p-values under a Global Null}
\label{subsubsec_Type1}

We generate $5,000$ datasets
with $a=b=0$, so that $H_0: \nu_{sib}^\T \mu = 0$ holds for all splits in all trees. Figure~\ref{fig_mainType1} displays the distributions of p-values across all splits in all fitted trees for the naive $Z$-test, sample splitting, and the selective $Z$-test. The selective $Z$-test and sample splitting  achieve uniform p-values under the null, 
while the naive $Z$-test (which does not account for the fact that $\nu_{sib}$ was obtained by applying CART to the same data used for testing) does not. CTree is omitted from the comparison: it creates a split only if the p-value is less than $\alpha=0.05$, and thus its p-values over the splits do not follow a Uniform(0,1) distribution.

\begin{figure}[!h]
\centering
\includegraphics[width=0.65\textwidth]{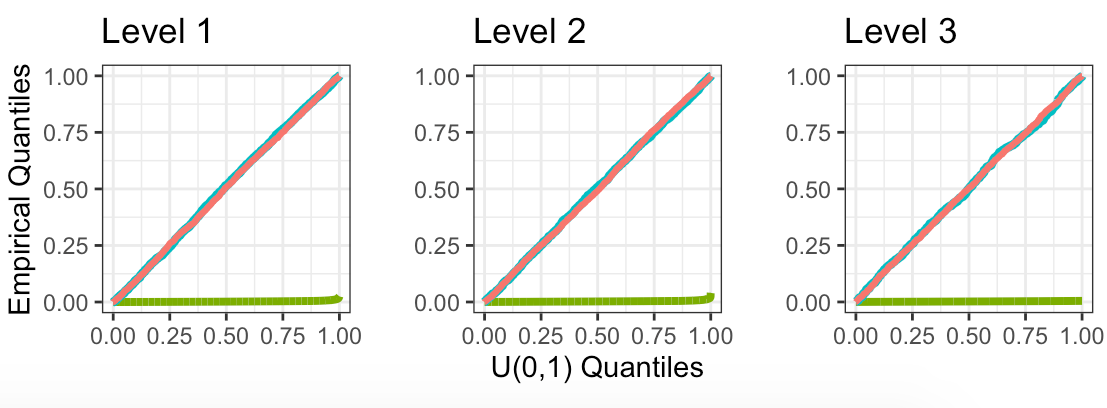}
\caption{Quantile-quantile plots of the p-values for testing $H_0: \nu_{sib}^\T \mu = 0$, as described in Section~\ref{subsubsec_Type1}. A naive Z-test (green), sample splitting (blue), and selective Z-test (pink) were performed; see Section \ref{subsubsec_methods}. The p-values are stratified by the level of the regions in the fitted tree.
}
\label{fig_mainType1}	
\end{figure}

\subsection{Power}
\label{subsubsec_Power}

We generate 500 datasets for each $(a,b) \in \{0.5,1,2\} \times  \{1,\ldots,10\}$, and evaluate the power of selective $Z$-tests, sample splitting, and CTree to reject the null hypothesis $H_0: \nu_{sib}^\T \mu = 0$. As naive $Z$-tests do not control the Type 1 error rate (Figure~\ref{fig_mainType1}), we do not evaluate their power. We consider two aspects of power: the probability that we \emph{detect} a true split, and the probability that we \emph{reject} the null hypothesis corresponding to a true split. 

\begin{table}
\def~{\hphantom{0}}
\tbl{
A $3 \times 3$ contingency table indicating an observation's involvement in a given true split and estimated split. The adjusted Rand index is computed using only the shaded cells }{
\begin{tabular}{clccc}
\multirow{2}{*}{} & & \multicolumn{3}{c}{Estimated Split} \\
&  & In left region & In right region & In neither \\[5pt]
\multirow{3}{*}{True Split} & In left region &  \cellcolor{gray!50}$t_1$ &  \cellcolor{gray!50}$t_2$ &  \cellcolor{gray!50}$t_3$ \\
  & In right region &  \cellcolor{gray!50}$u_1$ &  \cellcolor{gray!50}$u_2$ &  \cellcolor{gray!50}$u_3$ \\
    & In neither & $v_1$ & $v_2$ & $v_3$
\end{tabular}}
\label{table:contingencies}
\end{table}

Given a true split in Figure~\ref{fig:truetree} and an estimated split, we construct the $3 \times 3$ contingency table in Table~\ref{table:contingencies}, which indicates whether an observation is on the left-hand side, right-hand side, or not involved in the true split (rows) and the estimated split (columns). To quantify the agreement between the true and estimated splits, we compute the adjusted Rand index \citep{hubert1985comparing} associated with the $2 \times 3$ contingency table corresponding to the shaded region in Table~\ref{table:contingencies}. For each true split, we identify the estimated split for which the adjusted Rand index is largest; if this index exceeds $0.75$ then  this true split is ``detected".  Given that a true split is detected, the associated null hypothesis is rejected if the corresponding $p$-value is below $0.05$. Figure~\ref{fig_mainPowerres}  displays the proportion of true splits that are detected and rejected by each method.

As sample splitting fits a tree using only half of the data, it detects fewer true splits, and thus rejects the null hypothesis for fewer true  splits, than the selective $Z$-test. 

When $a$ is small, the difference in means between sibling regions at level two is small. Because  CTree makes a split only if there is strong evidence of association at that  level, it tends to build one-level trees, and thus fails to detect many true splits; by contrast, the selective Z-test (based on CART) successfully builds more three-level trees. Thus, when $a$ is small, the selective Z-test detects (and rejects) more true differences than CTree between regions at levels two and three.

\begin{figure}[!h]
\centering
\includegraphics[width=0.77\textwidth]{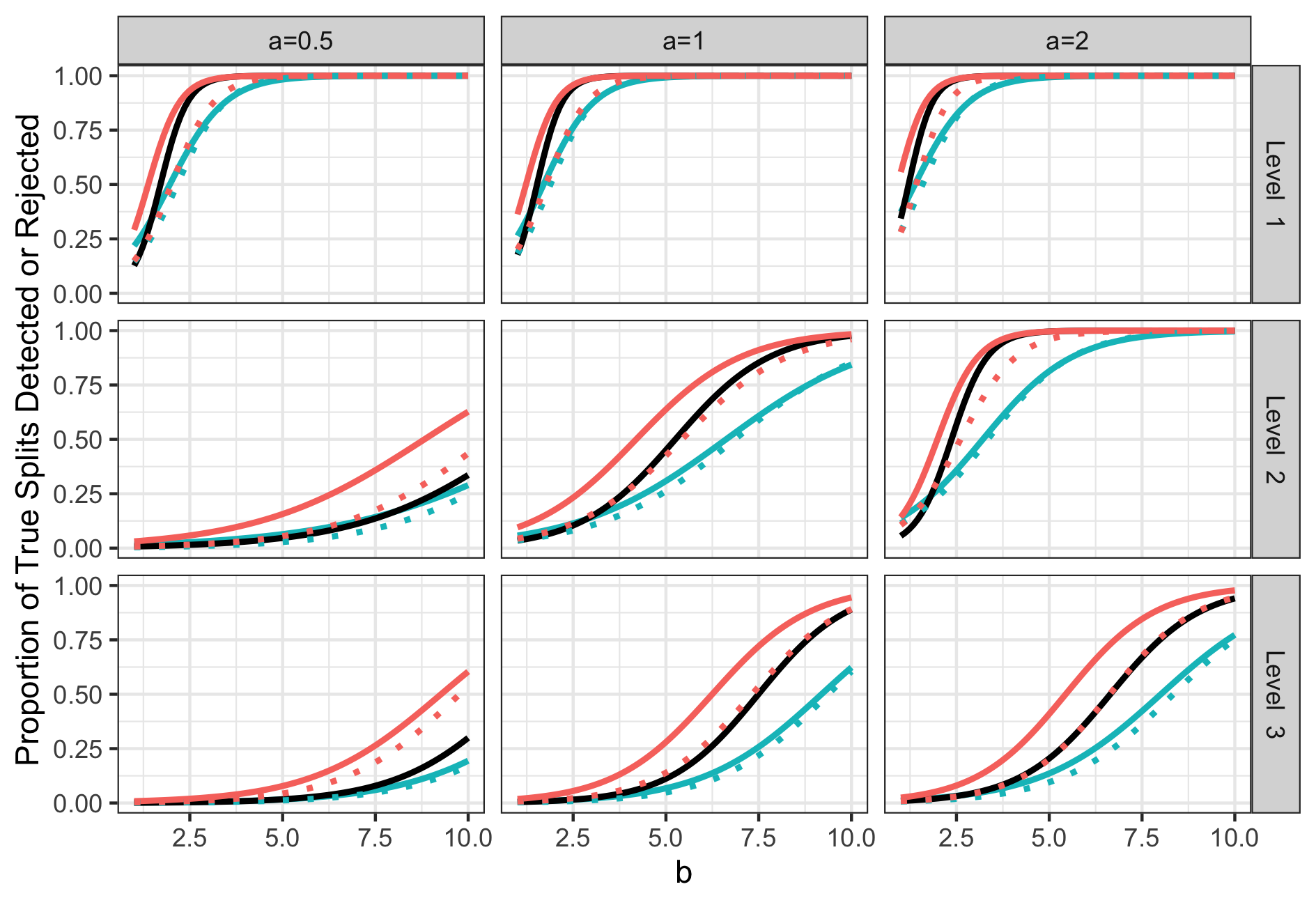}	
\caption{Proportion of true splits detected (solid lines) and rejected (dotted lines) for CART with selective $Z$-tests (pink), CTree (black), and CART with sample splitting (blue) across different settings of the data generating mechanism, stratified by level in tree. As CTree only makes a split if the p-value is less than 0.05, the proportion of detections equals the proportion of rejections. }

\label{fig_mainPowerres}
\end{figure}

\subsection{Coverage of Confidence Intervals for $\nu_{sib}^\T \mu$ and $\nu_{reg}^\T \mu$}
\label{subsubsec_CIcompare}

We generate 500 datasets for each $(a,b) \in \{0.5,1,2\} \times  \{0,\ldots,10\}$ to evaluate the 
coverage of 95\% confidence intervals constructed using naive $Z$-methods, selective $Z$-methods, and sample splitting. CTree is omitted from these comparisons because it does not provide confidence intervals. We say that the interval covers the truth if it contains $\nu^\T \mu$, where  $\nu$ is defined as in \eqref{eq_nusib} (for a particular split) or \eqref{eq_nureg} (for a particular region). Table~\ref{table_maincoverageresults} shows the proportion of each type of interval that covers the truth, aggregated across values of $a$ and $b$. The selective $Z$-intervals attain correct coverage of 95\%, while the naive $Z$-intervals do not.

  It may come as a surprise that sample splitting does not attain correct coverage. Recall that $\nu$ from \eqref{eq_nusib} or \eqref{eq_nureg} is an $n$-vector that contains entries for all observations in both the training set and the test set.  Thus,  $\nu^\T \mu$ involves the true mean among both training and test set observations in a given region or pair of regions. 
   By contrast, sample splitting attains correct coverage for a different parameter involving the true means of only the test observations that fall within a given region or pair of regions.

\subsection{Width of Confidence Intervals}
\label{subsubsec_width}

Figure~\ref{fig:widththings}(a) illustrates that our selective $Z$-intervals for $\nu_{reg}^\T \mu$ can be extremely wide when $b$ is small, particularly for regions located at deeper levels in the tree. For each tree that we build and for levels 1, 2, and 3, we compute the adjusted Rand Index \citep{hubert1985comparing} between the true tree (truncated at the appropriate level) and the estimated tree (truncated at the same level). Figure~\ref{fig:widththings}(b) shows that our selective confidence intervals can be extremely wide when this adjusted Rand Index is small, particularly at deeper levels of the tree. 

When $b$ is small and the adjusted Rand Index is small, the trees built by CART tend to be unstable, in the sense that small perturbations to the data affect the fitted tree. In this setting, the sample statistics $\nu_{reg}^\T y$ fall very close to the boundary of the truncation set. See \cite{kivaranovic2020length} for a discussion of why wide confidence intervals can arise in these settings. The great width of our confidence intervals reflects the uncertainty about the mean response within each region due to the instability of the tree-fitting procedure.

\begin{table}
\def~{\hphantom{0}}
\tbl{
Proportion of 95\% confidence intervals containing the true parameter, aggregated over all trees fit to the 5,500 datasets generated with $(a,b) \in \{0.5,1,2\} \times \{1,\ldots,10\}$
}{
\begin{tabular}{ccccccccc}
&& \multicolumn{3}{c}{Parameter $\nu^\T_{reg}{\mu}$} && \multicolumn{3}{c}{Parameter $\nu^\T_{sib}{\mu}$}  \\[5pt]
Level  && Selective $Z$ & Naive $Z$& Sample Splitting && Selective $Z$ & Naive $Z$ & Sample Splitting \\[5pt]
  1 && 0.951  & 0.889 & 0.918 && 0.948  & 0.834 & 0.915 \\
 2 && 0.950  & 0.645 &  0.921 && 0.951  & 0.410 & 0.917 \\
 3 && 0.951 & 0.711 & 0.921  && 0.950  & 0.550 & 0.921 
\end{tabular}}
\label{table_maincoverageresults}
\end{table}

\begin{figure}[!h]
(a) \hspace{290pt} (b) \\\includegraphics[width=10.5cm, height=3.7cm]{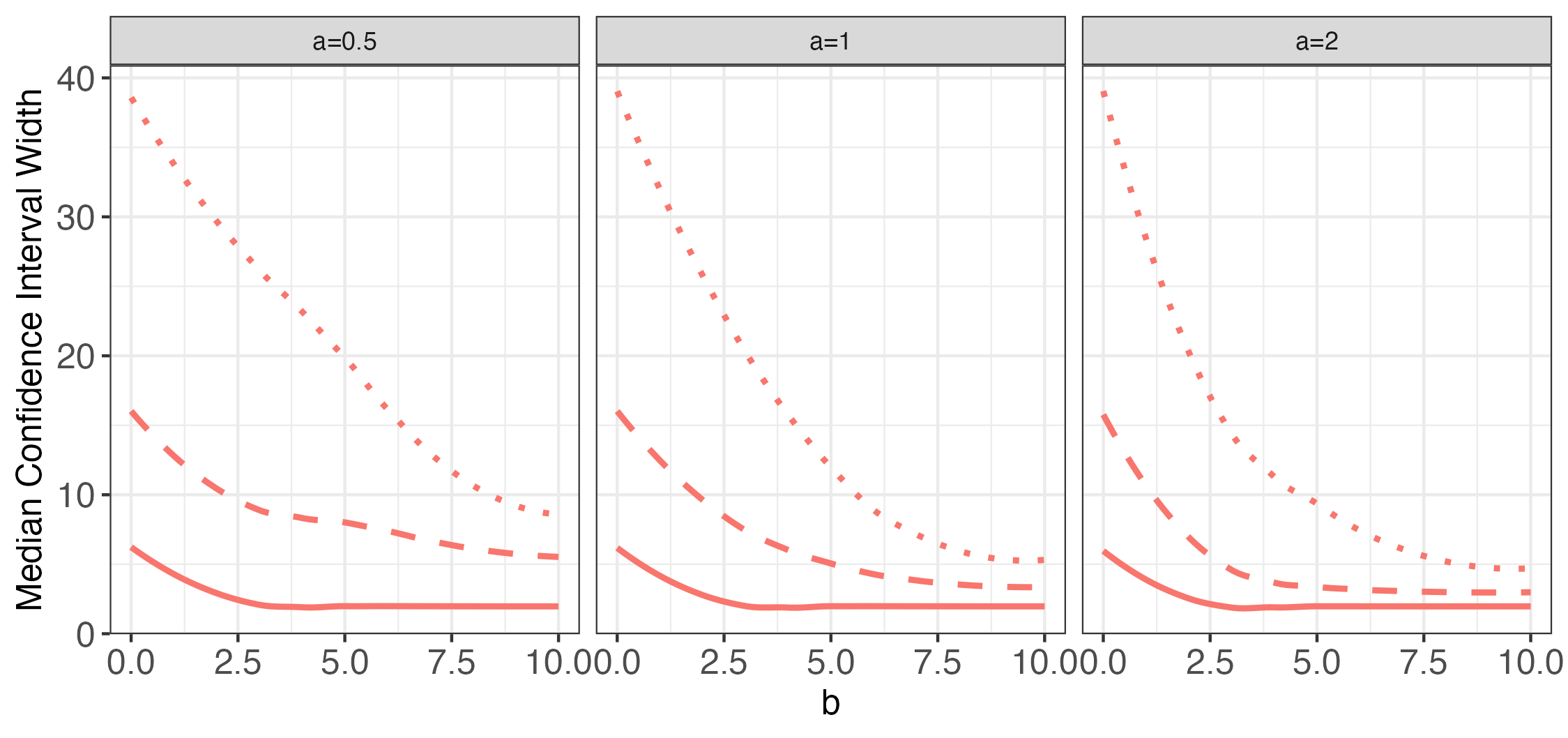}
\includegraphics[width=4cm, height=3.5cm]{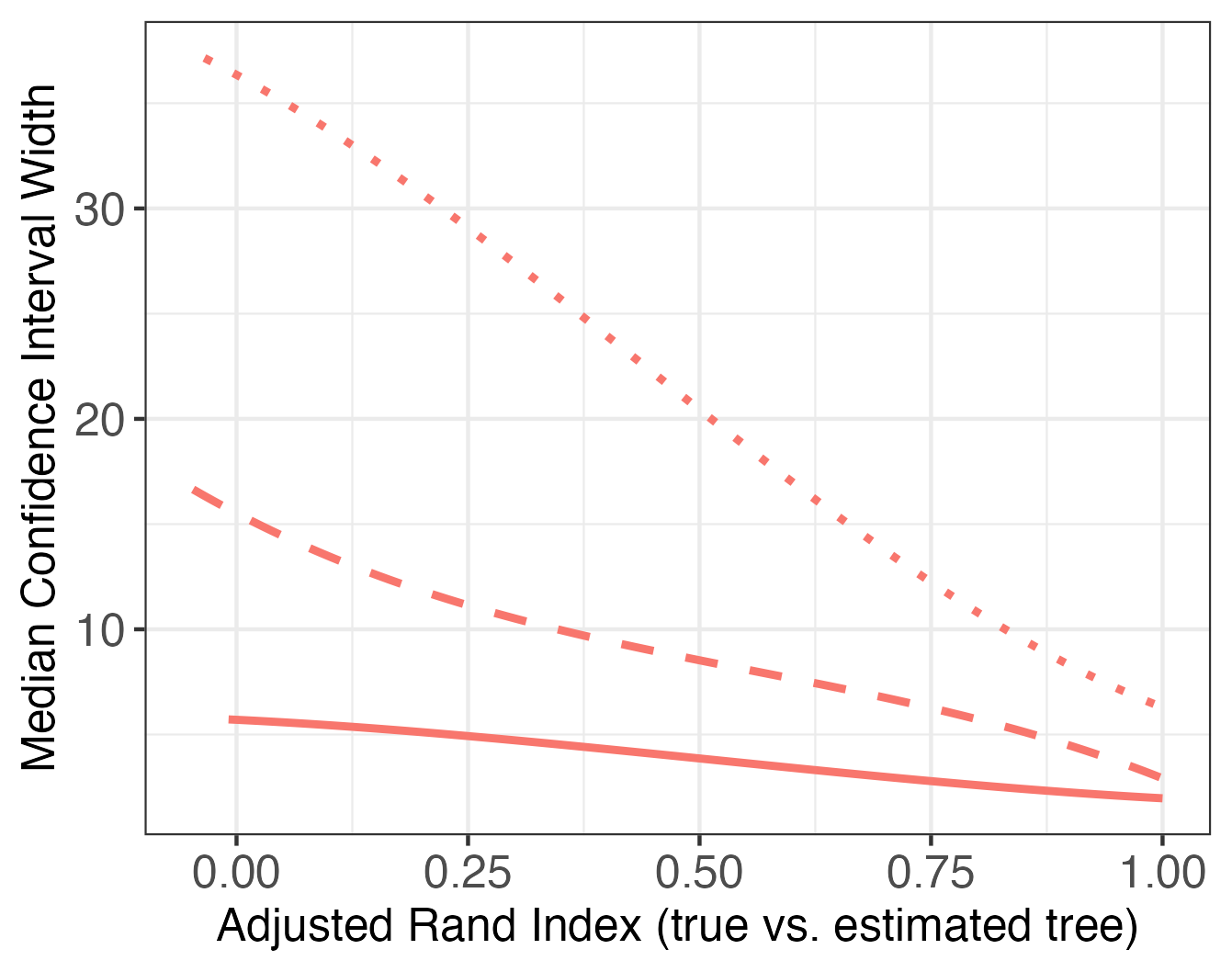}
\caption{The median width of the selective $Z$-intervals for parameter $\nu_{reg}^\T \mu$ for regions at levels one (solid), two (dashed), and three (dotted) of the tree. Similar results hold for parameter $\nu_{sib}^\T \mu$. Panel (a) breaks results down by the parameters $a$ and $b$, whereas panel (b) aggregates results across values of parameters $a$ and $b$, and displays them as a function of the adjusted Rand Index between the true and estimated trees. 
}
\label{fig:widththings}	
\end{figure}

\subsection{Results with Unknown $\sigma$}
\label{subsec:unknownvar}

Thus far, we have assumed that $\sigma$ is known. In this section, we compare the following three versions of the selective $Z$-methods that plug different values of $\sigma$ into the truncated normal CDF when computing p-values and confidence intervals:
\begin{enumerate}
\item $\sigma$: We plug in the true value of $\sigma$, as in Sections~\ref{subsubsec_Type1}--\ref{subsubsec_width}. 
\item $\hat{\sigma}_{\text{cons}}$: We plug in $\hat{\sigma}_{\text{cons}} = \sqrt{(n-1)^{-1} \sum \limits_{i=1}^n (y_i - \bar{y})^2}$, where $\bar{y} = n^{-1} \sum_{i=1}^n y_i$. 
\item $\hat{\sigma}_{\text{SSE}}:$ Let $\mathcal{T} = \left| \term\left( \mathbb{R}^p, \tree^\lambda(y) \right) \right|$ be the number of terminal regions in $\tree^\lambda(y)$. We plug in $\hat{\sigma}_{\text{SSE}} =\sqrt{\left(n -  \mathcal{T}\right)^{-1}\sum_{i=1}^n (y_i - \hat{y}_i)^2}$, where $\hat{y}_i$ is the predicted value for the $i$th observation given by $\textsc{Tree}^\lambda(y)$. 
\end{enumerate}
It is straightforward to show that $E[\hat{\sigma}_{\text{cons}}^2] \geq \sigma^2$, for any value of $E[y] = \mu$. Thus, %as explained in \cite{gao2020selective}, 
we expect this estimate to lead to conservative inference. On the other hand, $\hat{\sigma}_{\text{SSE}}^2$ can be made arbitrarily small by making the fitted tree arbitrarily deep, and so we expect inference based on this estimate to be anti-conservative if the fitted CART tree is large. 

Figure~\ref{fig_type_1_est} shows the distribution of p-values from testing $H_0: \nu_{sib}^T \mu = 0$ with the three versions of the selective $Z$-test under the data generating mechanism described in Section 5.1, with $a = b = 0$. In this setting, $\nu_{sib}^T \mu = 0$ holds for all splits in all trees. We see almost no difference between the three versions of the selective $Z$-test. 
In this global null setting, $E[\hat{\sigma}_{\text{cons}}^2] = \sigma^2$. Furthermore, the empirical bias of $\hat{\sigma}_{\text{SSE}}^2$ is small because the trees we grow are not particularly large; as in the rest of Section~\ref{section_simstudy}, we build trees to a maximum depth of 3 and prune with $\lambda=200$. 

\begin{figure}
\centering 
\includegraphics[width=0.8\textwidth]{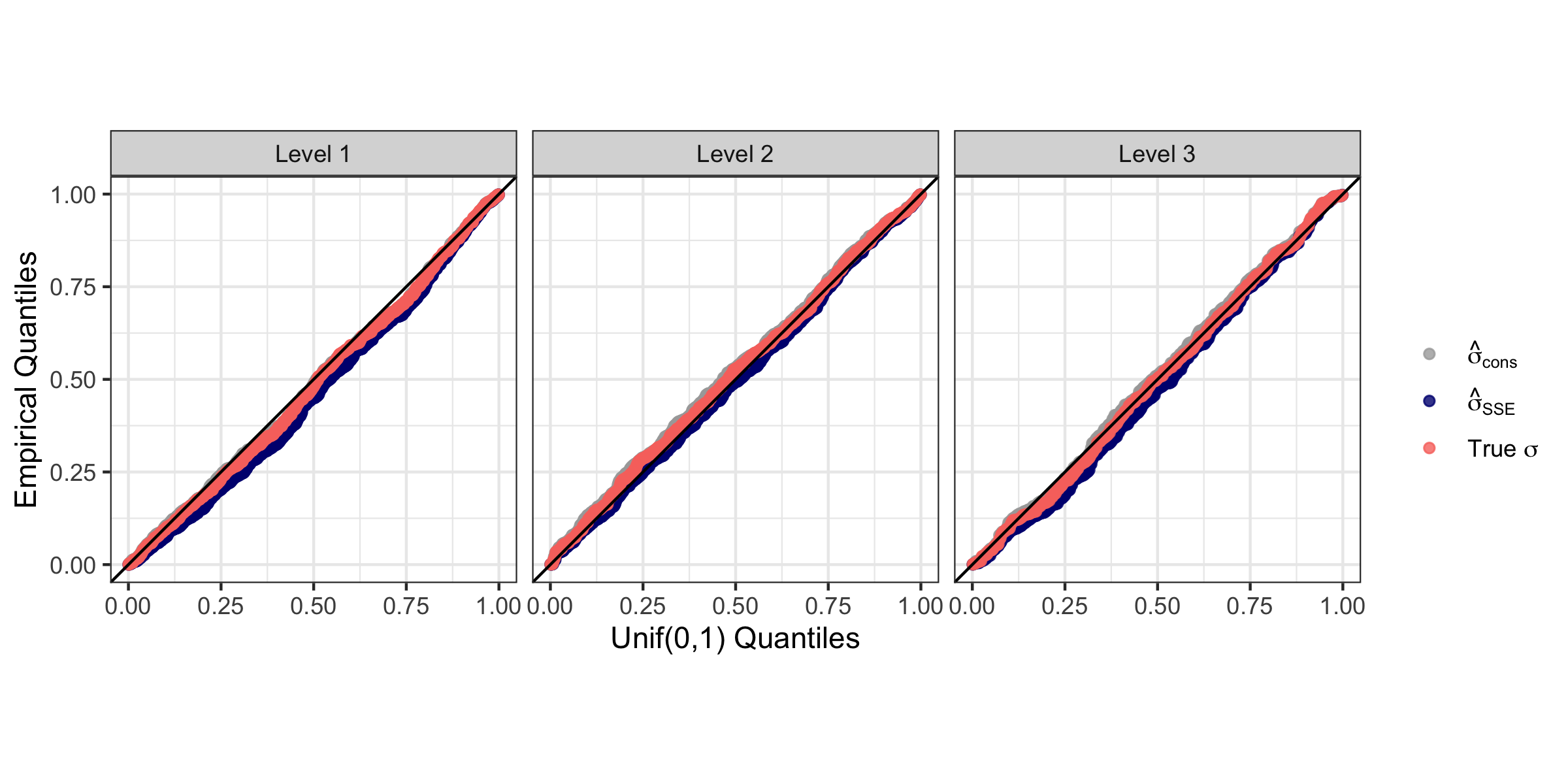}
\caption{
QQ plots of the p-values from testing $H_0: \nu_{sib}^T \mu = 0$ when $\mu = 0_n$ using the selective $Z$-test with three different values plugged in to the truncated normal CDF for $\sigma$. The p-values are stratified by the level of the regions in the fitted tree.
}
\label{fig_type_1_est}
\end{figure}

Figure~\ref{fig_power} displays the proportion of true splits detected and the proportion of true splits detected and rejected, as defined in Section~\ref{subsubsec_Power}, for the three versions of the selective $Z$-test when data is generated as in Section~\ref{subsubsec_Power}. For simplicity, we only show the setting where $a=1$. All three methods detect the same proportion of true splits, because they all perform inference on the same CART trees. The proportion of splits detected and rejected is very similar for $\sigma$ and $\hat{\sigma}_{\text{SSE}}$ because $\hat{\sigma}_{\text{SSE}}$ is a very good estimator for $\sigma$ in this setting. While $\hat{\sigma}_{\text{cons}}$ performs reasonably when $b$ is small, it severely overestimates $\sigma$ and thus has low power when $b$ is large.

\begin{figure}[H]
\centering 
\includegraphics[width=0.8\textwidth]{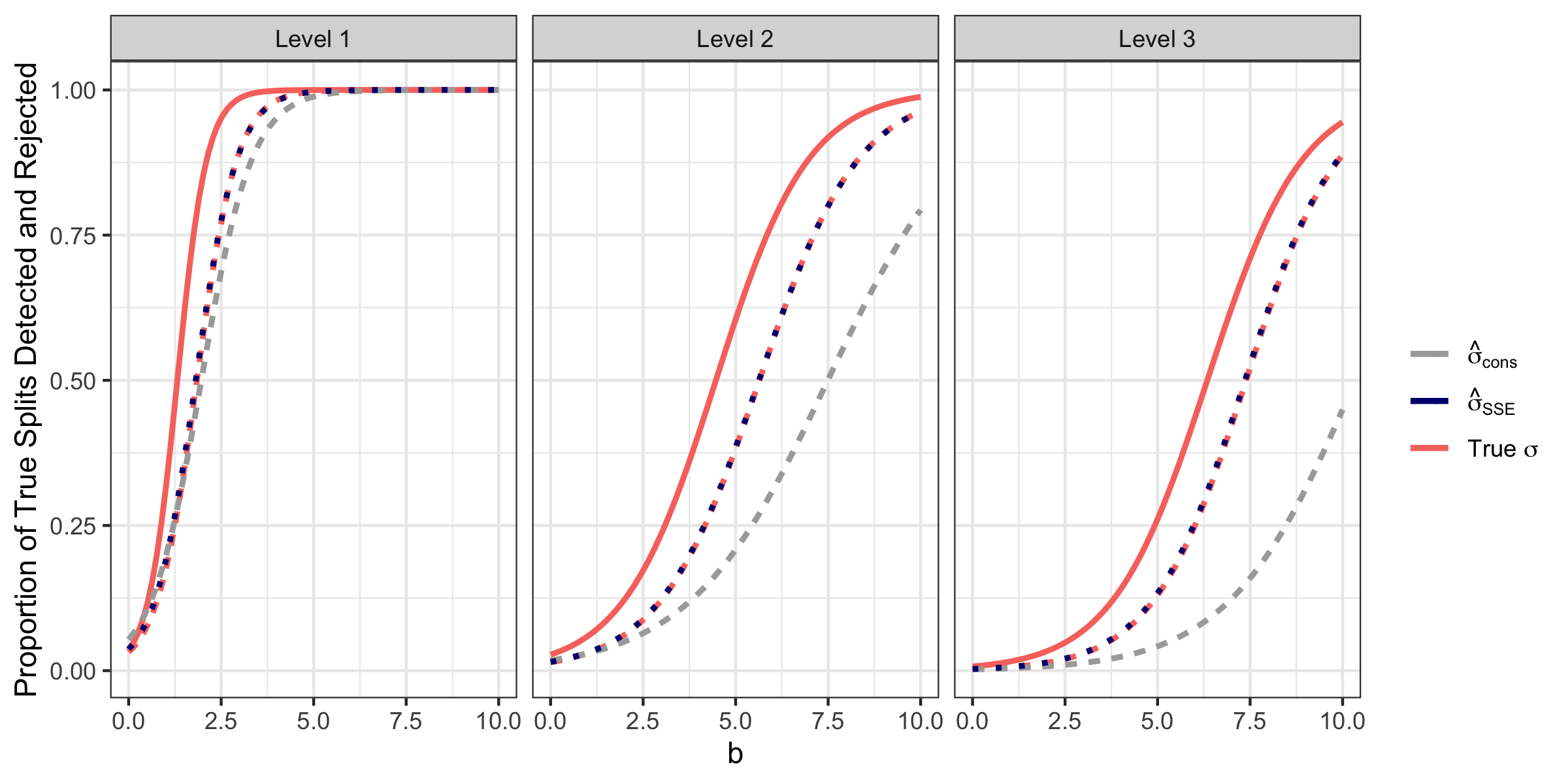}
\caption{Proportion of true splits detected (solid lines) and rejected (dotted lines) for CART with the three versions of the selective Z-test. The results are stratified by level in tree.}
\label{fig_power}
\end{figure}

Table~\ref{tab_coverage} displays confidence intervals for $\nu_{sib}^\T \mu$ and $\nu_{reg}^\T \mu$ for the three versions of the selective $Z$-intervals, where data is generated as in Section~\ref{subsubsec_CIcompare}. As expected, $\hat{\sigma}_{\text{cons}}$ leads to slight over-coverage and $\hat{\sigma}_{\text{SSE}}$ leads to slight under-coverage. 

\begin{table}[H]
\center
\begin{tabular}{ccccccccc}
&& \multicolumn{3}{c}{Parameter $\nu^\T_{reg}{\mu}$} && \multicolumn{3}{c}{Parameter $\nu^\T_{sib}{\mu}$}  \\[5pt]
Level  && $\sigma$ & $\hat{\sigma}_{\text{cons}}$ & $\hat{\sigma}_{\text{SSE}}$ && $\sigma$ & $\hat{\sigma}_{\text{cons}}$ & $\hat{\sigma}_{\text{SSE}}$ \\[5pt]
1 && 0.95 & 0.98 & 0.95 && 0.95 & 0.98  & 0.94 \\
2 && 0.95 & 0.97 & 0.94 && 0.95 & 0.97 & 0.94 \\
3 && 0.95 & 0.96 & 0.94 & & 0.95 & 0.96 & 0.94 \\
\end{tabular}
\caption{Proportion of 95\% confidence intervals containing the true parameter, aggregated over all trees fit to the 5,500 datasets generated with $(a,b) \in \{0.5,1,2\} \times \{1,\ldots,10\}$}
\label{tab_coverage}	
\end{table}

In this section, we have seen that when trees are not grown overly large, plugging in $\hat{\sigma}_{\text{SSE}}$ leads to approximate selective Type 1 error control, approximately correct selective coverage, and good power. Unfortunately, providing theoretical guarantees for our procedures when using $\hat{\sigma}_{\text{SSE}}$ would be quite difficult, as the estimator is anti-conservative and depends on the output of CART. Providing theoretical guarantees for our procedures under $\hat{\sigma}_{\text{cons}}$ is more straightforward, using ideas from \cite{gao2020selective}, \cite{chen2022selective}, and \cite{tibshirani2018uniform}. However, as shown in Figure~\ref{fig_power}, selective $Z$-tests based on $\hat{\sigma}_{\text{cons}}$ can have very low power. One promising avenue of future work involves 
providing theoretical guarantees in the regression tree setting for estimators that are less conservative than $\hat{\sigma}_{\text{cons}}$.

\begin{figure}[!h]
\begin{minipage}[t]{0.6\textwidth}
\subcaption*{}
\includegraphics[height=7cm]{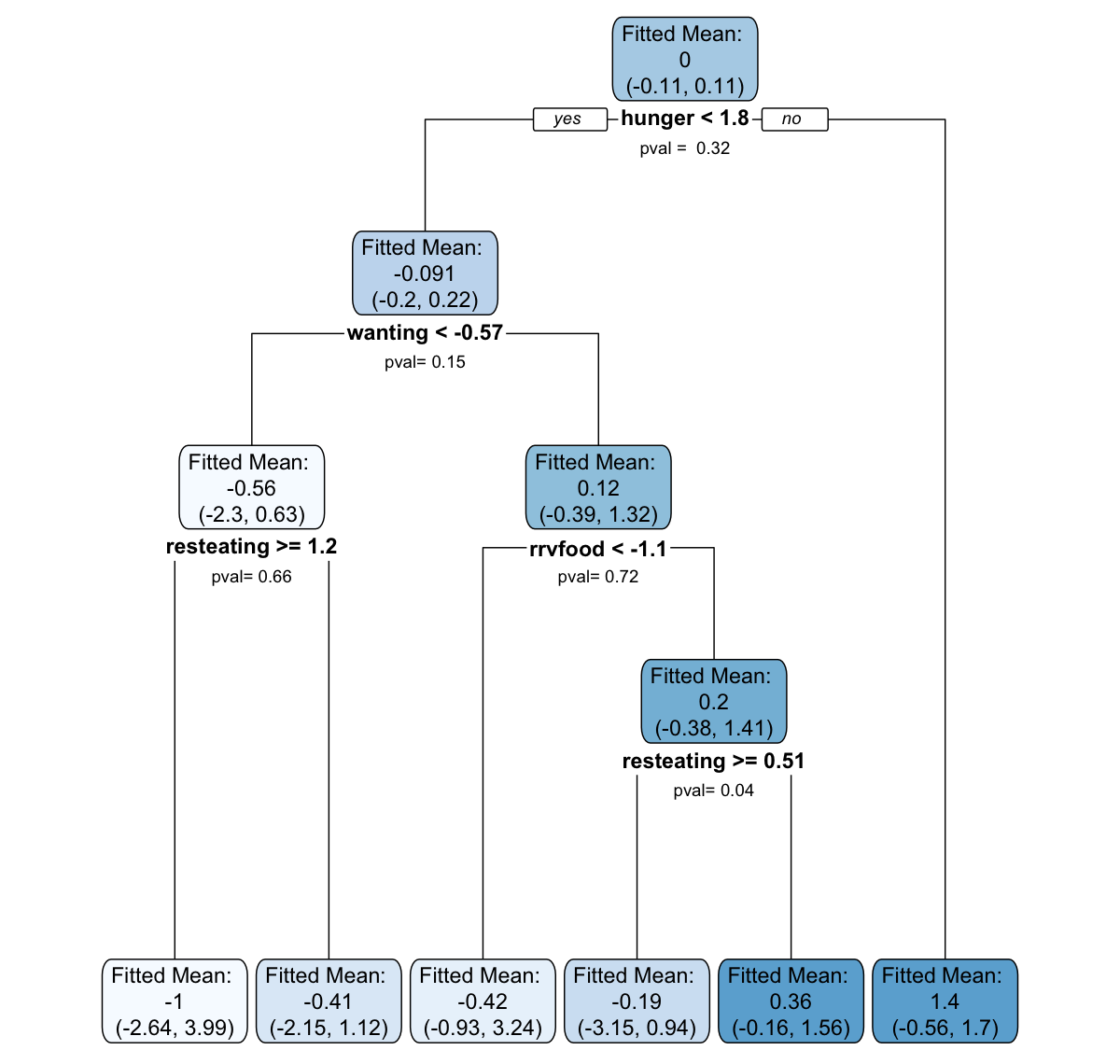}
\end{minipage}\hfill
\begin{minipage}[t]{6.5cm}
\subcaption*{}
\includegraphics[height=3.8cm, width=6.3cm]{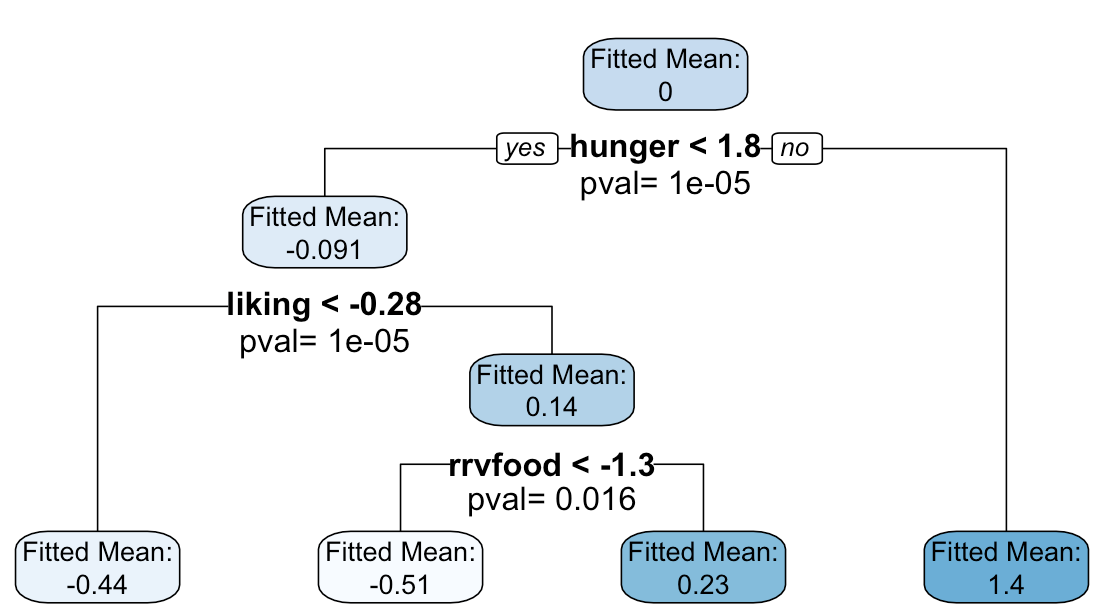}
\vspace{-2mm}
\subcaption*{}
\includegraphics[height=3.5cm, width=6.5cm]{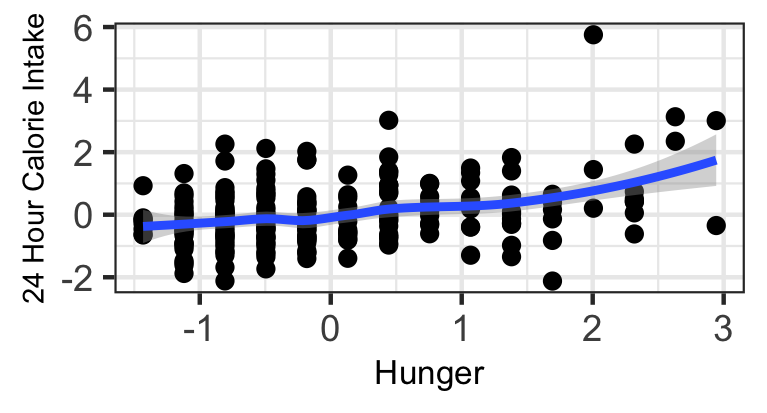}
\end{minipage}
\caption{\emph{Left: }A CART tree fit to the Box Lunch Study data. Each split has been labeled with a p-value \eqref{def_mainpval}, and each region has been labeled with a confidence interval \eqref{eq:branchCI}. The shading of the nodes indicates the average response values (white indicates a very small value and dark blue  a very large value). \emph{Top right:} A CTree fit to the Box Lunch Study data. \emph{Bottom right:} A scatterplot showing the relationship between the covariate hunger and the response.}
\label{fig_BLStrees}
\end{figure}

\section{An Application to the Box Lunch Study}
\label{section_realData}

\cite{venkatasubramaniam2017decision} compare CART and CTree \citep{hothorn2006unbiased} within the context of epidemiological studies. They conclude that CTree is preferable to CART because it provides p-values for each split, even though CART  has higher predictive accuracy. Since our framework provides p-values for each split in a CART tree, we revisit their analysis of the Box Lunch Study, a clinical trial studying the impact of portion control interventions on 24-hour caloric intake. We consider 
identifying subgroups of study participants with baseline differences in 24-hour caloric intake on the basis of scores from an 
assessment that quantifies constructs such as hunger, liking, the relative reinforcement of food (\verb+rrvfood+), and restraint (\verb+resteating+). 

We exactly reproduce the trees presented in Figures 1 and 2 of \cite{venkatasubramaniam2017decision} by building a CTree using \texttt{partykit} and a CART tree using \texttt{rpart} on the Box Lunch Study data provided in the \verb=R= package \texttt{visTree} \citep{venkat2018package}. We apply our selective inference framework to compute p-values  \eqref{def_mainpval} for each split in CART,  and confidence intervals  \eqref{eq:branchCI} for each region. In this section, we use $\hat{\sigma}_{\text{SSE}}$, defined in Section~\ref{subsec:unknownvar}, to estimate the error variance. The results are shown in Figure~\ref{fig_BLStrees}. 

Both CART and CTree choose 
\texttt{hunger$<$1.8} as the first split. 
For this split, our selective $Z$-test reports a large p-value of 0.44, while CTree reports a p-value less than 0.001. The conflicting p-values are explained by the 
difference in null hypotheses. 
CTree finds strong evidence against the null of no linear association between \texttt{hunger} and caloric intake. By contrast, our selective framework for CART does not find strong evidence for a difference between  mean caloric intake of participants with \texttt{hunger$<$1.8} and those with \texttt{hunger$\geq$1.8}. We see from the bottom right of Figure~\ref{fig_BLStrees} that while there is evidence of a linear relationship between \texttt{hunger} and caloric intake, there is less evidence of a difference in means across the particular split $\texttt{hunger$=$1.8}$. Given that the goal of \cite{venkatasubramaniam2017decision} is to ``identify population subgroups that are relatively homogeneous with respect to an outcome", the p-value resulting from our selective framework is more natural than the p-value output by CTree, since the former relates directly to the subgroups formed by the split, whereas the latter does not take into account the location of the split point. In general, the left-hand panel of Figure~\ref{fig_BLStrees} shows that the subgroups of patients identified by CART are not significantly different from one another. This is an important finding that would be missed without our selective inference framework. Furthermore, unlike CTree, our framework provides confidence intervals for the mean response in each subgroup. 

An alternative analysis using $\hat{\sigma}_{\text{cons}}$, defined in Section~\ref{subsec:unknownvar}, is provided in Appendix~\ref{appendix_bls}, and leads to similar findings.

\section{Discussion} 
\label{section_disc}

Our framework relies on the assumption that $Y \sim N_n(\mu, \sigma^2 I)$, with $\sigma^2$ known. In Section~\ref{subsec:unknownvar}, we showed strong empirical performance when the variance is unknown and $\sigma^2$ is estimated. In this section, we briefly comment on the assumptions of spherical variance and normally distributed data.

It natural to wonder whether the assumption that $Y \sim N_n(\mu, \sigma^2 I)$ can be relaxed to the assumption that $Y \sim N_n(\mu, \Sigma)$, with $\Sigma$ known. Following the work of \cite{lee2016exact}, the results in Section~\ref{section_frame} extend to the setting where $Y \sim N_n(\mu, \Sigma)$ if we:
\begin{enumerate}
\item Modify \eqref{def_mainpval} and \eqref{eq:pvalreg} to condition on the event 
$
\left \{\left(I_n - \frac{\Sigma \nu \nu^T}{\nu^T \Sigma \nu}\right) Y = \left(I_n - \frac{\Sigma \nu \nu^T}{\nu^T \Sigma \nu}\right) y \right \}$
rather than the event $\{ \mathcal{P}_\nu^\perp Y =  \mathcal{P}_\nu^\perp y \}$, where $\nu = \nu_{sib}$ in the case of (8) and $\nu = \nu_{reg}$ in the case of (15). 
\item Replace all instances of the perturbation $y'(\phi,\nu)$, defined in Theorem~\ref{theorem_1jewell}, with the perturbation 
$y''(\phi,\nu) = \left(I_n - \frac{\Sigma \nu \nu^T}{\nu^T \Sigma \nu}\right) y + \frac{\Sigma \nu}{\nu^T \Sigma \nu} \phi.$
\end{enumerate}
Unfortunately, the modified perturbation $y''(\phi, \nu)$ does not satisfy Condition~\ref{cond_nuprop} in Section~\ref{subsec_growingcalculations} when $\Sigma \neq \sigma^2 I_n$, and so many of the results of Section~\ref{section_computing} do not extend to this non-spherical setting. Future work could explore how to efficiently compute the conditioning set in this non-spherical setting.

Furthermore, our framework assumes a normally-distributed response variable.
CART is commonly used for classification, survival \citep{segal1988regression}, and treatment effect estimation in causal inference \citep{athey2016recursive}. While the idea of conditioning on a selection event to control the selective Type 1 error rate applies regardless of the distribution of the response, our Theorem~\ref{theorem_1jewell} and Theorem~\ref{theorem_1modified}, and the resulting computational results, relied on normality of $Y$. In the absence of this assumption, exactly characterizing the conditioning set and the distribution of the test statistic requires further investigation.

We show in Appendix~\ref{appendix:non-normal} that our selective $Z$-tests approximately control the selective Type 1 error when the normality assumption is violated. \cite{tian2017asymptotics} and \cite{tibshirani2018uniform} establish conditions under which selective p-values for linear regression (derived under the assumption of normality) will be asymptotically uniformly distributed under non-normality. Thus suggests the possibility of developing asymptotic theory for our proposed selective $Z$-tests under violations of normality.

A reviewer pointed out similarities between the problem of testing significance of the first split in the tree and significance testing for a single changepoint, as in \cite{bhattacharya1994some}. Building on this connection may provide an avenue for future work.

A software implementation of the methods in this paper is available in the \texttt{R} package \texttt{treevalues}, at \texttt{https://github.com/anna-neufeld/treevalues}.

\section*{Acknowledgements}
Daniela Witten and Anna Neufeld were supported by the National Institutes of Health and the Simons Foundation. Lucy Gao was supported by the Natural Sciences and Engineering Research Council of Canada Discovery Grants program.

\bibliography{tree_values}

\begin{thebibliography}{30}
\providecommand{\natexlab}[1]{#1}
\providecommand{\url}[1]{\texttt{#1}}
\expandafter\ifx\csname urlstyle\endcsname\relax
  \providecommand{\doi}[1]{doi: #1}\else
  \providecommand{\doi}{doi: \begingroup \urlstyle{rm}\Url}\fi

\bibitem[Athey and Imbens(2016)]{athey2016recursive}
Susan Athey and Guido Imbens.
\newblock Recursive partitioning for heterogeneous causal effects.
\newblock \emph{Proceedings of the National Academy of Sciences}, 113\penalty0
  (27):\penalty0 7353--7360, 2016.

\bibitem[Bhattacharya(1994)]{bhattacharya1994some}
PK~Bhattacharya.
\newblock Some aspects of change-point analysis.
\newblock \emph{Lecture Notes-Monograph Series}, pages 28--56, 1994.

\bibitem[Bourgon(2009)]{bourgon2009intervals}
Richard Bourgon.
\newblock \emph{Overview of the intervals package}, 2009.
\newblock {R} Vignette, URL
  \url{https://cran.r-project.org/web/packages/intervals/vignettes/intervals_overview.pdf}.

\bibitem[Breiman et~al.(1984)Breiman, Friedman, Stone, and
  Olshen]{breiman1984classification}
Leo Breiman, Jerome Friedman, Charles~J Stone, and Richard~A Olshen.
\newblock \emph{Classification and regression trees}.
\newblock CRC Press, 1984.

\bibitem[Chen and Bien(2020)]{chen2019valid}
Shuxiao Chen and Jacob Bien.
\newblock Valid inference corrected for outlier removal.
\newblock \emph{Journal of Computational and Graphical Statistics}, 29\penalty0
  (2):\penalty0 323--334, 2020.

\bibitem[Chen and Witten(2022)]{chen2022selective}
Yiqun~T Chen and Daniela~M Witten.
\newblock Selective inference for k-means clustering.
\newblock \emph{arXiv preprint arXiv:2203.15267}, 2022.

\bibitem[Fithian et~al.(2014)Fithian, Sun, and Taylor]{fithian2014optimal}
William Fithian, Dennis Sun, and Jonathan Taylor.
\newblock Optimal inference after model selection.
\newblock \emph{arXiv preprint arXiv:1410.2597}, 2014.

\bibitem[Gao et~al.(2020)Gao, Bien, and Witten]{gao2020selective}
Lucy~L Gao, Jacob Bien, and Daniela Witten.
\newblock Selective inference for hierarchical clustering.
\newblock \emph{arXiv preprint arXiv:2012.02936}, 2020.

\bibitem[Hothorn and Zeileis(2015)]{hothorn2015partykit}
Torsten Hothorn and Achim Zeileis.
\newblock {partykit: A modular toolkit for recursive partytioning in R}.
\newblock \emph{The Journal of Machine Learning Research}, 16\penalty0
  (1):\penalty0 3905--3909, 2015.

\bibitem[Hothorn et~al.(2006)Hothorn, Hornik, and Zeileis]{hothorn2006unbiased}
Torsten Hothorn, Kurt Hornik, and Achim Zeileis.
\newblock Unbiased recursive partitioning: A conditional inference framework.
\newblock \emph{Journal of Computational and Graphical Statistics}, 15\penalty0
  (3):\penalty0 651--674, 2006.

\bibitem[Hubert and Arabie(1985)]{hubert1985comparing}
Lawrence Hubert and Phipps Arabie.
\newblock Comparing partitions.
\newblock \emph{Journal of Classification}, 2\penalty0 (1):\penalty0 193--218,
  1985.

\bibitem[Hyun et~al.(2021)Hyun, Lin, G'Sell, and Tibshirani]{hyun2018post}
Sangwon Hyun, Kevin~Z Lin, Max G'Sell, and Ryan~J Tibshirani.
\newblock Post-selection inference for changepoint detection algorithms with
  application to copy number variation data.
\newblock \emph{Biometrics}, pages 1--13, 2021.

\bibitem[Jewell et~al.(2022)Jewell, Fearnhead, and Witten]{jewell2019testing}
Sean Jewell, Paul Fearnhead, and Daniela Witten.
\newblock Testing for a change in mean after changepoint detection.
\newblock \emph{Journal of the Royal Statistical Society, Series B}, 2022.

\bibitem[Kivaranovic and Leeb(2021)]{kivaranovic2020length}
Danijel Kivaranovic and Hannes Leeb.
\newblock On the length of post-model-selection confidence intervals
  conditional on polyhedral constraints.
\newblock \emph{Journal of the American Statistical Association}, 116\penalty0
  (534):\penalty0 845--857, 2021.

\bibitem[Lee et~al.(2016)Lee, Sun, Sun, Taylor, et~al.]{lee2016exact}
Jason~D Lee, Dennis~L Sun, Yuekai Sun, Jonathan~E Taylor, et~al.
\newblock Exact post-selection inference, with application to the lasso.
\newblock \emph{The Annals of Statistics}, 44\penalty0 (3):\penalty0 907--927,
  2016.

\bibitem[Liu et~al.(2018)Liu, Markovic, and Tibshirani]{liu2018more}
Keli Liu, Jelena Markovic, and Robert Tibshirani.
\newblock More powerful post-selection inference, with application to the
  lasso.
\newblock \emph{arXiv preprint arXiv:1801.09037}, 2018.

\bibitem[Loh(2014)]{loh2014fifty}
Wei-Yin Loh.
\newblock Fifty years of classification and regression trees.
\newblock \emph{International Statistical Review}, 82\penalty0 (3):\penalty0
  329--348, 2014.

\bibitem[Loh et~al.(2016)Loh, Fu, Man, Champion, and Yu]{loh2016identification}
Wei-Yin Loh, Haoda Fu, Michael Man, Victoria Champion, and Menggang Yu.
\newblock Identification of subgroups with differential treatment effects for
  longitudinal and multiresponse variables.
\newblock \emph{Statistics in medicine}, 35\penalty0 (26):\penalty0 4837--4855,
  2016.

\bibitem[Loh et~al.(2019)Loh, Man, and Wang]{loh2019subgroups}
Wei-Yin Loh, Michael Man, and Shuaicheng Wang.
\newblock Subgroups from regression trees with adjustment for prognostic
  effects and postselection inference.
\newblock \emph{Statistics in Medicine}, 38\penalty0 (4):\penalty0 545--557,
  2019.

\bibitem[Ripley(1996)]{ripley1996pattern}
Brian~D Ripley.
\newblock \emph{Pattern recognition and neural networks}.
\newblock Cambridge University Press, 1996.

\bibitem[Segal(1988)]{segal1988regression}
Mark~Robert Segal.
\newblock Regression trees for censored data.
\newblock \emph{Biometrics}, 44\penalty0 (1):\penalty0 35--47, 1988.

\bibitem[Taylor and Tibshirani(2015)]{taylor2015statistical}
Jonathan Taylor and Robert~J Tibshirani.
\newblock Statistical learning and selective inference.
\newblock \emph{Proceedings of the National Academy of Sciences}, 112\penalty0
  (25):\penalty0 7629--7634, 2015.

\bibitem[Therneau and Atkinson(2019)]{therneau2015package}
Terry Therneau and Beth Atkinson.
\newblock \emph{rpart: Recursive Partitioning and Regression Trees}, 2019.
\newblock R package version 4.1-15, available on CRAN.

\bibitem[Tian and Taylor(2017)]{tian2017asymptotics}
Xiaoying Tian and Jonathan Taylor.
\newblock Asymptotics of selective inference.
\newblock \emph{Scandinavian Journal of Statistics}, 44\penalty0 (2):\penalty0
  480--499, 2017.

\bibitem[Tian and Taylor(2018)]{tian2018selective}
Xiaoying Tian and Jonathan Taylor.
\newblock Selective inference with a randomized response.
\newblock \emph{The Annals of Statistics}, 46\penalty0 (2):\penalty0 679--710,
  2018.

\bibitem[Tibshirani et~al.(2016)Tibshirani, Taylor, Lockhart, and
  Tibshirani]{tibshirani2016exact}
Ryan~J Tibshirani, Jonathan Taylor, Richard Lockhart, and Robert Tibshirani.
\newblock Exact post-selection inference for sequential regression procedures.
\newblock \emph{Journal of the American Statistical Association}, 111\penalty0
  (514):\penalty0 600--620, 2016.

\bibitem[Tibshirani et~al.(2018)Tibshirani, Rinaldo, Tibshirani, and
  Wasserman]{tibshirani2018uniform}
Ryan~J Tibshirani, Alessandro Rinaldo, Rob Tibshirani, and Larry Wasserman.
\newblock Uniform asymptotic inference and the bootstrap after model selection.
\newblock \emph{The Annals of Statistics}, 46\penalty0 (3):\penalty0
  1255--1287, 2018.

\bibitem[Venkatasubramaniam and Wolfson(2018)]{venkat2018package}
Ashwini Venkatasubramaniam and Julian Wolfson.
\newblock \emph{visTree: Visualization of Subgroups for Decision Trees}, 2018.
\newblock R package version 0.8.1, available on CRAN.

\bibitem[Venkatasubramaniam et~al.(2017)Venkatasubramaniam, Wolfson, Mitchell,
  Barnes, JaKa, and French]{venkatasubramaniam2017decision}
Ashwini Venkatasubramaniam, Julian Wolfson, Nathan Mitchell, Timothy Barnes,
  Meghan JaKa, and Simone French.
\newblock Decision trees in epidemiological research.
\newblock \emph{{Emerging Themes in Epidemiology}}, 14\penalty0 (1):\penalty0
  11, 2017.

\bibitem[Wager and Walther(2015)]{wager2015adaptive}
Stefan Wager and Guenther Walther.
\newblock Adaptive concentration of regression trees, with application to
  random forests.
\newblock \emph{arXiv preprint arXiv:1503.06388}, 2015.

\end{thebibliography}

\appendix
\section{Comparison to \cite{loh2019subgroups}}
\label{appendix:loh}

\cite{loh2016identification} and \cite{loh2019subgroups} use regression trees to find subgroups of patients with similar treatment effects in clinical trials. 
They grow trees based on patient characteristics using a different algorithm than CART. Furthermore, they are interested in the mean treatment effect (which is a linear regression coefficient) within each terminal region of the tree, rather than the mean response within each region. 

One of the goals of \cite{loh2019subgroups} is to construct valid post-selection confidence intervals for the treatment effect within each terminal node. In this appendix, we show that their approach, when adapted to the setting of this paper, does not yield confidence intervals with nominal coverage. 

The basic idea of \cite{loh2019subgroups}, instantiated to our setting, is as follows. Suppose that $R_A \in \textsc{Tree}^\lambda(y)$, and define the vector $\nu_{reg}$ such that $(\nu_{reg})_i = 1_{(x_i \in R_A)}/\left\{ \sum_{i'=1}^n 1_{(x_{i'} \in R_A)}\right\}$, as in \eqref{eq_nureg}.  We know that the ``naive" Z-interval does not achieve nominal coverage, meaning that
\begin{equation}
\label{eq_naive}
\mathrm{Pr}\left( \nu_{reg}^\top \mu \in \left[ \nu_{reg}^\top y - z_{\alpha/2} \frac{\sigma}{\sqrt{\sum_{i=1}^n 1_{(x_i \in R_A)}}}, \nu_{reg}^\top y + z_{\alpha/2} \frac{\sigma}{\sqrt{\sum_{i=1}^n 1_{(x_i \in R_A)}}}\right] \right) < 1-\alpha. 
\end{equation}
This is because the ``multiplier" for the naive confidence interval, $z_{\alpha/2}$, is derived under the assumption that the region $R_A$ (or, equivalently, the vector $\nu_{reg}$), is fixed, rather than a function of the data. 

\cite{loh2019subgroups} observe that there exists some $\alpha' < \alpha$ such that 
\begin{equation}
\label{eq_alphaprime}
\mathrm{Pr}\left( \nu_{reg}^\top \mu \in \left[ \nu_{reg}^\top y - z_{\alpha'/2} \frac{\sigma}{\sum_{i=1}^n 1_{(x_i \in R_A)}}, \nu_{reg}^\top y + z_{\alpha'/2} \frac{\sigma}{\sum_{i=1}^n 1_{(x_i \in R_A)}}\right] \right) = 1-\alpha. 
\end{equation}
The value for $\alpha'$ for a given tree will depend on the number of split covariates $p$, the number of data points $n$, the depth of the tree, and the value of $\lambda$ used for tree pruning, among other considerations.
If we know how the data was generated, then we can check whether some value $\alpha'$ satisfies \eqref{eq_alphaprime} as follows:
\begin{enumerate}
\item Draw $B$ different simulated datasets $\{(X_b, y_b)\}_{b=1}^B$ from the same distribution (call this $F$) as the original data. For $b=1,\ldots,B$:
\begin{enumerate}
\item Build a tree using the simulated data $(X_b,y_b)$, using the same procedure and the same settings as in Step 1, and denote it $\tree^\lambda(y_b)$. 
\item For each terminal region $R \in \term\left(\tree^\lambda(y_b), \mathbb{R}^p\right)$ in the tree:
\begin{enumerate}
\item Construct a $(1-\alpha')$ naive $Z$-interval for the mean response in the region using $(X_b,y_b)$.
\item Check if each interval contains $\bar{\mu}_R$, the true mean for this region $R$. 
\end{enumerate}
\end{enumerate}
\item Compute the fraction of intervals in 1(b) that contain $\bar{\mu}_R$. 
\item If this value is $1-\alpha$, then we have found the correct value of $\alpha'$. If not, then we try a larger or smaller value of $\alpha'$. 
\end{enumerate}

We test this procedure in a very simple simulation study. We generate data $X_{ij} \sim N(0,1)$ and $y_i \sim N(0,1)$ for $i=1,\ldots,100$ and $j=1,\ldots,p$. In this simple setting, the true mean response for every region in every fitted tree is $0$. We carry out the procedure outlined above with $\alpha=0.1$. For two values of $p$ and for CART trees with 1, 2, and 3 levels, we create 1000 datasets and 1000 trees and report the empirical coverage of the intervals obtained using this ideal method, averaged over all nodes in all trees. The results, shown in Table~\ref{table_loh_results}, show that this ideal procedure procedure leads to intervals that achieve nominal coverage. 

Unfortunately, this ideal procedure is practically infeasible, as it requires the user to know the true distribution of the data $F$. Thus, in practice, \cite{loh2019subgroups} propose replacing $F$ by $\hat{F}$, the empirical distribution of the original data. This amounts to replacing the simulated datasets in Step 2 with bootstrapped datasets, and checking whether the naive $Z$-intervals in Step 1(b) contain $\bar{y}_R$ rather than $\bar{\mu}_R$. 

We can see why this is problematic in a very simple setting where we fit a tree with depth 1. If all observations have mean $0$, then a CART tree fit to a bootstrap sample of the data will nevertheless find regions $R_L$ and $R_R$ such that the sample mean value of $y_b$ within $R_L$ is negative and the sample mean value of $y_b$ within $R_R$ is positive. As $y_b$ and $y$ contain many overlapping observations, it is likely that the sample mean value of $y$ within $R_L$ is also negative and the sample mean value of $y$ within $R_R$ is also positive. In other words, because of the overlap between $y$ and $y_b$, the within-region sample means of $y_b$ are closer to the within-region sample means of $y$ than they are to the within-region population means. Thus, when we calibrate $\alpha'$ to cover the mean values of $y$ within various regions, we end up with under-coverage of the true population mean, as shown in Table~\ref{table_loh_results}. 

We see in Table~\ref{table_loh_results} that our selective inference framework approach enables valid inference in this setting, whereas a bootstrap procedure modeled after \cite{loh2019subgroups} does not. 

\begin{table}[H]
\centering
\begin{tabular}{cc|cc|cc|c}
&& \multicolumn{2}{c|}{Loh (ideal)} & \multicolumn{2}{c|}{Loh (bootstrap)} & Selective CIs  \\
p & Tree depth & Coverage & Average $\alpha'$ & Coverage & Average $\alpha'$ & Coverage \\
\hline
& 1 & 0.902 & 0.008 & 0.749 & 0.037 & 0.890  \\
2 & 2 & 0.905 & 0.004 & 0.695 & 0.038 & 0.904 \\
& 3 & 0.900 & 0.005 & 0.660 & 0.047 & 0.895 \\
\hline 
& 1 & 0.883 & 0.001 & 0.601 & 0.016 & 0.908 \\
20 & 2 & 0.900 & 0.00025 & 0.549 & 0.016 & 0.901 \\
 & 3 & 0.904 & 0.00015 & 0.543 & 0.022  & 0.905  \\
\end{tabular}	
\caption{Coverage of 90\% confidence intervals computed using three methods for the simple setting where $y_i \sim N(0,1)$ and $X_{ij} \sim N(0,1)$ for  for $i=1,\ldots,100$ and $j=1,\ldots,p$. Note that the ``Loh (ideal)" method can never be used in practice, as it requires knowledge of the true parameter. }
\label{table_loh_results}
\end{table}

\section{Proofs for Section~\ref{section_frame}}
\label{appendix:section3proofs}

\subsection{Proof of Theorem~\ref{theorem_1jewell}}
\label{appendix:theorem1proof}

Let $0 \leq \alpha \leq 1$. We start by proving the first statement in Theorem~\ref{theorem_1jewell}:
$$pr_{H_0}\left\{ p_{sib}(Y) \leq \alpha \mid R_A,R_B \text{ are siblings in } \tree^\lambda(Y) \right\} = \alpha.$$
This is a special case of Proposition 3 from \cite{fithian2014optimal}. It follows from the definition of $p_{sib}(Y)$ in \eqref{def_mainpval} that 
$$pr_{H_0} \left\{p_{sib}(Y) \leq \alpha \mid R_A,R_B \text{ are siblings in } \tree^\lambda(Y), \mathcal{P}_{\nu_{sib}}^\perp Y = \mathcal{P}_{\nu_{sib}}^\perp y \right\} = \alpha.$$
Therefore, applying the law of total expectation yields
\begin{align*}
&pr_{H_0}\left\{ p_{sib}(Y) \leq \alpha \mid R_A,R_B \text{ siblings in } \tree^\lambda(Y) \right\} \\
&= E_{H_0}\left[ 1_{\left\{p_{sib}(Y) < \alpha\right\}} \mid R_A,R_B \text{ are siblings in } \tree^\lambda(Y) \right] \\
&= E_{H_0}\biggl( E_{H_0}\left[ 1_{\{p_{sib}(Y) < \alpha\}} \mid R_A,R_B \text{ are siblings in } \tree^\lambda(Y), \mathcal{P}_{\nu_{sib}}^\perp Y = \mathcal{P}_{\nu_{sib}}^\perp y \right]  \\
& \hspace{40mm} \mid R_A,R_B \text{ are siblings in } \tree^\lambda(Y) \biggr) \\
&= E_{H_0}\left\{ \alpha \mid R_A,R_B \text{ are siblings in } \tree^\lambda(Y) \right\} = \alpha.
\end{align*}

The second statement of Theorem~\ref{theorem_1jewell} follows directly from the following result. 
\begin{lemma}
\label{lemma_tn}
If $Y \sim N_n\left(\mu, \sigma^2 I_n\right)$, then
the random variable $\nu_{sib}^\T Y$ has the following conditional distribution: 
\footnotesize
\begin{equation}
\label{eq:cond-dist}
\nu_{sib}^\T Y \mid  \{ R_A, R_B \text{ are siblings in } \tree^\lambda(Y), \mathcal{P}_{\nu_{sib}}^\perp Y = \mathcal{P}_{\nu_{sib}}^\perp y \} \sim \mathcal{T}\mathcal{N}\left\{ \nu_{sib}^\T {\mu}, \sigma^2 \|\nu_{sib}\|_2^2; S_{sib}^\lambda(\nu_{sib}) \right\},
\end{equation}
\normalsize
where $S^\lambda_{sib}(\nu_{sib})$ is defined in \eqref{def_S} and $\mathcal{T}\mathcal{N}\left( \mu,\sigma, S \right)$ denotes the $N(\mu,\sigma^2)$ distribution truncated to the set $S$.
\end{lemma}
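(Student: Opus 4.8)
The plan is to exploit the orthogonal decomposition of $Y$ relative to $\nu_{sib}$ together with the independence of its two components under the Gaussian model, which is the standard template for results of this type. Writing $\phi = \nu_{sib}^\T Y$, I decompose $Y = \mathcal{P}_{\nu_{sib}}^\perp Y + \phi\,(\nu_{sib}/\|\nu_{sib}\|_2^2)$. Under $Y \sim N_n(\mu, \sigma^2 I_n)$, the scalar $\nu_{sib}^\T Y$ is $N(\nu_{sib}^\T \mu, \sigma^2 \|\nu_{sib}\|_2^2)$, and it is jointly Gaussian with the vector $\mathcal{P}_{\nu_{sib}}^\perp Y$. First I would verify that these two are independent by computing their covariance, $\sigma^2 \nu_{sib}^\T \mathcal{P}_{\nu_{sib}}^\perp$, and noting that it vanishes because $\nu_{sib}^\T \mathcal{P}_{\nu_{sib}}^\perp = \nu_{sib}^\T - \nu_{sib}^\T \nu_{sib} \nu_{sib}^\T / \|\nu_{sib}\|_2^2 = 0$. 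Joint Gaussianity plus zero covariance gives independence, so conditioning on $\mathcal{P}_{\nu_{sib}}^\perp Y = \mathcal{P}_{\nu_{sib}}^\perp y$ leaves the marginal law of $\nu_{sib}^\T Y$ unchanged, namely $N(\nu_{sib}^\T \mu, \sigma^2 \|\nu_{sib}\|_2^2)$.

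Next I would rewrite the selection event on the conditioning set. On the event $\{\mathcal{P}_{\nu_{sib}}^\perp Y = \mathcal{P}_{\nu_{sib}}^\perp y\}$, the decomposition above forces $Y = \mathcal{P}_{\nu_{sib}}^\perp y + \phi\,(\nu_{sib}/\|\nu_{sib}\|_2^2) = y'(\phi, \nu_{sib})$, where $y'(\phi,\nu)$ is exactly the quantity defined in Theorem~\ref{theorem_1jewell}. Hence the event $\{R_A, R_B \text{ are siblings in } \tree^\lambda(Y)\}$ coincides with $\{R_A, R_B \text{ are siblings in } \tree^\lambda\{y'(\phi, \nu_{sib})\}\}$, which by the definition of $S_{sib}^\lambda(\nu_{sib})$ in \eqref{def_S} is precisely the event $\{\phi \in S_{sib}^\lambda(\nu_{sib})\}$. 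Combining the two facts completes the argument: conditional on the nuisance statistic, $\nu_{sib}^\T Y$ retains its $N(\nu_{sib}^\T \mu, \sigma^2 \|\nu_{sib}\|_2^2)$ law, and imposing the selection event restricts its support to $S_{sib}^\lambda(\nu_{sib})$, so the conditional distribution is the $N(\nu_{sib}^\T \mu, \sigma^2 \|\nu_{sib}\|_2^2)$ law truncated to $S_{sib}^\lambda(\nu_{sib})$, as claimed.

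The main obstacle I anticipate is bookkeeping rather than any genuine difficulty: one must argue carefully that, once we condition on $\mathcal{P}_{\nu_{sib}}^\perp Y = \mathcal{P}_{\nu_{sib}}^\perp y$, the entire tree-growing and cost-complexity-pruning pipeline depends on $Y$ only through the scalar $\phi$. This is what allows the selection event to be recast as membership of $\phi$ in the fixed set $S_{sib}^\lambda(\nu_{sib}) \subseteq \mathbb{R}$ defined via $y'(\phi, \nu_{sib})$, and it is precisely this deterministic-on-the-conditioning-event structure that makes the truncated-normal conclusion well defined. Once this reduction is in place, the remainder is an immediate consequence of the independence established above.
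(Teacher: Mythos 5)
Your proposal is correct and follows essentially the same route as the paper's proof: decompose $Y = \mathcal{P}_{\nu_{sib}}^\perp Y + (\nu_{sib}^\T Y)\,\nu_{sib}/\|\nu_{sib}\|_2^2$, observe that on the conditioning event the tree is a function of $\phi = \nu_{sib}^\T Y$ alone so that the selection event becomes $\{\phi \in S_{sib}^\lambda(\nu_{sib})\}$, and invoke the Gaussian law of $\phi$. The only difference is that you explicitly verify the independence of $\nu_{sib}^\T Y$ and $\mathcal{P}_{\nu_{sib}}^\perp Y$ via the zero-covariance computation, a step the paper uses implicitly when it drops the conditioning on $\mathcal{P}_{\nu}^\perp Y = \mathcal{P}_{\nu}^\perp y$; this makes your write-up slightly more self-contained but does not change the argument.
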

\begin{proof}

The following holds for any $\nu \in \mathbb{R}^n$.  
\begin{align*}	
& pr\left\{ \nu^\T Y > c \mid  R_A, R_B \text{ are siblings in } \textsc{tree}^\lambda\left(Y\right), \mathcal{P}_\nu^\perp Y = \mathcal{P}_\nu^\perp y \right\} \\
&= pr\left\{ \nu^\T Y > c \mid R_A, R_B \text{ are siblings in } \textsc{tree}^\lambda\left(\mathcal{P}_\nu^\perp Y+\frac{\nu \nu^\T}{\|\nu\|_2^2} Y\right), \mathcal{P}_\nu^\perp Y = \mathcal{P}_\nu^\perp y \right\} \\
&= pr\left\{ \nu^\T Y > c \mid   R_A, R_B \text{ are siblings in } \textsc{tree}^\lambda\left(\mathcal{P}_\nu^\perp y+\frac{\nu}{\|\nu\|_2^2} \nu^\T Y\right), \mathcal{P}_\nu^\perp Y = \mathcal{P}_\nu^\perp y  \right\} \\
&= pr\left\{ \nu^\T Y > c \mid   R_A, R_B \text{ are siblings in } \textsc{tree}^\lambda\left(\mathcal{P}_\nu^\perp y+\frac{\nu}{\|\nu\|_2^2} \nu^\T Y\right) \right\} \\
&= pr\left\{ \phi > c \mid  \phi \in S_{sib}^{\lambda}\left(\nu\right)\right\},
\end{align*}
where $\phi = \nu^\T Y.$ 
In the fourth line, the condition $\mathcal{P}_\nu^\perp Y = \mathcal{P}_\nu^\perp y$ can be dropped because when $Y \sim N_n\left(\mu, \sigma^2 I_n\right)$, $\mathcal{P}_\nu^\perp Y$ is independent of  $\nu^\top Y$.
Finally, since $\phi \sim N\left(\nu^\T \mu, \sigma^2 \|\nu\|_2^2\right)$, \eqref{eq:cond-dist} holds. 
\end{proof}

\subsection{Proof of Proposition~\ref{prop_CI}}
\label{appendix:propCIproof}

Theorem 6.1 from \cite{lee2016exact} says that the truncated normal distribution has monotone likelihood ratio in the mean parameter. This guarantees that $L(y)$ and $U(y)$ in \eqref{eq_mainCI} are unique. Then, for $L(\cdot)$ and $U(\cdot)$ in \eqref{eq_mainCI}, \eqref{eq:cond-dist} in Lemma~\ref{lemma_tn} guarantees that 
\small
\begin{equation}
\label{eq_mainCoverage}
pr\left\{ \nu^\T {\mu} \in \left[ L(Y),U(Y) \right]  \mid R_A,R_B \text{ are siblings in } \textsc{tree}^\lambda(Y) ,\mathcal{P}_{\nu}^\perp Y = {\mathcal{P}}_{\nu}^\perp y \right\} =1-\alpha.
\end{equation}
\normalsize
Finally, we need to prove that \eqref{eq_mainCoverage} implies $(1 - \alpha)$--\emph{selective coverage} as defined in \eqref{eq:selcov}. Following Proposition 3 from \cite{fithian2014optimal}, let $\eta$ be the random variable ${\mathcal{P}}_{\nu}^\perp Y$ and let $f(\cdot)$ be its density. Then,
\begin{align*}
	&pr\left\{ \nu^\T {\mu} \in \left[ L(Y),U(Y) \right]  \mid R_A,R_B \text{ are siblings in } \textsc{tree}^\lambda(Y) \right\} \\
	&= \int pr\left\{\nu^\T {\mu} \in \left[ L(Y),U(Y) \right]  \mid R_A,R_B \text{ are siblings in } \textsc{tree}^\lambda(Y) ,\mathcal{P}_{\nu}^\perp Y = {\mathcal{P}}_{\nu}^\perp y \right\} f(\eta) d\eta \\
	&= \int (1-\alpha) f(\eta) d\eta = 1-\alpha.
\end{align*}

\subsection{Proof of Theorem~\ref{theorem_1modified}}
\label{appendix:theorem2proof}

We omit the proof of the first statement of Theorem~\ref{theorem_1modified}, as it is similar to the proof of the first statement of Theorem~\ref{theorem_1jewell} in Appendix \ref{appendix:theorem1proof}. 

The second statement in Theorem~\ref{theorem_1modified} follows directly from the following result. 
\begin{lemma}
\label{lemma_tn_reg}
The random variable $\nu_{reg}^\T Y$ has the conditional distribution
\begin{equation}
\label{eq:cond-dist-reg}
\nu_{reg}^\T Y \mid  \{ R_A \in \textsc{tree}^\lambda(Y), \mathcal{P}_{\nu_{reg}}^\perp Y = \mathcal{P}_{\nu_{reg}}^\perp y \} \sim \mathcal{T}\mathcal{N}\left\{ \nu_{reg}^\T {\mu}, \sigma^2 \|\nu_{reg}\|_2^2; S_{reg}^\lambda(\nu_{reg}) \right\},
\end{equation}
where $S^\lambda_{reg}(\nu_{reg})$ was defined in \eqref{def_S2}. 
\end{lemma}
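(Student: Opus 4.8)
The plan is to mirror the argument used to establish Lemma~\ref{lemma_tn} in Appendix~\ref{appendix:theorem1proof}, substituting the contrast $\nu_{reg}$ from \eqref{eq_nureg} for $\nu_{sib}$ and replacing the selection event ``$R_A, R_B$ are siblings'' with the single-region selection event $R_A \in \tree^\lambda(Y)$. Concretely, I would compute the conditional survival function $pr\{\nu_{reg}^\T Y > c \mid R_A \in \tree^\lambda(Y), \mathcal{P}_{\nu_{reg}}^\perp Y = \mathcal{P}_{\nu_{reg}}^\perp y\}$ and show that it equals the survival function of the claimed truncated normal evaluated at $c$; since this holds for every $c \in \mathbb{R}$, the distributional identity \eqref{eq:cond-dist-reg} follows.

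First I would use the orthogonal decomposition $Y = \mathcal{P}_{\nu_{reg}}^\perp Y + (\nu_{reg}\nu_{reg}^\T/\|\nu_{reg}\|_2^2) Y$ inside the argument of $\tree^\lambda(\cdot)$. On the conditioning event $\mathcal{P}_{\nu_{reg}}^\perp Y = \mathcal{P}_{\nu_{reg}}^\perp y$, the first summand is fixed at $\mathcal{P}_{\nu_{reg}}^\perp y$, so the tree depends on $Y$ only through the scalar $\phi = \nu_{reg}^\T Y$, via $\tree^\lambda\{y'(\phi, \nu_{reg})\}$ with $y'(\phi, \nu_{reg}) = \mathcal{P}_{\nu_{reg}}^\perp y + \phi(\nu_{reg}/\|\nu_{reg}\|_2^2)$ exactly as defined in Theorem~\ref{theorem_1modified}. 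The selection event $R_A \in \tree^\lambda(Y)$ then becomes $\{\phi \in S_{reg}^\lambda(\nu_{reg})\}$ by the definition \eqref{def_S2}.

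The one genuinely probabilistic step is to drop the conditioning on $\mathcal{P}_{\nu_{reg}}^\perp Y = \mathcal{P}_{\nu_{reg}}^\perp y$ after the event has been rewritten in terms of $\phi$ alone. This is justified because, under $Y \sim N_n(\mu, \sigma^2 I_n)$, the quantities $\nu_{reg}^\T Y$ and $\mathcal{P}_{\nu_{reg}}^\perp Y$ are jointly Gaussian and uncorrelated---their cross-covariance is $\sigma^2 \nu_{reg}^\T \mathcal{P}_{\nu_{reg}}^\perp = 0$---and hence independent. Therefore conditioning $\nu_{reg}^\T Y$ on the value of $\mathcal{P}_{\nu_{reg}}^\perp Y$ does not change its marginal law, and the surviving event $\{\phi \in S_{reg}^\lambda(\nu_{reg})\}$ depends only on $\phi$. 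Writing $\phi = \nu_{reg}^\T Y \sim N(\nu_{reg}^\T \mu, \sigma^2 \|\nu_{reg}\|_2^2)$ and conditioning on $\phi \in S_{reg}^\lambda(\nu_{reg})$ yields exactly the law $\mathcal{T}\mathcal{N}\{\nu_{reg}^\T\mu, \sigma^2\|\nu_{reg}\|_2^2; S_{reg}^\lambda(\nu_{reg})\}$, as claimed.

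I do not anticipate a genuine obstacle, since the derivation is a direct transcription of the sibling case. The only point requiring care is verifying that the set $S_{reg}^\lambda(\nu_{reg})$ in \eqref{def_S2} is defined through the same $y'(\phi,\nu_{reg})$ that arises from the orthogonal decomposition, so that no permutation machinery from Lemma~\ref{lemma_permutations} is needed at this stage: that combinatorial reduction is required only later, when one actually characterizes and computes $S_{reg}^\lambda(\nu_{reg})$, and is irrelevant to establishing the conditional distribution itself.
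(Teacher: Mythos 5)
Your proposal is correct and matches the paper's approach exactly: the paper omits this proof, stating only that it is ``similar to the proof of Lemma~\ref{lemma_tn},'' and your argument is precisely that mirrored proof---orthogonal decomposition of $Y$ inside $\tree^\lambda(\cdot)$, substitution of the conditioned value $\mathcal{P}_{\nu_{reg}}^\perp y$, dropping the nuisance conditioning via independence of $\nu_{reg}^\T Y$ and $\mathcal{P}_{\nu_{reg}}^\perp Y$, and identifying the surviving event with $\{\phi \in S_{reg}^\lambda(\nu_{reg})\}$. Your closing remark is also right: the permutation machinery of Lemma~\ref{lemma_permutations} plays no role in the distributional statement and enters only when computing the truncation set.
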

We omit the proof of Lemma \ref{lemma_tn_reg}, as it is similar to the proof of Lemma \ref{lemma_tn}. 

\subsection{Proof of Proposition~\ref{prop_CIreg}}
\label{appendix:propCIregproof}

The proof largely follows the proof of Proposition~\ref{prop_CI}. The fact that the truncated normal distribution has monotone likelihood ratio (Theorem 6.1 of \citealt{lee2016exact}) ensures that $L(y)$ and $U(y)$ defined in \eqref{eq_mainCIreg} are unique, and \eqref{eq:cond-dist-reg} in Lemma~\ref{lemma_tn_reg} implies that 
$$pr\left\{\nu_{reg}^\T {\mu} \in \left[ L(Y),U(Y) \right]  \mid R_A \in  \textsc{tree}^\lambda(Y) ,\mathcal{P}_{\nu_{reg}}^\perp Y = {\mathcal{P}}_{\nu_{reg}}^\perp y \right\}=1-\alpha.$$ 
The rest of the argument is as in the proof of Proposition~\ref{prop_CI}.

\section{Proofs for Section~\ref{subsec_branches}}
\label{appendix_lemma1proof}

\subsection{Proof of Lemma~\ref{lemma_orderedsplits}}
\label{proof_lemma1}

We first state and prove the following lemma.
\begin{lemma}
\label{lemma_sameGain}
Let $R_A$ and $R_B$ be the regions in the definition of $\nu_{sib}$ in \eqref{eq_nusib}. For an arbitrary region $R$ and for any $j \in \{1,\ldots,p\}$ and $s \in \{1, \ldots,n-1\}$, recall that the potential children of $R$ (the ones that CART will consider adding to the tree when applying Algorithm~\ref{alg_growing} to region $R$) are given by $R \cap \chi_{j,s,0}$ and $R \cap \chi_{j,s,1}$, where $\chi_{j,s,0}$ and $\chi_{j,s,l}$ were defined in \eqref{eq:halfspace}. If $(R_A \cup R_B) \subseteq R \cap \chi_{j,s,0}$ or $(R_A \cup R_B) \subseteq R \cap \chi_{j,s,1}$, then $Gain_R\{y'(\phi,\nu_{sib}), j,s\} =Gain_R\{y, j,s\}$ for all $\phi$.
\end{lemma}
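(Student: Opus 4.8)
The plan is to reduce the gain to a \emph{between-group sum of squares} and then show that the perturbation $y'(\phi,\nu_{sib})$ leaves every group mean appearing in that expression unchanged, so that the gain cannot move with $\phi$.

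First I would record the form of the perturbation. By Theorem~\ref{theorem_1jewell}, $y'(\phi,\nu_{sib}) = \mathcal{P}_{\nu_{sib}}^\perp y + \phi\,\nu_{sib}/\|\nu_{sib}\|_2^2 = y + \kappa\,\nu_{sib}$, where $\kappa = (\phi - \nu_{sib}^\T y)/\|\nu_{sib}\|_2^2$ is a scalar; thus the perturbation $y'(\phi,\nu_{sib}) - y$ is a scalar multiple of $\nu_{sib}$. From~\eqref{eq_nusib}, $\nu_{sib}$ has two properties: (i) its $i$th entry is zero whenever $x_i \notin R_A \cup R_B$, so the perturbation is supported on $R_A \cup R_B$; and (ii) writing $n_A = \sum_i 1_{(x_i \in R_A)}$ and $n_B = \sum_i 1_{(x_i \in R_B)}$, its entries sum to zero, since $\sum_{i=1}^n (\nu_{sib})_i = n_A/n_A - n_B/n_B = 0$. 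The key consequence is a \emph{mean-preservation} statement: for any region $W$ with either $(R_A\cup R_B)\subseteq W$ or $(R_A\cup R_B)\cap W = \emptyset$, I have $\sum_{i:x_i\in W}\{y'(\phi,\nu_{sib})\}_i = \sum_{i:x_i\in W}y_i$, because the nonzero perturbations lie entirely inside $W$ (where they sum to zero) or entirely outside it. Since perturbing the responses does not change the covariate-determined observation count of $W$, it follows that $\bar y_W$ is identical under $y$ and under $y'(\phi,\nu_{sib})$.

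Next I would rewrite the gain. A standard decomposition of the total sum of squares over $R$ into within- and between-child parts gives, for the split $(j,s)$ of $R$ into children $G_1 = R\cap\chi_{j,s,1}$ and $G_0 = R\cap\chi_{j,s,0}$,
\begin{equation}
\nonumber
\gain_R(y,j,s) = \sum_{e\in\{0,1\}} n_{G_e}\bigl(\bar y_{G_e} - \bar y_R\bigr)^2,
\end{equation}
where $n_{G_e} = \sum_{i=1}^n 1_{(x_i\in G_e)}$ depends only on the covariates. Now invoke the hypothesis: $R_A\cup R_B$ lies entirely inside a single child, say $G_{e^*}$. Then $G_{e^*}\supseteq R_A\cup R_B$, the sibling child $G_{1-e^*}$ is disjoint from $R_A\cup R_B$, and $R\supseteq G_{e^*}\supseteq R_A\cup R_B$, so the mean-preservation statement applies to each of $R$, $G_0$, and $G_1$. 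Substituting $\bar y_R$, $\bar y_{G_0}$, $\bar y_{G_1}$ (all unchanged, with the counts $n_{G_e}$ fixed) into the displayed identity yields $\gain_R\{y'(\phi,\nu_{sib}),j,s\} = \gain_R(y,j,s)$ for every $\phi$.

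The main obstacle here is conceptual rather than computational: recognizing that the gain equals the between-group sum of squares, so that it depends on the data only through the three means $\bar y_R$, $\bar y_{G_0}$, $\bar y_{G_1}$, and that the contrast $\nu_{sib}$ is built precisely so that it is supported on $R_A\cup R_B$ with entries summing to zero. Once these two facts are in hand, the hypothesis that $R_A\cup R_B$ sits inside one child guarantees none of the three means moves, and the conclusion is immediate; verifying the sum-of-squares decomposition itself is routine.
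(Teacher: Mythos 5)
Your proposal is correct and follows essentially the same route as the paper's proof: both reduce the gain to a function of the three group means $\bar y_R$, $\bar y_{G_0}$, $\bar y_{G_1}$ together with covariate-determined counts (the paper writes it as $-n_R\bar y_R^2 + n_{G_0}\bar y_{G_0}^2 + n_{G_1}\bar y_{G_1}^2$, which is algebraically identical to your between-group sum of squares), and then verify that the perturbation leaves each of these means unchanged. The only cosmetic difference is that the paper checks mean preservation by explicitly summing the perturbation terms $\Delta_i$ over $R_A$ and $R_B$, whereas you argue it abstractly from the fact that $y'(\phi,\nu_{sib})-y$ is a scalar multiple of $\nu_{sib}$, which is supported on $R_A\cup R_B$ with entries summing to zero — the same underlying fact.
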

\begin{proof}
It follows from algebra that for $Gain_R(y, j, s)$ defined in \eqref{eq:gain}, 
\scriptsize
\begin{align} 
Gain_R(y, j, s) = -\left\{ \sum_{i=1}^n 1_{(x_i \in R)} \right\} \left(\overline{y}_{R}\right)^2 + \left\{\sum_{i=1}^n 1_{(x_i \in R \cap \chi_{j,s,0})} \right\} \left( \overline{y}_{R \cap \chi_{j,s,0}}\right)^2 + \left\{\sum_{i=1}^n 1_{(x_i \in R \cap \chi_{j,s,1})} \right\}\left(\overline{y}_{R \cap \chi_{j,s,1}}\right)^2, \label{eq:gain2}
\end{align}
\normalsize
where $\bar{y}_R = \left( \sum_{i \in R} y_i\right)/\left\{ {\sum_{i=1}^n 1_{(x_i \in R)}}\right\}$. It follows from \eqref{eq:gain2} that to prove Lemma \ref{lemma_sameGain}, it suffices to show that 
$\overline{y}_{T} = \overline{y'(\phi, \nu_{sib})}_{T}$ for $T \in \{R, R \cap \chi_{j,s,0}, R \cap \chi_{j,s,1}\}$. Recall from Section~\ref{subsec_intuition} that
$
\{y'(\phi, \nu_{sib})\}_i = y_i + \Delta_i
$, where 
$$
\Delta_i = \begin{cases}(\phi - \nu_{sib}^\T y) \frac{ \sum_{i'=1}^n 1_{(x_i' \in R_B)} }{\sum_{i'=1}^n 1_{(x_i' \in R_A \cup R_B)} } &\text{ if } i \in R_A \\
-(\phi - \nu_{sib}^\T y)\frac{\sum_{i'=1}^n 1_{(x_i' \in R_A)} }{ \sum_{i'=1}^n 1_{(x_i' \in R_A \cup R_B)} }&\text{ if } i \in R_B \\
 0 &\text{ otherwise. }	
 \end{cases}
$$
Without loss of generality, assume that $(R_A \cup R_B) \subseteq R \cap \chi_{j,s,0}$. For any $T \in \{R, R \cap \chi_{j,s,0}\}$, $R_A \cup R_B \subseteq T$. 
Thus, 
\small 
\begin{align*}
\overline{y'(\phi,\nu_{sib})}_{T} &=  \frac{1}{\sum_{i=1}^n 1_{(x_i \in T)}} \left\{ \sum_{i \in T \setminus (R_A \cup R_B)} y_i + \sum_{i \in R_A} (y_i + \Delta_i) + \sum_{i \in R_B} (y_i + \Delta_i) \right\} \\
&= \bar{y}_{T} + \frac{\sum_{i \in R_A} \Delta_i + \sum_{i \in R_B} \Delta_i}{\sum_{i=1}^n 1_{(x_i \in T)}}  \\
&=  \bar{y}_T + \frac{ \left\{ \sum_{i=1}^n 1_{(x_i \in R_A)} \right\} (\phi - \nu_{sib}^\T y) \frac{\sum_{i=1}^n 1_{(x_i \in R_B)}}{\sum_{i=1}^n 1_{(x_i \in R_A \cup R_B)}} - \left\{ \sum_{i=1}^n 1_{(x_i \in R_B)} \right\}  (\phi - \nu_{sib}^\T y) \frac{\sum_{i=1}^n 1_{(x_i \in R_A)}}{\sum_{i=1}^n 1_{(x_i \in R_A \cup R_B)}}}{\sum_{i=1}^n 1_{(x_i \in T)}} \\
&= \bar{y}_T + 0  = \bar{y}_T.
\end{align*}
\normalsize
Furthermore,  
\small 
$$
\overline{y'(\phi,\nu_{sib})}_{R \cap \chi_{j,s,1}} = \frac{1}{\sum_{i=1}^n 1_{(x_i \in  R \cap \chi_{j,s,1})}} \sum_{i \in R \cap \chi_{j,s,1}} (y_i + \Delta_i) =  \frac{1}{\sum_{i=1}^n 1_{(x_i \in  R \cap \chi_{j,s,1})}} \sum_{i \in R \cap \chi_{j,s,1}} (y_i + 0) = \overline{y}_{R \cap \chi_{j,s,1}}.
$$
\normalsize
\end{proof}

We will now prove Lemma~\ref{lemma_orderedsplits}. 

It follows from Definition~\ref{def:branch} that if $\mathcal{R}\left[\branch\{R, \tree^\lambda(y)\}\right] \subseteq \textsc{tree}^\lambda\{y'(\phi,\nu_{sib})\}$, then $R_A$ and $R_B$ are siblings in $\textsc{tree}^\lambda\{y'(\phi,\nu_{sib})\}$. This establishes the $(\Leftarrow)$ direction. 

We will prove the $(\Rightarrow)$ direction by contradiction. Suppose that $R_A$ and $R_B$ are siblings in  $\textsc{tree}^\lambda\{y'(\phi,\nu_{sib})\}$. Define $\branch\{R_A, \tree^\lambda(y)\} = \left((j_1, s_1, e_1), \ldots, (j_L, s_L, e_L) \right)$, and define $R^{(l')} = \bigcap \limits_{l=1}^{l'} \chi_{j_l, s_l, e_l}$ for $l'=1, \ldots, L$. Assume that there exists $l \in \{0,\ldots,L-2\}$ such that $R^{(l)} \in \textsc{tree}^\lambda\{y'(\phi,\nu_{sib})\}$ and $R^{(l+1)} \not \in \textsc{tree}^\lambda\{y'(\phi,\nu_{sib})\}$. We assume that any ties between splits that occur at Step 2 of Algorithm~\ref{alg_growing} are broken in the same way for $y$ and $y'(\phi,\nu_{sib})$, and so this implies that  
there exists $(\tilde{j}, \tilde{s}) \neq (j_{l+1}, s_{l+1})$ such that $(\tilde{j}, \tilde{s}) \in \argmax_{j,s} Gain_{R^{(l)}}\{y'(\phi,\nu_{sib}),j,s\}$ and
\begin{equation}
\label{eq_forcontradition}
Gain_{R^{(l)}}\{y'(\phi, \nu_{sib}),j_{l+1},s_{l+1}\} < Gain_{R^{(l)}}\{y'(\phi, \nu_{sib}),\tilde{j},\tilde{s}\}.
\end{equation}

Since $R_A$ and $R_B$ are siblings in $\textsc{tree}^{\lambda}\{y'(\phi, \nu_{sib})\}$, it follows from Lemma~\ref{lemma_sameGain} that  \\$Gain_{R^{(l)}}\{y'(\phi,\nu_{sib}), \tilde{j}, \tilde{s}\}=Gain_{R^{(l)}}(y, \tilde{j}, \tilde{s})$. Also, since $R_A$ and $R_B$ are siblings in $\textsc{tree}^{\lambda}(y)$, it follows from Lemma~\ref{lemma_sameGain} that $Gain_{R^{(l)}}\{y'(\phi,\nu_{sib}), j_{l+1}, s_{l+1}\}=Gain_{R^{(l)}}(y,  j_{l+1}, s_{l+1})$. Applying these facts to \eqref{eq_forcontradition} yields
\begin{equation}
\label{eq_forcontradition2}
Gain_{R^{(l)}}(y ,j_{l+1},s_{l+1}) < Gain_{R^{(l)}}(y,\tilde{j},\tilde{s}).
\end{equation}
But since $R^{(l)}$ and  $R^{(l+1)}$ both appeared in $\textsc{tree}^\lambda(y)$, 
$$
(j_{l+1}, s_{l+1}) \in \argmax_{j,s} Gain_{R^{(l)}}(y,j,s).$$ 
This contradicts \eqref{eq_forcontradition2}. Therefore, for any $l \in \{0, \ldots, L-2\},$ if $R^{(l)} \in \textsc{tree}^\lambda\{y'(\phi,\nu_{sib})\}$, then $R^{(l+1)} \in \textsc{tree}^\lambda\{y'(\phi,\nu_{sib})\}$. Since $R^{(0)} \in \textsc{tree}^\lambda\{y'(\phi, \nu_{sib})\}$, the proof follows by induction. 

\subsection{Proof of Lemma~\ref{lemma_permutations}}
\label{proof_lemma2}

Let $R_A \in \tree^\lambda(y)$ with $\branch\{R_A,\tree^\lambda(y)\} = ((j_1,s_1,e_1),\ldots,(j_L,s_L,e_L))$ such that $R_A = \bigcap_{l=1}^L \chi_{j_l,s_l,e_l}$. Since Algorithm \ref{alg_growing} creates regions by intersecting halfspaces and set intersections are invariant to the order of intersection, it follows that $R_A = \bigcap_{l=1}^L \chi_{j_l,s_l,e_l} \in \textsc{tree}^\lambda\{y'(\phi, \nu_{reg})\}$ if and only if there exists $\pi \in \Pi$ such that 
$$\left \{ \bigcap \limits_{l=1}^{l'} \chi_{j_{\pi(l)}, s_{\pi(l)}, e_{\pi(l)}} \right \}_{l'=1}^L \subseteq  \textsc{tree}^\lambda\{y'(\phi, \nu_{reg})\}.$$ 
By Definitions \ref{def:branch} and \ref{def:branch_perm}, 
$$
\mathcal{R}(\pi[\branch\{R_A, \tree^\lambda(y)\}]) = \left \{ \bigcap \limits_{l=1}^{l'} \chi_{j_{\pi(l)}, s_{\pi(l)}, e_{\pi(l)}} \right \}_{l'=1}^L.
$$
Thus,
\begin{align*} 
S_{reg}^\lambda &= \left\{ \phi: R_A \in \textsc{tree}^\lambda\{y'(\phi, \nu_{reg})\} \right\} \\ 
&= \bigcup \limits_{\pi \in \Pi} \left \{ \phi: \mathcal{R}(\pi[\branch\{R_A, \tree^\lambda(y)\}]) \subseteq \textsc{tree}^\lambda
\left\{y'\left(\phi, \nu_{reg}\right) \right\} \right \} \\ 
&= \bigcup \limits_{\pi \in \Pi}  S^{\lambda}\left(\pi\left[ \branch\{R_A,\tree^\lambda(y)\}\right] , \nu_{reg}\right),
\end{align*} 
where the third equality follows from the definition of $S^\lambda(\mathcal{B}, \nu)$ in \eqref{eq:sbnu}.

\section{Proofs for Section~\ref{subsec_growingcalculations}}

\subsection{Proof of Proposition~\ref{prop_Sisintersection}}

Recall that $\mathcal{B} = \left( (j_1,s_1,e_1),\ldots,(j_L,s_L,e_L)\right)$ and $\mathcal{R}(\mathcal{B}) = \{R^{(0)},\ldots, R^{(L)}\}$. Recall from \eqref{eq:grow} that $S_{grow}(\mathcal{B},\nu) = \{ \phi: \mathcal{R}(\mathcal{B}) \subseteq \tree^0\{y'(\phi,\nu)\}\}$, and that we define $S_{l,j,s} = \bigl\{ \phi : \gain_{R^{(l-1)}}\{y'(\phi,\nu),j,s\} \leq \gain_{R^{(l-1)}}\{y'(\phi,\nu), j_l, s_l \} \bigr\}$. 

For $l=1,\ldots,L$, 
\small 
\begin{equation}
\label{eq_iff}
R^{(l-1)} \in \tree^0\{y'(\phi,\nu)\} \text{ and } \phi \in \cap_{s=1}^{n-1}\cap_{j=1}^p S_{l,j,s} \iff \{R^{(l-1)},R^{(l)}\} \subseteq   \tree^0\{y'(\phi,\nu)\},
\end{equation}
\normalsize
because, given that $R^{(l-1)} \in \tree^0\{y'(\phi,\nu)\}$, $R^{(l)} \in  \tree^0\{y'(\phi,\nu)\}$ if and only if \\
$
 (j_l,s_l)  \in
\argmax_{(j,s) : s \in \{ 1,\ldots,n-1\}, j \in \{1,  \ldots, p\}}
\textsc{gain}_{R^{(l-1)}}\left(y'(\phi,\nu),j,s\right).
$
Combining \eqref{eq_iff} with the fact that $\{\phi : R^{(0)} \in \tree^0\{y'(\phi,\nu)\} \} = \mathbb{R}$ yields
\begin{align*}
	\bigcap_{l=1}^L \bigcap_{j=1}^{p} \bigcap_{s=1}^{n-1} S_{l,j,s} =\bigcap_{l=1}^L   \left\{ \phi : \{R^{(l-1)},R^{(l)}\} \subseteq   \tree^0\{y'(\phi,\nu)\} \right\} = \{ \phi : \mathcal{R}(\mathcal{B}) \subseteq \tree^0\{y'(\phi,\nu)\} \} .
\end{align*}

\subsection{Proof of Proposition~\ref{prop_quadratic}}
\label{appendix_branch_computation_proofs}

Given a region $R$, let $\mathbb{1}(R)$ denote the vector in $\mathbb{R}^n$ such that the $i$th element is $1_{(x_i \in R)}$. Let 
$\mathcal{P}_{\mathbb{1}{(R)}} = \mathbb{1}{(R)} \left\{ \mathbb{1}{(R)}^\T \mathbb{1}{(R)} \right\}^{-1}\mathbb{1}{(R)}^\T$  denote the orthogonal projection matrix onto the vector $\mathbb{1}{(R)}$.
\begin{lemma}
\label{lemma_matrixM}
For any region $R$,
$
\textsc{gain}_{R}(y,j,s) = y^\T M_{R,j,s} y,
$ 
where 
\begin{equation}
\label{eq_mdef}
M_{R,j,s} = \mathcal{P}_{\mathbb{1}({R \cap \chi_{j,s,1})}} + \mathcal{P}_{\mathbb{1}({R \cap \chi_{j,s,0})}} - \mathcal{P}_{\mathbb{1}{(R)}}.
\end{equation}
Furthermore, the matrix $M_{R,j,s}$ is positive semidefinite. 
\end{lemma}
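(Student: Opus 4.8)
The plan is to establish the two assertions separately, deriving the quadratic-form identity directly from \eqref{eq:gain2} and obtaining positive semidefiniteness from the geometry of the three indicator vectors.

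First I would record that for any region $T$ the rank-one projection $\mathcal{P}_{\mathbb{1}(T)}$ acts on $y$ by
\[
y^\T \mathcal{P}_{\mathbb{1}(T)} y = \frac{\left(\mathbb{1}(T)^\T y\right)^2}{\mathbb{1}(T)^\T \mathbb{1}(T)} = \left\{\sum_{i=1}^n 1_{(x_i \in T)}\right\} \bar{y}_T^2,
\]
because $\mathbb{1}(T)^\T y = \sum_{i: x_i \in T} y_i$ and $\mathbb{1}(T)^\T \mathbb{1}(T) = \sum_{i=1}^n 1_{(x_i \in T)}$. Substituting $T \in \{R \cap \chi_{j,s,1},\, R \cap \chi_{j,s,0},\, R\}$ into the definition \eqref{eq_mdef} of $M_{R,j,s}$ would then yield
\[
y^\T M_{R,j,s} y = \left\{\sum_{i=1}^n 1_{(x_i \in R\cap\chi_{j,s,1})}\right\}\bar{y}_{R\cap\chi_{j,s,1}}^2 + \left\{\sum_{i=1}^n 1_{(x_i \in R\cap\chi_{j,s,0})}\right\}\bar{y}_{R\cap\chi_{j,s,0}}^2 - \left\{\sum_{i=1}^n 1_{(x_i \in R)}\right\}\bar{y}_R^2,
\]
which is exactly the right-hand side of \eqref{eq:gain2}; this gives $\textsc{gain}_R(y,j,s) = y^\T M_{R,j,s} y$.

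For positive semidefiniteness I would exploit two structural facts: since $R\cap\chi_{j,s,1}$ and $R\cap\chi_{j,s,0}$ partition $R$, the child indicators $\mathbb{1}(R\cap\chi_{j,s,1})$ and $\mathbb{1}(R\cap\chi_{j,s,0})$ have disjoint support and hence are orthogonal, and moreover $\mathbb{1}(R) = \mathbb{1}(R\cap\chi_{j,s,1}) + \mathbb{1}(R\cap\chi_{j,s,0})$. Orthogonality implies that $\mathcal{P}_{\mathbb{1}(R\cap\chi_{j,s,1})} + \mathcal{P}_{\mathbb{1}(R\cap\chi_{j,s,0})}$ is the orthogonal projection $P_V$ onto the span $V$ of the two child indicators, while $\mathbb{1}(R)\in V$ shows that $\mathcal{P}_{\mathbb{1}(R)} = P_W$ projects onto a subspace $W\subseteq V$. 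I would then conclude $M_{R,j,s} = P_V - P_W$ with $W\subseteq V$, verify $(P_V - P_W)^2 = P_V - P_W$ using $P_W P_V = P_V P_W = P_W$, and note that $P_V - P_W$ is symmetric; being a symmetric idempotent matrix (the orthogonal projection onto $V\cap W^\perp$), it is positive semidefinite.

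The main obstacle will be the passage from a sum of two rank-one projections to a single orthogonal projection: this is legitimate precisely because the child indicators are orthogonal, and the subsequent idempotency relies on the nesting $W\subseteq V$ furnished by $\mathbb{1}(R)=\mathbb{1}(R\cap\chi_{j,s,1})+\mathbb{1}(R\cap\chi_{j,s,0})$. Everything else is routine bookkeeping from \eqref{eq:gain2}. As a sanity check, positive semidefiniteness can also be read off from the interpretation of $y^\T M_{R,j,s} y = \textsc{gain}_R(y,j,s)$ as the nonnegative reduction in residual sum of squares when $R$ is refined into its two children, but the projection argument is the cleaner one to formalize.
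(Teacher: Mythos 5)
Your proof is correct, but your argument for positive semidefiniteness takes a genuinely different route from the paper's. For the quadratic-form identity, both proofs are essentially the same routine algebra: you route through \eqref{eq:gain2} together with $y^\T \mathcal{P}_{\mathbb{1}(T)} y = \{\sum_{i=1}^n 1_{(x_i \in T)}\}\bar{y}_T^2$, while the paper expands each residual sum of squares as $\sum_{i \in T} y_i^2 - y^\T \mathcal{P}_{\mathbb{1}(T)}y$ and cancels the raw sums of squares using the fact that the children partition $R$. For positive semidefiniteness, however, the paper argues variationally: it writes $v^\T M_{R,j,s} v = \gain_R(v,j,s)$ and observes that the gain is nonnegative for every $v$, because the within-child means minimize the within-child sums of squares, so substituting $a_1 = a_2 = \bar{v}_R$ bounds the split RSS by the parent RSS. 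Your argument is instead purely linear-algebraic: disjoint supports make the child indicators orthogonal, so the sum of their rank-one projections is the orthogonal projection onto their span $V$; the identity $\mathbb{1}(R) = \mathbb{1}(R\cap\chi_{j,s,1}) + \mathbb{1}(R\cap\chi_{j,s,0})$ puts $\mathbb{1}(R)$ inside $V$, so $M_{R,j,s} = P_V - P_W$ with $W \subseteq V$ is itself an orthogonal projection. Your route yields a strictly stronger conclusion --- $M_{R,j,s}$ is symmetric idempotent with eigenvalues in $\{0,1\}$ (indeed rank one when both children are nonempty), not merely positive semidefinite --- whereas the paper's argument is shorter, reuses the statistical meaning of the gain, and sidesteps the degenerate case of an empty child, for which the rank-one projection onto a zero indicator vector (and, to be fair, the gain itself) is ill-defined. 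Both are valid; the extra structure you expose is not needed anywhere downstream in the paper, which is presumably why the authors settled for the weaker variational bound.
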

\begin{proof}
For any region $R$,
$
\sum_{i \in R} (y_i - \bar{y}_R)^2 
= \sum_{i\in R} y_i^2 - y^\T \mathcal{P}_{\mathbb{1}_{(R)}} y.
$ Thus, from \eqref{eq:gain},
\small 
\begin{align*}
\textsc{gain}_{R}(y,j,s) &= \sum_{i \in R} (y_i - \bar{y}_{R})^2 - \sum_{i \in R \cap \chi_{j,s,1}} (y_i - \bar{y}_{R \cap \chi_{j,s,1}})^2 - \sum_{i \in R \cap \chi_{j,s,0}} (y_i - \bar{y}_{R \cap \chi_{j,s,0}})^2  \\
&=\sum_{i \in R} y_i^2-y^\T\mathcal{P}_{\mathbb{1}_{(R)}}y -\sum_{i \in R \cap \chi_{j,s,1}} y_i^2 + y^\T\mathcal{P}_{\mathbb{1}(R \cap \chi_{j,s,1} )}y  - \sum_{i \in R \cap \chi_{j,s,0}} y_i^2
+ y^\T\mathcal{P}_{\mathbb{1}(R \cap \chi_{j,s,0})}y  \\
&= {y^\T \left\{\mathcal{P}_{\mathbb{1}({R \cap \chi_{j,s,1})}} + \mathcal{P}_{\mathbb{1}({R \cap \chi_{j,s,0})}}-\mathcal{P}_{\mathbb{1}{(R)}} \right\} y = y^\T M_{R,j,s} y .}
\end{align*}
\normalsize
To see that $M_{R,j,s}$ is positive semidefinite, observe that, for any vector $v$, 
\small
\begin{align}
\nonumber v^\T M_{R,j,s} v = \textsc{gain}_R(v,j,s) &= \sum_{i \in R} (v_i - \bar{v}_R)^2 -  \min_{a_1, a_2} \left\{ \sum_{i \in R \cap \chi_{j,s,1}} (v_i - a_1)^2 + \sum_{i \in R \cap \chi_{j,s,0}} (v_i - a_2)^2\right\} \\
\nonumber &\geq \sum_{i \in R} (v_i - \bar{v}_R)^2 -   \left\{\sum_{i \in R \cap \chi_{j,s,1}} (v_i - \bar{v}_R)^2 + \sum_{i \in R \cap \chi_{j,s,0}} (v_i - \bar{v}_R)^2\right\} = 0. 
\end{align}
\end{proof}
\normalsize

It follows from  Lemma~\ref{lemma_matrixM} that we can express each set $S_{l,j,s}$ from Proposition~\ref{prop_Sisintersection} as
\begin{align}
\nonumber  S_{l,j,s} 
&= \left\{ \phi : \textsc{gain}_{R^{(l-1)}}\{y'(\phi,\nu), j,s\} \leq \textsc{gain}_{R^{(l-1)}}\{y'(\phi,\nu), j_l, s_l\} \right\} \\
\label{eq_newSnotation}
&= \left\{ \phi : y'(\phi,\nu)^\T M_{R^{(l-1)}, j,s} y'(\phi,\nu) \leq y'(\phi,\nu)^\T M_{R^{(l-1)}, j_l,s_l} y'(\phi,\nu)  \right\}.
\end{align}
We now use \eqref{eq_newSnotation} to prove the first statement of  Proposition~\ref{prop_quadratic}. 
\begin{lemma}
\label{lemma_SquadraticSIMPLE}
Each set $S_{l,j,s}$ is defined by a quadratic inequality in $\phi$. 
\end{lemma}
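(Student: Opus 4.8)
The plan is to exploit the fact that $y'(\phi,\nu)$ is an \emph{affine} function of the scalar $\phi$, so that substituting it into the quadratic forms defining $S_{l,j,s}$ yields a polynomial in $\phi$ of degree at most two. First I would record, from the definition in Theorem~\ref{theorem_1jewell}, that $y'(\phi,\nu) = w + \phi v$, where $w = \mathcal{P}_\nu^\perp y$ and $v = \nu/\|\nu\|_2^2$ are fixed vectors not depending on $\phi$.

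Next, using the representation from Lemma~\ref{lemma_matrixM}, I would collapse the two gains in \eqref{eq_newSnotation} into a single symmetric matrix $M = M_{R^{(l-1)},j,s} - M_{R^{(l-1)},j_l,s_l}$, so that the inequality defining $S_{l,j,s}$ becomes $y'(\phi,\nu)^\T M\, y'(\phi,\nu) \leq 0$. Since each $M_{R,j,s}$ in \eqref{eq_mdef} is a combination of orthogonal projection matrices, which are symmetric, $M$ is symmetric as well.

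Then I would expand the quadratic form directly:
\[
y'(\phi,\nu)^\T M\, y'(\phi,\nu) = (w + \phi v)^\T M (w + \phi v) = (v^\T M v)\,\phi^2 + 2(w^\T M v)\,\phi + w^\T M w,
\]
where the two cross terms merge into $2(w^\T M v)\phi$ precisely because $M$ is symmetric. This is a quadratic polynomial in $\phi$ with coefficients $A = v^\T M v$, $B = 2\,w^\T M v$, and $C = w^\T M w$ that are independent of $\phi$. Hence $S_{l,j,s} = \{\phi : A\phi^2 + B\phi + C \leq 0\}$ is defined by a single quadratic inequality in $\phi$, which is exactly the claim.

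I do not expect a genuine obstacle here: the substantive work has already been carried out in Lemma~\ref{lemma_matrixM}, which rewrites each gain as a quadratic form $y^\T M_{R,j,s} y$. The only points requiring care are purely bookkeeping ones—verifying that $w$ and $v$ truly do not depend on $\phi$, and that the symmetry of $M$ legitimately combines $\phi\, w^\T M v$ and $\phi\, v^\T M w$ into one term. The conceptual content is simply the observation that composing a quadratic form with an affine map in $\phi$ produces a quadratic in $\phi$.
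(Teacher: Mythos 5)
Your proposal is correct and follows essentially the same route as the paper: both exploit that $y'(\phi,\nu) = \mathcal{P}_\nu^\perp y + \phi\,\nu/\|\nu\|_2^2$ is affine in $\phi$ and expand the quadratic forms from Lemma~\ref{lemma_matrixM} to obtain a degree-two polynomial in $\phi$. The only cosmetic difference is that you subtract the two matrices into a single $M$ before expanding, whereas the paper expands each gain separately into coefficients $a(R,j,s)$, $b(R,j,s)$, $c(R,j,s)$ and then differences them---a choice made so those per-split coefficients can be reused in the later computational-cost arguments, not for any logical necessity.
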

\begin{proof}
The definition of $y'(\phi,\nu)$ in \eqref{def_S} implies that
\begin{align}
    y'(\phi,\nu)^\T M_{R,j,s}y'(\phi,\nu)
     \nonumber&= \left( \mathcal{P}_\nu^\perp y + \frac{\nu \phi}{\|\nu\|_2^2} \right)^\T M_{R,j,s} \left( \mathcal{P}_\nu^\perp y + \frac{\nu \phi}{\|\nu\|_2^2} \right)  \\
     \nonumber &= \frac{\nu^\T M_{R,j,s} \nu}{\|\nu\|_2^4} \phi^2 + \frac{2 \nu^\T M_{R,j,s} \mathcal{P}_\nu^\perp y}{\|\nu\|_2^2} \phi + y^\T P_{\nu}^\perp M_{R,j,s} P_\nu^\perp y \\
     \label{eq_abc}
     &\equiv a(R,j,s) \phi^2 + b(R,j,s) \phi + c(R,j,s).
    \end{align}
   Therefore, by \eqref{eq_newSnotation},
  \begin{align}
  \label{eq_mainquadratic}
\nonumber S_{l,j,s} = \bigg\{ \phi : &\left[ a\left\{R^{(l-1)},j,s\right\} -a\left\{R^{(l-1)},j_l,s_l \right\}\right]\phi^2  + \left[ b\left\{R^{(l-1)},j,s\right\} -b\left\{R^{(l-1)},j_l,s_l \right\} \right]\phi  \\
&+\left[ c\left\{R^{(l-1)},j,s\right\} -c\left\{R^{(l-1)},j_l,s_l \right\}\right] \leq 0 \bigg\}.
  \end{align}
\end{proof}

Proposition~\ref{prop_Sisintersection} indicates that to compute $S_{grow}(\mathcal{B},\nu)$ from \eqref{eq:grow}, we need to compute the coefficients of the quadratic for each $S_{l,j,s}$, where $l=1, \ldots, L$, $j=1,\ldots,p$, and $s=1,\ldots,n-1$. 

\begin{lemma}
\label{lemma_compcost1}
We can compute the coefficients $a\left\{R^{(l-1)},j,s\right\}, b\left\{R^{(l-1)},j,s\right\}$ and $c\left\{R^{(l-1)},j,s\right\}$, defined in Lemma~\ref{lemma_SquadraticSIMPLE}, for $l=1, \ldots, L$, $j=1,\ldots,p$, and $s=1,\ldots,n-1$, in $O\left\{nplog(n)+npL\right\}$ operations.  
\end{lemma}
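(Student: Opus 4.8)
The plan is to show that the three coefficient families all reduce to a handful of per-region sums, and that these sums can be accumulated over split points by a single sorted pass. First I would rewrite the coefficients in closed form. Write $w = \mathcal{P}_\nu^\perp y$ and, for any region $T$, set $n_T = \sum_{i=1}^n 1_{(x_i \in T)}$, $\nu_T = \sum_{i \in T} \nu_i$, and $w_T = \sum_{i \in T} w_i$. Since $\mathcal{P}_{\mathbb{1}(T)} = \mathbb{1}(T)\mathbb{1}(T)^\T / n_T$, we have $\nu^\T \mathcal{P}_{\mathbb{1}(T)} \nu = \nu_T^2/n_T$, $\nu^\T \mathcal{P}_{\mathbb{1}(T)} w = \nu_T w_T/n_T$, and $w^\T \mathcal{P}_{\mathbb{1}(T)} w = w_T^2/n_T$. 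Substituting \eqref{eq_mdef} into \eqref{eq_abc}, and using $\mathcal{P}_\nu^\perp y = w$ together with the symmetry of $M_{R,j,s}$, then expresses each of $a\{R^{(l-1)},j,s\}$, $b\{R^{(l-1)},j,s\}$, and $c\{R^{(l-1)},j,s\}$ as a fixed combination of the scalars $n_T$, $\nu_T$, $w_T$ for the three regions $T \in \{R^{(l-1)}, R^{(l-1)} \cap \chi_{j,s,1}, R^{(l-1)} \cap \chi_{j,s,0}\}$. The whole computation thus reduces to evaluating these scalars.

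Next I would account for the preprocessing. Computing $w = y - (\nu^\T y /\|\nu\|_2^2)\,\nu$ takes $O(n)$ operations. Sorting the $n$ observations by each of the $p$ covariates once, at the outset, costs $O(np\log n)$ operations and records, for each $j$, the order in which the half-spaces $\chi_{j,s,1}$ absorb observations as $s$ increases. The scalars $n_{R^{(l-1)}}$, $\nu_{R^{(l-1)}}$, $w_{R^{(l-1)}}$ for the $L$ branch regions $R^{(0)},\ldots,R^{(L-1)}$ are obtained once in $O(nL)$ operations.

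The core of the argument is the split-point sweep. Fix a branch region $R^{(l-1)}$ and a covariate $j$. Because $\chi_{j,s,1} = \{z : z_j \leq x_{j,(s)}\}$ is monotone in $s$, the sets $R^{(l-1)} \cap \chi_{j,s,1}$ are nested and grow by one observation at a time as $s$ increases through the covariate-$j$ order statistics falling in $R^{(l-1)}$. Hence, walking through the observations of $R^{(l-1)}$ in the precomputed covariate-$j$ order while maintaining running sums of $1$, $\nu_i$, and $w_i$ produces $n_{R^{(l-1)}\cap\chi_{j,s,1}}$, $\nu_{R^{(l-1)}\cap\chi_{j,s,1}}$, and $w_{R^{(l-1)}\cap\chi_{j,s,1}}$ for every $s$, with the complementary sums obtained by subtracting from the region totals. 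This sweep costs $O(n)$ per pair $(R^{(l-1)},j)$, and each coefficient is then assembled in $O(1)$ from the closed form above, so all $n-1$ split points for the pair are handled in $O(n)$ operations. Summing over the $L$ branch regions and $p$ covariates gives $O(npL)$, which, added to the $O(np\log n)$ sorting cost and the lower-order $O(n)$ and $O(nL)$ terms, yields the claimed $O\{np\log n + npL\}$.

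The main obstacle is the bookkeeping in the sweep: the index $s$ refers to the order statistics of the full data, not of the observations in $R^{(l-1)}$, so I must align the global sorted order with the subset lying in $R^{(l-1)}$ and correctly treat split points at which $R^{(l-1)} \cap \chi_{j,s,1}$ or $R^{(l-1)} \cap \chi_{j,s,0}$ is empty, where the corresponding $n_T = 0$ and its projection term drops out. Verifying that this alignment and these empty-cell edge cases can be processed within the $O(n)$ per-pair budget, and that each running-sum update is genuinely $O(1)$, is the only delicate point; everything else is the direct substitution and arithmetic described above.
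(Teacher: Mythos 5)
Your proposal is correct and follows essentially the same route as the paper's proof: reduce each coefficient $a$, $b$, $c$ to per-region counts and inner products of $\nu$ and $\mathcal{P}_\nu^\perp y$ with indicator vectors, sort each covariate once at cost $O(np\log n)$, and then compute all $n-1$ split points for each (region, covariate) pair via a single $O(n)$ sweep with constant-time running-sum updates, giving $O(npL)$ overall. The alignment issue you flag is resolved exactly as in the paper: the sweep traverses the \emph{global} sorted order of $x_j$ and simply increments the running sums only when the current observation lies in $R^{(l-1)}$, so no separate subset re-indexing is needed.
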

\begin{proof}

Using the definitions in Lemmas~\ref{lemma_matrixM} and \ref{lemma_SquadraticSIMPLE} and algebra, we have that
\footnotesize
\begin{equation}
\label{eq_adef}
\|\nu\|_2^4 a\{R^{(l-1)},j,s\} = \frac{\left[\nu^\T \mathbb{1}\{ R^{(l-1)} \cap \chi_{j,s,1}\}\right]^2}{\sum_{i=1}^n 1_{\{i \in {R^{(l-1)} \cap \chi_{j,s,1}}\}}} + \frac{\left[\nu^\T \mathbb{1}\{R^{(l-1)} \cap \chi_{j,s,0}\}\right]^2}{\sum_{i=1}^n 1_{\{i \in {R^{(l-1)} \cap \chi_{j,s,0}}\}}} - \frac{\left[\nu^\T \mathbb{1}\{R^{(l-1)}\}\right]^2}{\sum_{i=1}^n 1_{\{i \in R^{(l-1)}\}}},
\end{equation}
\begin{align}
\label{eq_bdef}
\frac{1}{2}\|\nu\|_2^2 b\{R^{(l-1)},j,s\} &= \frac{
\nu^\T \mathbb{1}\{R^{(l-1)} \cap \chi_{j,s,1}\}\left(\mathcal{P}_\nu^\perp y \right)^\T \mathbb{1}\{R^{(l-1)} \cap \chi_{j,s,1}\}}{\sum_{i=1}^n 1_{\{i \in {R^{(l-1)} \cap \chi_{j,s,1}}\}}} +  \\
\nonumber &\frac{\nu^\T \mathbb{1}\{R^{(l-1)} \cap \chi_{j,s,0}\}\left(\mathcal{P}_\nu^\perp y \right)^\T \mathbb{1}\{R^{(l-1)} \cap \chi_{j,s,0}\}}{\sum_{i=1}^n 1_{\{i \in {R^{(l-1)} \cap \chi_{j,s,0}}\}}} -
\frac{\nu^\T \mathbb{1}\{R^{(l-1)}\}\left(\mathcal{P}_\nu^\perp y \right)^\T \mathbb{1}\{R^{(l-1)}\}}{\sum_{i=1}^n 1_{\{i \in {R^{(l-1)}}\}}},   \end{align}
\begin{equation}
\label{eq_cdef}
c\{R^{(l-1)},j,s\} = \frac{\left[(\mathcal{P}_\nu^\perp y)^\T \mathbb{1}\{{R^{(l-1)} \cap \chi_{j,s,1}\}}\right]^2}{\sum_{i=1}^n 1_{\{i \in {R^{(l-1)} \cap \chi_{j,s,1}}\}}} + \frac{\left[(\mathcal{P}_\nu^\perp y)^\T\mathbb{1}\{ R^{(l-1)} \cap \chi_{j,s,0}\}\right]^2}{\sum_{i=1}^n 1_{\{ i \in {R^{(l-1)} \cap \chi_{j,s,0}}\}}} - \frac{\left[(\mathcal{P}_\nu^\perp y)^\T \mathbb{1}\{R^{(l-1)}\}\right]^2}{\sum_{i=1}^n 1_{\{i \in {R^{(l-1)}}\}}}.
\end{equation}
\normalsize

We compute the scalar $\|\nu\|_2^2$ and the vector $\mathcal{P}_\nu^\perp y$ in $O(n)$ operations once at the start of the algorithm. We also sort each feature in $O\left[nlog(n)\right]$ operations per feature.  We will now show that for the $l$th level and the $j$th feature, we can compute 
$a\{R^{(l-1)},j,s\}$,
$b\{R^{(l-1)},j,s\},$ and $c\{R^{(l-1)},j,s\}$ for all $n-1$ values of $s$ in $O(n)$ operations. 

The index $s$ appears in \eqref{eq_adef}--\eqref{eq_cdef} only through $\sum_{i=1}^n 1_{\{i \in {R^{(l-1)} \cap \chi_{j,s,1}}\}}$, $\sum_{i=1}^n 1_{\{i \in {R^{(l-1)} \cap \chi_{j,s,0}}\}}$, and through inner products of vectors $\nu$ and $\mathcal{P}_\nu^\perp y$ with indicator vectors $\mathbb{1}\left\{R^{(l-1)} \cap \chi_{j,s,1}\right\}$ and $\mathbb{1}\left\{R^{(l-1)} \cap \chi_{j,s,0}\right\}$. For simplicity, we assume that covariate $x_j$ is continuous, and thus the order statistics are unique. 

Let $m_1$ be the index corresponding to the smallest value of $x_j$. Then $\nu^\T \mathbb{1}\left\{R^{(l-1)} \cap \chi_{j,1,1}\right\} = \nu_{m_1}$ if observation $m_1$ is in $R^{(l-1)}$, and is $0$ otherwise. Similarly,
 $\sum_{i=1}^n 1_{\{i \in {R^{(l-1)} \cap \chi_{j,1,1}}\}} = 1$ if observation $m_1$ is in $R^{(l-1)}$, and is $0$ otherwise. Next, let $m_2$ be the index corresponding to the second smallest value of $x_j$. Then $\nu^\T \mathbb{1}\left\{R^{(l-1)} \cap \chi_{j,2,1}\right\} = \mathbb{1}\left\{R^{(l-1)} \cap \chi_{j,1,1}\right\} + \nu_{m_2}$ if observation $m_2$ is in $R^{(l-1)}$, and is equal to $\mathbb{1}\left\{R^{(l-1)} \cap \chi_{j,1,1}\right\}$ otherwise. 
 We compute $\sum_{i=1}^n 1_{\{i \in {R^{(l-1)} \cap \chi_{j,2,1}}\}}$ in the same manner. Each update is done in constant time. Continuing in this manner, computing the full set of $n-1$ quantities $\nu^\T \mathbb{1}\left\{R^{(l-1)} \cap \chi_{j,s,1}\right\}$ and $\sum_{i=1}^n 1_{\{i \in {R^{(l-1)} \cap \chi_{j,s,1}}\}}$ for $s=1,\ldots,n-1$ requires a single forward pass through the sorted values of $x_j$, which takes $O(n)$ operations. The same ideas can be applied to compute $\left(\mathcal{P}_\nu^\perp y\right)^\T \mathbb{1}\left\{R^{(l-1)} \cap \chi_{j,1,1}\right\}$, 
$\nu^\T \mathbb{1}\left\{R^{(l-1)} \cap \chi_{j,s,0} \right\}$, $\left(\mathcal{P}_\nu^\perp y \right)^\T \mathbb{1}\left\{R^{(l-1)} \cap \chi_{j,s,0} \right\}$, and $\sum_{i=1}^n 1_{\{i \in {R^{(l-1)} \cap \chi_{j,s,0}}\}}$ using constant time updates for each value of $s$. 

Thus, we can obtain all components of coefficients $a\{R^{(l-1)},j,s\}$,
$b\{R^{(l-1)},j,s\},$ and $c\{R^{(l-1)},j,s\}$ for a fixed $j$ and $l$, and for all $s=1, \ldots, n-1$, in $O(n)$ operations. These scalar components can be combined to obtain the coefficients in $O(n)$ operations. Therefore, given the sorted features, we compute the $(n-1)pL$ coefficients in $O(npL)$ operations. 
\end{proof}

Once the coefficients on the right hand side of \eqref{eq_mainquadratic} have been computed, we can compute $S_{l,j,s}$ in constant time via the quadratic equation: it is either a single interval or the union of two intervals. Finally, in general we can intersect $(n-1)pL$ intervals in $O\left\{npL \times log(npL)\right\}$ operations \citep{bourgon2009intervals}. The final claim of Proposition~\ref{prop_quadratic} involves the special case where $\nu=\nu_{sib}$ and $\mathcal{B} = \branch\{R_A, \tree^\lambda(y)\}$. 
\begin{lemma}
\label{lemma_nusibfaster}
Suppose that $\nu= \nu_{sib}$ from \eqref{eq_nusib} and $\mathcal{B} = \branch\{R_A, \tree^\lambda(y)\}$. (i) 
If $l < L$, then for all $j$ and $s$, there exist $a, b \in [-\infty,\infty]$ such that $a \leq \nu^\T y \leq b$ and $S_{l,j,s} = (a, b)$. (ii) If $l=L$, then for all $j$ and $s$, there exist $c, d \in \mathbb{R}$ such that $c \leq 0 \leq d$ and $S_{l,j, s} = (-\infty, c] \cup [d,\infty)$. (iii) We can intersect all $(n-1)pL$ sets of the form $S_{l,j,s}$ in $O(npL)$ operations. 
\end{lemma}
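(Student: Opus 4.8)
The plan is to read off the shape of each $S_{l,j,s}$ directly from its quadratic representation \eqref{eq_mainquadratic}, exploiting the fact that for $\nu=\nu_{sib}$ and $\mathcal{B}=\branch\{R_A,\tree^\lambda(y)\}$ the contrast direction is perfectly aligned with the level-$L$ split. Write $R^{(L-1)}$ for the parent of the siblings, so that $R_A=R^{(L-1)}\cap\chi_{j_L,s_L,e_L}$, $R_B=R^{(L-1)}\cap\chi_{j_L,s_L,1-e_L}$, and $R_A\cup R_B=R^{(L-1)}\subseteq R^{(l)}$ for every $l\le L-1$. Two preliminary facts drive everything. First, since $y'(\nu_{sib}^\T y,\nu_{sib})=y$ and each $(j_l,s_l)$ maximizes $\gain_{R^{(l-1)}}(y,\cdot,\cdot)$ in $\tree^\lambda(y)$, evaluating \eqref{eq_mainquadratic} at $\phi=\nu_{sib}^\T y$ shows $\nu_{sib}^\T y\in S_{l,j,s}$ for all $l,j,s$. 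Second, by Lemma~\ref{lemma_matrixM} the leading coefficient of each gain quadratic, $a(R,j,s)=\nu_{sib}^\T M_{R,j,s}\nu_{sib}/\|\nu_{sib}\|_2^4=\gain_R(\nu_{sib},j,s)/\|\nu_{sib}\|_2^4$, is nonnegative, so every gain is an upward (possibly degenerate) parabola in $\phi$.

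For part (i) I would fix $l<L$. Because $R_A\cup R_B\subseteq R^{(l)}=R^{(l-1)}\cap\chi_{j_l,s_l,e_l}$, Lemma~\ref{lemma_sameGain} makes the intended gain $\gain_{R^{(l-1)}}\{y'(\phi,\nu_{sib}),j_l,s_l\}$ constant in $\phi$, so $S_{l,j,s}$ is the sublevel set of the competing gain against a constant, i.e. a single interval. If $(j,s)$ does not separate $R_A$ from $R_B$, Lemma~\ref{lemma_sameGain} makes it constant as well and optimality at $y$ forces $S_{l,j,s}=\mathbb{R}=(-\infty,\infty)$; otherwise the parabola is strictly convex and the interval is finite. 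In both cases the interval contains $\nu_{sib}^\T y$, giving $a\le\nu_{sib}^\T y\le b$ (endpoint inclusion is governed by the tie-breaking convention of Algorithm~\ref{alg_growing} and is immaterial below).

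For part (ii), $l=L$, the intended split genuinely separates the siblings, and the crux is to pin down its quadratic exactly. A direct computation using $\mathcal{P}_{\mathbb{1}(R^{(L-1)})}\nu_{sib}=0$ gives $M_{R^{(L-1)},j_L,s_L}\nu_{sib}=\nu_{sib}$, so $\nu_{sib}$ is a unit eigenvector of this matrix; hence the intended quadratic is even in $\phi$ (its linear coefficient vanishes) with leading coefficient $a(R^{(L-1)},j_L,s_L)=1/\|\nu_{sib}\|_2^2$. Moreover $\gain_{R^{(L-1)}}(\nu_{sib},j,s)\le\sum_{i\in R^{(L-1)}}(\nu_{sib})_i^2=\|\nu_{sib}\|_2^2$ for every $(j,s)$ (total sum-of-squares bound, using $\bar{(\nu_{sib})}_{R^{(L-1)}}=0$), so every competing split has leading coefficient at most $1/\|\nu_{sib}\|_2^2$. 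Thus $\gain_{\text{alt}}(\phi)-\gain_{\text{intended}}(\phi)$ opens downward (or is degenerate), and $S_{L,j,s}$, where it is $\le 0$, is the complement of a bounded interval $(-\infty,c]\cup[d,\infty)$. Finally, on $y'(0,\nu_{sib})$ the means of $R_A$ and $R_B$ coincide, so $\gain_{R^{(L-1)}}\{y'(0,\nu_{sib}),j_L,s_L\}=0$ by \eqref{eq:gain2}, whereas any gain is $\ge 0$; hence $\gain_{\text{alt}}(0)-\gain_{\text{intended}}(0)\ge 0$, placing $0$ between the roots and yielding $c\le 0\le d$.

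For part (iii) the point is that (i) and (ii) endow all $(n-1)pL$ sets with a common-point structure that removes the need to sort. Every $l<L$ set is an interval containing the single point $\nu_{sib}^\T y$, so their intersection is maintained by a running maximum of left endpoints and running minimum of right endpoints, at $O(1)$ per set. Every $l=L$ set has complement an interval straddling $0$; since these complements pairwise overlap (all contain $0$ in their closure), their union is the single interval $(\min_{j,s}c_{j,s},\max_{j,s}d_{j,s})$, so the intersection of the $l=L$ sets is $(-\infty,\min c]\cup[\max d,\infty)$, again $O(1)$ per set. Intersecting the resulting interval about $\nu_{sib}^\T y$ with the resulting complement about $0$ costs $O(1)$, for a total of $O(npL)$, beating the general $O(npL\log(npL))$ bound in Proposition~\ref{prop_quadratic}. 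I expect part (ii) to be the main obstacle: establishing that $\nu_{sib}$ is an eigenvector of $M_{R^{(L-1)},j_L,s_L}$ and that this split maximizes the leading coefficient is exactly what forces the complement-of-interval shape and its symmetry about $0$, and everything else is bookkeeping.
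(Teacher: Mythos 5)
Your overall route is essentially the paper's: for $l<L$ you observe that the intended gain is constant in $\phi$ (Lemma~\ref{lemma_sameGain}) while every competing gain is an upward-opening quadratic in $\phi$ (Lemma~\ref{lemma_matrixM}), so each $S_{l,j,s}$ is a single interval containing $\nu_{sib}^\T y$; for $l=L$ you show the intended split attains the largest possible leading coefficient, so each difference quadratic opens downward, and you evaluate at $\phi=0$ to place $0$ between the roots; and your merging scheme in (iii) --- running max/min of endpoints for $l<L$, and $(-\infty,\min_{j,s} c_{j,s}]\cup[\max_{j,s} d_{j,s},\infty)$ for $l=L$ --- is exactly the paper's. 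Your algebra is packaged slightly differently: you prove $M_{R^{(L-1)},j_L,s_L}\nu_{sib}=\nu_{sib}$ and bound competing leading coefficients by the total sum of squares of $\nu_{sib}$ over $R^{(L-1)}$, whereas the paper computes $\nu_{sib}^\T M_{R^{(L-1)},j_L,s_L}\nu_{sib}=\|\nu_{sib}\|_2^2$ and invokes norm contraction of projection matrices; these are equivalent.

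There is one genuine, though repairable, gap in part (ii): you dispose of the degenerate case with the parenthetical ``(or is degenerate).'' If the competing and intended quadratics have equal leading coefficients, the difference in \eqref{eq_mainquadratic} is an affine function of $\phi$; if that affine function had nonzero slope, $S_{L,j,s}$ would be a half-line, which is \emph{not} of the form $(-\infty,c]\cup[d,\infty)$ with finite $c\le 0\le d$, and the claim would fail. You must rule this out. The paper does so by characterizing equality in the leading-coefficient bound: equality forces $\nu_{sib}$ to be constant on each competing child (zero within-children sum of squares), and since $\nu_{sib}$ takes exactly two distinct values on $R^{(L-1)}$, the competing split must induce the identical partition $\{R_A,R_B\}$ of the observations; then the two gain functions coincide for \emph{every} $\phi$, the difference is identically zero, and $S_{L,j,s}=\mathbb{R}=(-\infty,0]\cup[0,\infty)$, consistent with $c=d=0$. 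Your own total-sum-of-squares bound carries exactly this equality condition, so the fix is a short paragraph. A far smaller imprecision occurs in part (i): a competing split that does separate $R_A$ from $R_B$ can still have zero $\nu_{sib}$-gain (balanced children), so your dichotomy ``constant versus strictly convex with finite interval'' is not exhaustive; this is harmless, since any sublevel set of a convex quadratic is an interval, but the degenerate possibilities should be acknowledged rather than excluded.
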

\begin{proof}
This proof relies on the form of $S_{l,j,s}$ given in \eqref{eq_mainquadratic}.
To prove (i), note that when $l < L$,
\begin{align*}
\| \nu_{sib} \|_2^4 \left[ a\left\{R^{(l-1)},j,s\right\} - a\left\{R^{(l-1)},j_l,s_l\right\} \right] &= \nu_{sib}^\T M_{R^{(l-1)},j,s} \nu_{sib} - \nu_{sib}^\T M_{R^{(l-1)},j_l,s_l} \nu_{sib} \\
&= \nu_{sib}^\T M_{R^{(l-1)},j,s} \nu_{sib} \geq 0.
\end{align*}
\normalsize
The first equality follows directly from the definition of $a(R,j,s)$ in \eqref{eq_abc}. To see why the second equality holds, 
observe that $R_A\cup R_B \subseteq R^{(l-1)}$, and without loss of generality assume that $R_A\cup R_B \subseteq R^{(l-1)} \cap \chi_{j_l, s_l, 1}$. Recall that the $i$th element of $\nu_{sib}$ is non-zero if and only if $i \in R_A \cup R_B$, and that the non-zero elements of $\nu_{sib}$ sum to $0$. Thus, $\mathbb{1}\left\{R^{(l-1)}\right\}^\T \nu_{sib} = 0$ and $\mathbb{1}\left\{R^{(l-1)} \cap \chi_{j_l, s_l, 1}\right\}^\T\nu_{sib} = 0$. Furthermore, the supports of
$R^{(l-1)} \cap \chi_{j_l, s_l, 0}$ and $\nu_{sib}$ are non-overlapping, and so $\mathbb{1}\left\{R^{(l-1)} \cap \chi_{j_l, s_l, 0}\right\}^\T \nu_{sib} = 0$. Thus, 
\small
$$
M_{R^{(l-1)},j_l,s_l} \nu_{sib} = \left[  \mathcal{P}_{\mathbb{1}\left\{R^{(l-1)} \cap \chi_{j_l, s_l, 1}\right\}}+ \mathcal{P}_{\mathbb{1}\left\{R^{(l-1)} \cap \chi_{j_l, s_l, 0}\right\}} -  \mathcal{P}_{\mathbb{1}\left\{R^{(l-1)}\right\}} \right]\nu_{sib} = 0.
$$
\normalsize
The final inequality follows because $M_{R^{(l-1)},j,s}$ is positive semidefinite (Lemma~\ref{lemma_matrixM}). 

Thus, when $l < L$, $S_{l, j, s}$ is defined in \eqref{eq_mainquadratic} by a quadratic inequality with a non-negative quadratic coefficient. Thus, $S_{l,j,s}$ must be a single interval of the form $(a,b)$. Furthermore, since $\mathcal{B} = \branch\{R_A, \tree^\lambda(y)\}$, we know that $\mathcal{R}(\mathcal{B}) \subseteq \tree^0(y) = \tree^0\{y'(\nu^\T y,\nu)\}$. Therefore,  
$\nu^\T y \in S_{grow}\left(\mathcal{B},\nu_{sib}\right) = \cap_{l=1}^L \cap_{j=1}^ p \cap_{s=1}^{n-1} S_{l, j, s}$, and so we conclude $a \leq \nu^\T y \leq b$. This completes the proof of (i).

To prove (ii), we first prove that when $l=L$ the quadratic equation in $\phi$ defined in \eqref{eq_mainquadratic} has a non-positive quadratic coefficient.
To see this, note that
\begin{align}
\nonumber
& \| \nu_{sib} \|_2^4 \left[ a\left\{ R^{(L-1)},j,s\right\} - a\left\{R^{(L-1)},j_L,s_L\right\} \right] = \nu_{sib}^\T M_{R^{(L-1)},j,s} \nu_{sib} - \nu_{sib}^\T M_{R^{(L-1)},j_L,s_L} \nu_{sib} \\
\nonumber
&= \nu_{sib}^\T \left[
\mathcal{P}_{\mathbb{1}\{{R^{(L-1)} \cap \chi_{j,s,1}\}}} +\mathcal{P}_{\mathbb{1}\{{R^{(L-1)} \cap \chi_{j,s,0}\}}} \right] \nu_{sib} 
- \nu_{sib}^\T \left[
\mathcal{P}_{\mathbb{1}\{{R^{(L-1)} \cap \chi_{j_L,s_L,1}\}}} +\mathcal{P}_{\mathbb{1}\{{R^{(L-1)} \cap \chi_{j_L,s_L,0}\}}} \right]\nu_{sib}  \\
\nonumber
&= \nu_{sib}^\T \left[\mathcal{P}_{\mathbb{1}\{{R^{(L-1)} \cap \chi_{j,s,1}\}}} +\mathcal{P}_{\mathbb{1}\{{R^{(L-1)} \cap \chi_{j,s,0}\}}} \right] \nu_{sib} - \nu_{sib}^\T \nu_{sib}  \\
\label{eq_ineq}
&= \left\| \left[ \mathcal{P}_{\mathbb{1}\{{R^{(L-1)} \cap \chi_{j,s,1}\}}} +\mathcal{P}_{\mathbb{1}\{{R^{(L-1)} \cap \chi_{j,s,0}\}}} \right] \nu_{sib} \right\|_2^2 - \| \nu_{sib} \|_2^2 \leq 0.
\end{align}
The first equality follows from \eqref{eq_abc}. The second follows from the definition of $M_{R,j,s}$ given in \eqref{eq_mdef} and from the fact that $\mathcal{P}_{\mathbb{1}\{{R^{(L-1)}\}}} \nu_{sib} = 0$ because $\mathbb{1}\{{R^{(L-1)}\}}^\T \nu_{sib}$ sums up all of the non-zero elements of $\nu_{sib}$, which sum to $0$. The third equality follows because $\nu_{sib}$ lies in $span\left[
\mathbb{1}\left\{R^{(L-1)} \cap \chi_{j_L,s_L,1} \right\},
\mathbb{1}\left\{R^{(L-1)} \cap \chi_{j_L,s_L,0}\right\}
\right]$; projecting it onto this span yields itself. 
Noting that $\left[ \mathcal{P}_{\mathbb{1}\{{R^{(L-1)} \cap \chi_{j,s,1}\}}} +\mathcal{P}_{\mathbb{1}\{{R^{(L-1)} \cap \chi_{j,s,0}\}}} \right]$ is itself a projection matrix, the fourth equality follows from the idempotence of projection matrices, and the inequality follows from the fact that $\|\nu_{sib}\|_2 \geq \|Q \nu_{sib}\|_2$ for any projection matrix $Q$. Thus, when $l=L$, the quadratic that defines $S_{l,j,s}$ has a non-positive quadratic coefficient. 

Equality is attained in \eqref{eq_ineq} if and only if $\nu_{sib} \in span\left[\mathbb{1}\{{R^{(L-1)} \cap \chi_{j,s,1}\}},\mathbb{1}\{{R^{(L-1)} \cap \chi_{j,s,0}\}}\right]$. This can only happen if splitting $R^{(L-1)}$ on $j, s$ yields an identical partition of the data to splitting on $j_L, s_L$. If this is the case, then $S_{L,j,s} = (-\infty,0] \cup [0, \infty)$ from the definition of $S_{l,j,s}$ in Proposition~\ref{prop_Sisintersection}, and so (ii) is satisfied with $c=d=0$. 

We now proceed to the setting where the inequality in \eqref{eq_ineq} is strict. In this case, \eqref{eq_mainquadratic} implies that 
$S_{L,j,s} = (-\infty,c] \cup [d, \infty)$ for $c \leq d$ and $c,d \in \mathbb{R}$. To complete the proof of (ii), we must argue that $c \leq 0$ and $d \geq 0$. Recall that the quadratic in \eqref{eq_newSnotation} has the form 
$\textsc{gain}_{R^{(L-1)}}\{y'(\phi,\nu), j,s\} - \textsc{gain}_{R^{(L-1)}}\{y'(\phi,\nu), j_L,s_L\}$. When $\phi = 0$, $\textsc{gain}_{R^{(L-1)}}\{y'(\phi,\nu), j_L,s_L\} = 0$, because $\phi=0$ eliminates the contrast between $R_A$ and $R_B$, so that the split on $j_L,s_L$ provides zero gain. So, when $\phi = 0$, the quadratic evaluates to $\textsc{gain}_{R^{(L-1)}}\{y'(\phi,\nu), j,s\}$, which is non-negative by Lemma~\ref{lemma_matrixM}. Thus, $S_{l,j,s}$ is defined by a downward facing quadratic that is non-negative when $\phi=0$, and so the set $S_{l,j,s}$ has the form $(-\infty,c] \cup [d, \infty)$ for $c \leq 0 \leq d$.

To prove (iii), observe that (i) implies that 
$\cap_{l=1}^{L-1} \cap_{j=1}^p  \cap_{s=1}^{n-1} S_{l,j,s} = (a_{max}, b_{min})$, where $a_{max}$ is the maximum over all of the $a$'s, and $b_{min}$ is the minimum over all of the $b$'s. This can be computed in $np(L-1)$ steps. Furthermore, (ii) implies that $\cap_{j=1}^p  \cap_{s=1}^{n-1} S_{L,j,s} = (-\infty,c_{min}] \cap [d_{max}, \infty)$, where 
$c_{min}$ and $d_{max}$ are the minimum over all of the $c$'s and the maximum over all the $d$'s, respectively. This can be computed in $np$ steps. Thus, we can compute $ \cap_{l=1}^{L} \cap_{j=1}^p  \cap_{s=1}^{n-1} S_{l,j,s}$ in $O(npL)$ operations. 
\end{proof}

\subsection{Proof of Proposition \ref{prop_s.pruning}}
\label{appendix_pruningproof}

To prove Proposition~\ref{prop_s.pruning}, we first propose a particular method of constructing an example of $\tree(\mathcal{B},\nu, \lambda)$. We then show that 
\eqref{eq:mainpruning} holds for this particular choice for $\tree(\mathcal{B},\nu, \lambda)$. We conclude by evaluating the computational cost of computing such an example of $\tree(\mathcal{B},\nu, \lambda)$, and by arguing that in the special case where $\mathcal{R}(\mathcal{B}) \in \tree^\lambda(y)$, our example is equal to $\tree^\lambda(y)$.

When Algorithm~\ref{alg_pruning} is called with parameters $\tree,y,\lambda,\mathcal{O}$, where $\mathcal{O}$ is a bottom-up ordering of the $K$ nodes in $\tree$, it computes a sequence of intermediate trees, $\tree_0, \ldots, \tree_K$. We use the notation $\tree_k(\tree,y,\lambda,\mathcal{O})$, for $k=0,\ldots,K$, to denote the $k$th of these intermediate trees. The following lemma helps build up to our proposed example of $\tree(\mathcal{B},\nu, \lambda)$. 

\begin{lemma}
\label{lemma-firstsequence}
Let $\phi_1 \in S_{grow}(\mathcal{B},\nu)$ and $\phi_2 \in S_{grow}(\mathcal{B},\nu)$. Then $\tree^0\{y'(\phi_1,\nu)\}=\tree^0\{y'(\phi_2,\nu)\}$. Let $\mathcal{O}$ be a bottom-up ordering of the $K$ regions in $\tree^0\{y'(\phi_1,\nu)\}$ such that the last $L$ regions in the ordering are $R^{(L-1)}, \ldots, R^{(0)}$. Then $\tree_{K-L}[\tree^0\{y'(\phi_1,\nu)\},y'(\phi_1,\nu),\lambda,\mathcal{O}]=\tree_{K-L}[\tree^0\{y'(\phi_2,\nu) \},y'(\phi_2,\nu),\lambda,\mathcal{O}]$. 
\end{lemma}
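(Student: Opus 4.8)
The plan is to exploit the very restrictive form of the perturbation guaranteed by Condition~\ref{cond_nuprop}: the vector $y'(\phi,\nu)$ is obtained from $y$ by adding a constant $c_1$ to every observation in $R^{(L)}$, adding a constant $c_2$ to every observation in the sibling $\tilde{R} := R^{(L-1)} \cap \chi_{j_L,s_L,1-e_L}$, and leaving all other observations untouched. Since $R^{(L)} \cup \tilde{R} = R^{(L-1)}$, the perturbation is confined to $R^{(L-1)}$ and acts as a region-wise constant shift inside each of the two children of $R^{(L-1)}$. The two facts I will use repeatedly are (i) both the gain in \eqref{eq:gain} and the quantity $g(\cdot)$ in \eqref{def_g} are invariant to adding a constant to all responses within the region being evaluated, since they depend on the responses only through deviations from within-region means; and (ii) by the definition of $S_{grow}$ in \eqref{eq:grow} (equivalently Proposition~\ref{prop_Sisintersection}), every $\phi \in S_{grow}(\mathcal{B},\nu)$ yields a tree containing the entire branch $R^{(0)},\ldots,R^{(L)}$, so the splits along the branch are pinned down.

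First I would prove $\tree^0\{y'(\phi_1,\nu)\} = \tree^0\{y'(\phi_2,\nu)\}$ by tracing Algorithm~\ref{alg_growing} from $\mathbb{R}^p$. Because both $\phi_1,\phi_2 \in S_{grow}(\mathcal{B},\nu)$, the greedy split of each $R^{(l-1)}$ is $(j_l,s_l)$ for $l=1,\ldots,L$, so the branch itself is identical; it then remains to show that each subtree hanging off the branch is $\phi$-independent. The subtree rooted at the sibling of $R^{(l)}$ for $l \le L-1$ is disjoint from $R^{(L-1)}$, hence sees the unperturbed responses $y$; the subtree rooted at $R^{(L)}$ (resp.\ $\tilde{R}$) sees $y$ shifted by the constant $c_1$ (resp.\ $c_2$), and by fact (i) every subsequent gain computation inside that region agrees with the one under $y$, so the greedy choices—and thus the whole subtree—coincide. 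Assembling the branch with these $\phi$-independent subtrees gives a common tree $\tree^0$ with $K$ regions.

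For the second statement I would induct on the pruning step $k = 0,\ldots,K-L$ to show the intermediate trees agree. The base case $k=0$ is the common $\tree^0$ just established. For the inductive step, note that the first $K-L$ regions of $\mathcal{O}$ are exactly the regions of $\tree^0$ other than the ancestors $R^{(0)},\ldots,R^{(L-1)}$, and every such region $R$ is contained in a single constant-shift zone: either $R \subseteq R^{(L)}$, or $R \subseteq \tilde{R}$, or $R$ is disjoint from $R^{(L-1)}$ (this exhausts all cases because in a tree any two regions are nested or disjoint, and an $R$ containing $R^{(L-1)}$ would itself be an excluded ancestor). Consequently all observations entering the computation of $g\{R,\tree_{k-1},y'(\phi,\nu)\}$ receive the same additive constant, so by fact (i) this value equals $g(R,\tree_{k-1},y)$, which is independent of $\phi$ once $\tree_{k-1}$ is common by the inductive hypothesis. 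The pruning decision at step $k$ is therefore the same for $\phi_1$ and $\phi_2$, so $\tree_k$ remains common; iterating to $k = K-L$ finishes the proof.

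The main obstacle I anticipate is the bookkeeping in the inductive step rather than any deep difficulty: one must verify that for each of the first $K-L$ regions the entire set of terminal descendants appearing in $g(\cdot)$ lies within one constant-shift zone, so that the translation-invariance argument applies cleanly. The ordering hypothesis on $\mathcal{O}$ is precisely what guarantees we stop before processing any $R^{(l)}$ with $l \le L-1$; those regions straddle both children of $R^{(L-1)}$, within which the responses are shifted by different constants $c_1$ and $c_2$, so the within-region deviations—and hence the $g$-values—genuinely depend on $\phi$ and the argument would break. This explains why the claim is made only for $\tree_{K-L}$ and not for the fully pruned tree.
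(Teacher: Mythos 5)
Your proof is correct and follows essentially the same route as the paper's: both arguments pin down the branch splits via membership in $S_{grow}(\mathcal{B},\nu)$, use Condition~\ref{cond_nuprop} to reduce the perturbation to constant shifts within $R^{(L)}$, its sibling, and the regions disjoint from $R^{(L-1)}$, and then invoke shift-invariance of the gain and of $g(\cdot)$ — first to identify the unpruned trees, then in an induction over the first $K-L$ pruning steps. The only cosmetic difference is that the paper establishes equality of the unpruned trees by a two-sided inclusion with a three-case analysis on regions, whereas you trace Algorithm~\ref{alg_growing} directly; the underlying case trichotomy and invariance facts are identical.
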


\begin{proof}
We first prove that $\tree^0\{y'(\phi_1,\nu)\} \subseteq \tree^0\{y'(\phi_2,\nu)\}$, where $\phi_1,\phi_2 \in S_{grow}(\mathcal{B}, \nu)$. The fact that $\phi_1 \in S_{grow}(\mathcal{B}, \nu)$ and $\phi_2 \in S_{grow}(\mathcal{B}, \nu)$
implies two properties: 
\begin{list}{}{}
\item \underline{\emph{Property 1:}} $R^{(l)} \in \tree^0\{y'(\phi_1,\nu)\}$ and $R^{(l)} \in \tree^0\{y'(\phi_2,\nu)\}$ for $l \in \{0,\ldots,L\}$ by the definition of  $S_{grow}(\mathcal{B}, \nu)$.  
\item \underline{\emph{Property 2:}} $R^{(l)}_{sib} \in \tree^0\{y'(\phi_1,\nu)\}$ and $R_{sib}^{(l)} \in \tree^0\{y'(\phi_2,\nu)\}$ for $l \in \{1,\ldots,L\}$, where $R^{(l)}_{sib} \equiv R^{(l-1)} \cap \chi_{j_{l}, s_l, 1-e_l}$. This follows from Property 1 and Definition~\ref{def_tree}.
\end{list}

Suppose that $R \in \tree^0\{y'(\phi_1,\nu)\}$. Then $R$ must belong to one of these three cases, illustrated in Figure~\ref{fig_proofhelp}(a):
\begin{list}{}{}
\item \underline{\emph{Case 1: $\exists \ l \in \{0, \ldots L\}$ such that $R=R^{(l)}$.}} By Property 1, $R \in \tree^0\{y'(\phi_2,\nu)\}$.
\item \underline{\emph{Case 2: $\exists \ l \in \{1,\ldots,L\}$ such that  $R = R^{(l)}_{sib}$.}} By Property 2, $R \in  \tree^0\{y'(\phi_2,\nu)\}$. 
\item \underline{\emph{Case 3: $R \in \desc[R', \tree^0\{y'(\phi_1, \nu)\}]$, where either  $R' = R^{(l)}_{sib}$ for  some $l \in \{1,\ldots,L\}$, }}
\underline{\emph{or else  $R' = R^{(L)}$.}} By Properties 1 and 2, $R' \in  
\tree^0\{y'(\phi_2, \nu)\}$. Condition~\ref{cond_nuprop} ensures that, for all $i \in R'$ and for some constants $c$ and $d$,
$$
\{y'(\phi_2,\nu)\}_i = 
\begin{cases}
 	\{y'(\phi_1,\nu)\}_i &\text{ if } R' = R_{sib}^{(l)} \text{ for some  } l \in \{1,\ldots,L-1\}, \\ 
    \{y'(\phi_1,\nu)\}_i + c &\text{ if } R' = R_{sib}^{(L)}, \\ 
   \{y'(\phi_1,\nu)\}_i + d &\text{ if } R' = R^{(L)}. \\ 
 \end{cases}
 $$
 As constant shifts preserve within-node sums of squared errors, in each of these three scenarios, $\desc[R', \tree^0\{y'(\phi_1, \nu)\}]= \desc[R', \tree^0\{y'(\phi_2, \nu)\}]$. Thus, $R \in \tree^0\{y'(\phi_2,\nu)\}$.
\end{list}

Thus, if $R \in \tree^0\{y'(\phi_1,\nu)\}$, then $R \in \tree^0\{y'(\phi_2,\nu)\}$. This completes the argument that $\tree^0\{y'(\phi_1,\nu)\} \subseteq \tree^0\{y'(\phi_2,\nu)\}$. Swapping the roles of $\phi_1$ and $\phi_2$ in this argument, we see that 
$\tree^0\{y'(\phi_2,\nu)\} \subseteq \tree^0\{y'(\phi_1,\nu)\}$. This concludes the proof that $\tree^0\{y'(\phi_1,\nu)\} = \tree^0\{y'(\phi_2,\nu)\}$.

Because $\tree^0\{y'(\phi_1,\nu)\} = \tree^0\{y'(\phi_2,\nu)\}$, it follows that any bottom-up ordering of the regions in $\tree^0\{y'(\phi_1,\nu)\}$ is also a bottom-up ordering for the regions in $\tree^0\{y'(\phi_2,\nu)\}$. We next prove by induction that, if we choose a bottom-up ordering $\mathcal{O}$ that places the regions in $\mathcal{R}(\mathcal{B})$ at the end of the ordering, then 
\small
\begin{equation}
\label{eq_inductionpruning}
\tree_{k}[\tree^0\{y'(\phi_1,\nu)\}, y'(\phi_1,\nu), \lambda, \mathcal{O}] = \tree_{k}[\tree^0\{y'(\phi_2,\nu)\}, y'(\phi_2,\nu), \lambda, \mathcal{O}],	
\end{equation}
\normalsize
for $k=0,\ldots,K-L$. It follows immediately from Algorithm~\ref{alg_pruning} and the argument above that 
$\tree_{0}[\tree^0\{y'(\phi_1,\nu)\},y'(\phi_1,\nu),\lambda,\mathcal{O}]=\tree_{0}[\tree^0\{y'(\phi_2,\nu)\},y'(\phi_2,\nu),\lambda,\mathcal{O}]$. Next, suppose that for some $k \in \{1,\ldots,K-L\}$, 
\small
\begin{equation}
\label{eq_induction_kminus1}
\tree_{k-1}[\tree^0\{y'(\phi_1,\nu)\}, y'(\phi_1,\nu), \lambda, \mathcal{O}] = \tree_{k-1}[\tree^0\{y'(\phi_2,\nu)\}, y'(\phi_2,\nu), \lambda, \mathcal{O}],
\end{equation}
\normalsize
and denote this tree with $\tree_{k-1}$ for brevity.
We must prove 
that \eqref{eq_inductionpruning} holds. 
Let $R$ be the $k$th region in $\mathcal{O}$ and recall the assumption that the last $L$ regions in $\mathcal{O}$ are $\{R^{(L-1)}, \ldots, R^{(0)}\}$. Since $k \leq K-L$, this implies that $R \not\in \{R^{(L-1)}, \ldots, R^{(0)}\}$.
This means that either $R \in \desc\left[R_{sib}^{(l)}, \tree^0\{y'(\phi_1,\nu)\}\right]$ for $l \in \{1,\ldots,L\}$ or $R \in \desc\left[R^{(L)}, \tree^0\{y'(\phi_1,\nu)\}\right]$, meaning that $R$ is a black region in Figure \ref{fig_proofhelp}(b). From Condition~\ref{cond_nuprop},
$$
\{y'(\phi_2,\nu)\}_i = 
\begin{cases}
 	\{y'(\phi_1,\nu)\}_i &\text{ if } R \in \desc\left[R_{sib}^{(l)}, \tree^0\{y'(\phi_1,\nu)\}\right] \text{ for } l \in \{1,\ldots,L-1\}, \\ 
    \{y'(\phi_1,\nu)\}_i + c &\text{ if } R \in \desc\left[R_{sib}^{(L)}, \tree^0\{y'(\phi_1,\nu)\}\right], \\ 
   \{y'(\phi_1,\nu)\}_i + d &\text{ if }  R \in \desc\left[R^{(L)},\tree^0\{y'(\phi_1,\nu)\}\right]. \\ 
 \end{cases}
 $$
 In any of the three cases illustrated in Figure~\ref{fig_proofhelp}, for $g(\cdot)$ defined in \eqref{def_g}, $g\{R, \tree_{k-1},y'(\phi_1,\nu)\} = g\{R, \tree_{k-1},y'(\phi_2,\nu)\}$. Combining this with 
\eqref{eq_induction_kminus1} and Step 2(b) of Algorithm~\ref{alg_pruning} yields \eqref{eq_inductionpruning}. This completes the proof by induction. 

\begin{figure}[!h]
\centering
\begin{subfigure}{0.45\textwidth}
\subcaption{}
\includegraphics[width=\textwidth]{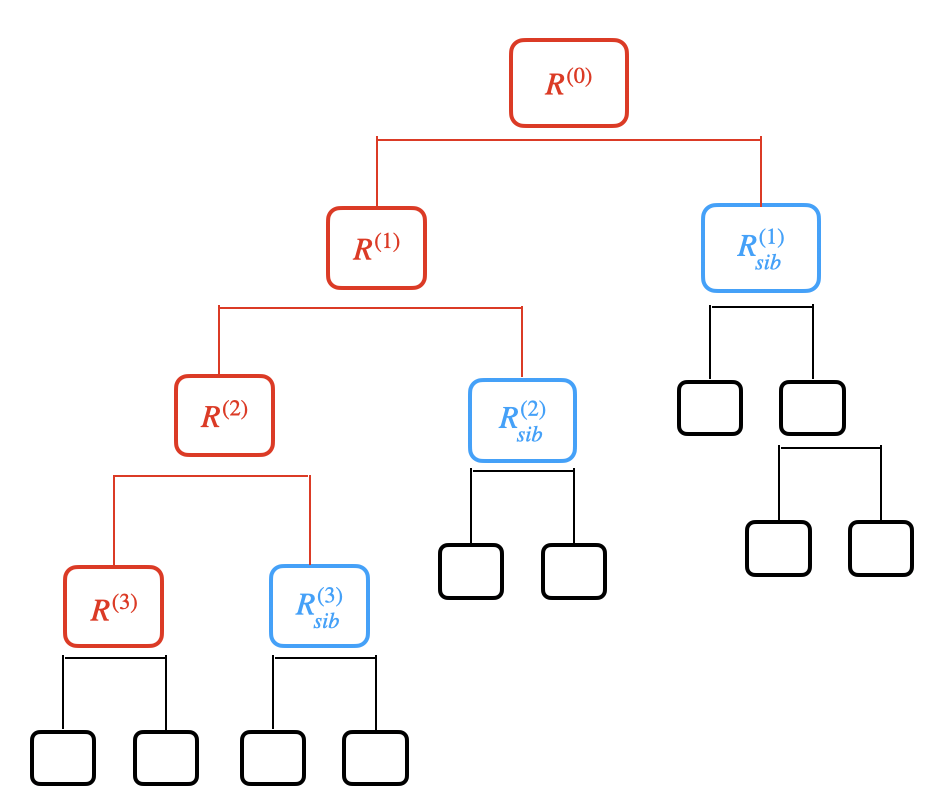}
\end{subfigure}	\hspace{7mm}
\begin{subfigure}{0.45\textwidth}
\subcaption{}
\includegraphics[width=\textwidth]{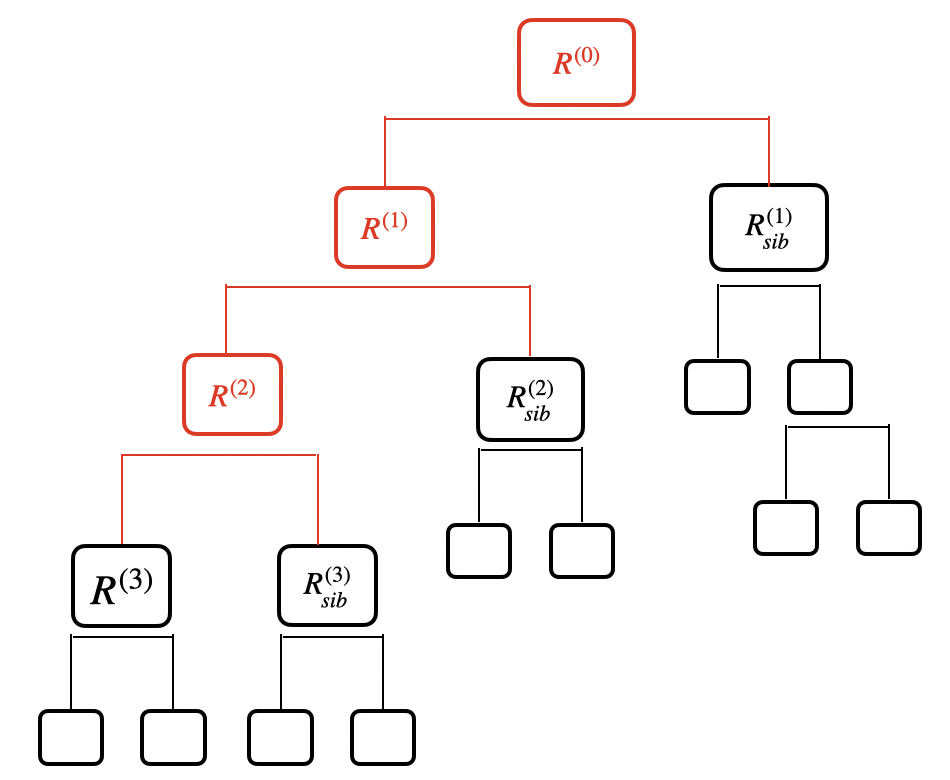}	
\end{subfigure}
\caption{\emph{(a).}  An illustration of Case 1 (red), Case 2 (blue), and Case 3 (black)  for a region $R \in \tree^0\{y'(\phi_1,\nu)\}$ in the base case of the proof of Lemma~\ref{lemma-firstsequence}, where $\mathcal{R}(\mathcal{B}) = \{R^{(0)}, \ldots, R^{(3)} \}$. \emph{(b.)} The black regions show the possible cases for $R \in \tree_{k-1}$
in the inductive step of the proof of Lemma~\ref{lemma-firstsequence}. 
}
\label{fig_proofhelp} 
\end{figure}
\end{proof}

Since Lemma~\ref{lemma-firstsequence} guarantees that each $\phi \in S_{grow}(\mathcal{B},\nu)$ leads to the same $\tree^0\{y'(\phi,\nu)\}$, we will refer to this tree as $\tree^0$, will let $K$ be the number of regions in this tree, and will let $\mathcal{O}$ be a bottom-up ordering of these regions that places $R^{(L-1)},\ldots,R^{(0)}$ in the last $L$ spots. We will further denote $\tree_{K-L}\{\tree^0, y'(\phi,\nu), \lambda, \mathcal{O}\}$ for any $\phi \in S_{grow}(\mathcal{B},\nu)$ as $\tree_{K-L}$, since Lemma~\ref{lemma-firstsequence} further tells us that this is the same for all $\phi \in S_{grow}(\mathcal{B},\nu)$. In what follows, we argue that if we let $\tree(\mathcal{B},\nu,\lambda) = \tree_{K-L}$, where 
$\tree(\mathcal{B},\nu,\lambda)$ appears in the statement of Proposition~\ref{prop_s.pruning}, then \eqref{eq:mainpruning} holds. In other words, we prove that $\tree_{K-L}$, which always exists and is well-defined, is a valid example of $\tree(\mathcal{B},\nu,\lambda)$. 

Recall from \eqref{eq:sbnu} that 
$S^\lambda\left(\mathcal{B}, \mathcal{\nu}\right) = \left\{ \phi \in S_{grow}(\mathcal{B}, \mathcal{\nu}): R^{(L)} \in \tree^\lambda\{y'(\phi,\nu)\}\right\}$. Lemma~\ref{lemma-firstsequence} says that for $\phi \in S_{grow}(\mathcal{B},\nu)$, we can rewrite $\tree^\lambda\{y'(\phi,\nu)\}$ as $\tree_{K}\{\tree^0, y'(\phi,\nu), \lambda, \mathcal{O}\}$. So we can rewrite $ S^\lambda\left(\mathcal{B}, \mathcal{\nu}\right)$ as 
 \begin{equation}
 \label{eq_sprunenounion}	
 S^\lambda\left(\mathcal{B}, \mathcal{\nu}\right) = \left\{ \phi \in S_{grow}(\mathcal{B}, \mathcal{\nu}): 
R^{(L)} \in \tree_{K}\left\{\tree^0, y'(\phi,\nu), \lambda, \mathcal{O}\right\} \right\}.
 \end{equation}
Furthermore, since 
$R^{(L-1)},\ldots,R^{(0)}$ (all of which are ancestors of $R^{(L)}$) are the last $L$ nodes in the ordering $\mathcal{O}$, we see that $R^{(L)} \in \tree_{K}\{\tree^0, y'(\phi,\nu), \lambda, \mathcal{O}\}$ if and only if 
no pruning occurs during the last $L$ iterations of Step 2 in  Algorithm~\ref{alg_pruning}. This means that we can characterize \eqref{eq_sprunenounion} as 
\small
 \begin{equation}
 \label{eq_notreeupdate}
\left\{ \phi \in S_{grow}(\mathcal{B}, \mathcal{\nu}): 
\tree_{K-L}\{\tree^0, y'(\phi,\nu), \lambda, \mathcal{O}\} = \tree_{K}\left\{\tree^0, y'(\phi,\nu), \lambda, \mathcal{O}\right\} \right\}.
 \end{equation}
 \normalsize 
\normalsize
 Recall that for $k=K-L+1,\ldots,K$, $R^{(K-k)}$ is the $k$th region in $\mathcal{O}$, and is an ancestor of $R^{(L)}$. We next argue that we can rewrite \eqref{eq_notreeupdate} as
\footnotesize
\begin{align}
\label{eq_sprunealmost}
\bigcap_{k=K-L+1}^K 
\biggl\{ &\phi \in S_{grow}(\mathcal{B},\nu) : g\left\{R^{(K-k)}, \tree_{K-L}, y'(\phi,\nu)\right\}  \geq \lambda \biggr\}.
\end{align}
\normalsize

To begin, suppose that $\phi \in \eqref{eq_sprunealmost}$. As we are talking about a particular $\phi$, for $k=0,\ldots,K$ we will suppress the dependence of $\tree_k\left\{\tree^0, y'(\phi,\nu), \lambda, \mathcal{O} \right\}$ on its arguments and denote it with $\tree_k$. The fact that $\phi \in \eqref{eq_sprunealmost}$ means that 
$g\left\{R^{(L-1)}, \tree_{K-L}, y'(\phi,\nu)\right\}  \geq \lambda$, which ensures that no pruning occurs at step $K-L+1$, which in turn ensures that $\tree_{K-L+1}=\tree_{K-L}$. 
Combined with \eqref{eq_sprunealmost}, this implies that $g\left\{R^{(L-2)}, \tree_{K-L}, y'(\phi,\nu)\right\}  = $ \\ $g\left\{R^{(L-2)}, \tree_{K-L+1}, y'(\phi,\nu)\right\} \geq \lambda$, which ensures that no pruning occurs at step $K-L+2$, which in turn ensures that  $\tree_{K-L+2}=\tree_{K-L+1}=\tree_{K-L}$. Proceeding in this manner, by tracing through the last $L$ iterations of Step 2 of Algorithm~\ref{alg_pruning}, we see that $\phi$ satisfies $\tree_K = \tree_{K-L}$, and so $\phi \in \eqref{eq_notreeupdate}$. 

Next suppose that $\phi \not\in $ \eqref{eq_sprunealmost}. Let 
$$
k' = \min_{k \in \{K-L+1,\ldots,K\}} \{ k : g\left\{R^{(K-k)}, \tree_{K-L}, y'(\phi,\nu)\right\}  < \lambda\}.
$$
As $k'$ is a minimum, we know that no pruning occurred during steps $K-L+1,\ldots,k'-1$, and so $ \tree_{k'-1}\left\{\tree^0(y'(\phi,\nu)), y'(\phi,\nu), \lambda, \mathcal{O}\right\}=\tree_{K-L}$. This implies that
\\
$g\left\{R^{(K-k')}, \tree_{K-L}
, y'(\phi,\nu)\right\}  < \lambda$ can be rewritten as \\
$g\left(R^{(K-k')}, \tree_{k'-1}\left[\tree^0\{y'(\phi,\nu)\}, y'(\phi,\nu), \lambda, \mathcal{O}\right], y'(\phi,\nu)\right)  < \lambda$. It then follows from Algorithm~\ref{alg_pruning} that pruning occurs at step $k'$, which means that $\tree_K$ cannot possibly equal $\tree_{K-L}$. Thus, $\phi \not\in \eqref{eq_notreeupdate}$.

Thus, $\phi \in  \eqref{eq_sprunealmost}$ if and only if $\phi \in \eqref{eq_notreeupdate}$. 

Finally, Proposition~\ref{prop_s.pruning} rewrites \eqref{eq_sprunealmost} with the indexing over $k$ changed to an indexing over $l$, and plugging in $\tree(\mathcal{B},\nu,\lambda) = \tree_{K-L}$. Therefore, $\tree_{K-L}$ is a valid example of $\tree(\mathcal{B},\nu,\lambda)$. 

To compute $\tree(\mathcal{B},\nu,\lambda)$, we first select an arbitrary $\phi \in S_{grow}(\mathcal{B},\nu)$. We then apply Algorithm~\ref{alg_growing} to grow $\tree^0\{y'(\phi,\nu)\}$. We create a bottom-up ordering $\mathcal{O}$ of the $K$ nodes in $\tree^0\{y'(\phi,\nu)\}$ such that $R^{(L-1)},\ldots,R^{(0)}$ are at the end. Finally, we apply the first $K-L$ iterations of Algorithm~\ref{alg_pruning} with arguments $\tree^0\{y'(\phi,\nu)\}, y'(\phi,\nu), \lambda,$ and $\mathcal{O}$ to obtain $\tree(\mathcal{B},\nu,\lambda)$. The worst case computational cost of CART (the combined Algorithm~\ref{alg_growing} and Algorithm~\ref{alg_pruning}) is $O(n^2 p)$. 

In the special case that $\mathcal{R}(\mathcal{B}) \subseteq \tree^\lambda(y)$, we have that $\nu^T y \in S_{grow}(\mathcal{B},\nu)$ because $y=y'(\nu^T y, \nu)$ and $\mathcal{R}(\mathcal{B}) \subseteq \tree^\lambda(y) \subseteq \tree^0(y)$. In this case, suppose that we carry out the process described in the previous paragraph by selecting $\phi = \nu^T y$. As $\mathcal{R}(\mathcal{B}) \subseteq \tree^\lambda(y)$, it is clear from Algorithm~\ref{alg_pruning} that no pruning occurs during the last $L$ iterations of Step 2 in Algorithm~\ref{alg_pruning} applied to arguments $\tree^0(y), y, \lambda,$ and $\mathcal{O}$. Therefore, the process from the previous paragraph returns the optimally pruned $\tree^\lambda(y)$. Thus, 
when $\mathcal{R}(\mathcal{B}) \subseteq \tree^\lambda(y)$, we can 
simply plug in $\tree^\lambda(y)$, which has already been built, for $\tree(\mathcal{B},\nu,\lambda)$ in \eqref{eq:mainpruning}. 
 
\subsection{Proof of Proposition~\ref{prop_pruning.efficient}}
\label{proof_pruning.efficient}

We first show that we can express
\begin{equation}
\label{eq_sprunesinglelevel}	
\left\{ \phi : g\left\{R^{(l)}, \tree(\mathcal{B},\nu,\lambda), y'(\phi,\nu)\right\}  \geq \lambda \right\}
\end{equation}
as the solution set of a quadratic inequality in $\phi$ for $l=0,\ldots,L-1$, where $g(\cdot)$ was defined in \eqref{def_g}. Because only the numerator of $g\{R^{(l)},\tree(\mathcal{B},\nu,\lambda),y'(\phi,\nu)\}$ depends on $\phi$, it will be useful to introduce the following concise notation:
\begin{equation}
\label{eq:tstgain}
h(R, \tree, y) \equiv \underset{i \in R}{\sum} (y_i - \bar{y}_{R})^2 - \underset{r \in \term(R, \tree)}{\sum} \ \ \underset{i \in r}{\sum} (y_i - \bar{y}_r)^2.
\end{equation}
We begin with the following lemma.
\begin{lemma}
\label{lemma_decompose}	
Suppose that region $R$ in $\tree$ has children $R \cap \chi_{j,s,0}$ and $R \cap \chi_{j,s,1}$. Then,
$
h(R, \tree,y) = \gain_{R}(y,j,s) + h(R \cap \chi_{j,s,1}, \tree,y)+h(R \cap \chi_{j,s,0}, \tree,y)
$, where $\gain_{R}(y,j,s)$ is defined in \eqref{eq:gain}.
\end{lemma}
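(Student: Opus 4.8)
The plan is to verify the identity by direct substitution of the definitions of $h$ in \eqref{eq:tstgain} and $\gain_R$ in \eqref{eq:gain}, after first recording one structural fact about the tree. Write $R_1 = R \cap \chi_{j,s,1}$ and $R_0 = R \cap \chi_{j,s,0}$ for the two children of $R$. The only non-algebraic ingredient is the observation that the terminal descendants of $R$ split cleanly between its children, namely $\term(R,\tree) = \term(R_1,\tree) \sqcup \term(R_0,\tree)$ as a disjoint union. I would justify this from Definitions~\ref{def_tree}--\ref{def_term}: since $R_1$ and $R_0$ are disjoint with $R_1 \cup R_0 = R$, and every strict descendant of $R$ is (by Definition~\ref{def_descendant}) contained in $R$ and hence in exactly one of $R_1, R_0$, each terminal region lying below $R$ is terminal below precisely one child. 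This partition immediately gives
\begin{equation}
\nonumber
\sum_{r \in \term(R,\tree)} \sum_{i \in r}(y_i - \bar{y}_r)^2 = \sum_{r \in \term(R_1,\tree)} \sum_{i \in r}(y_i - \bar{y}_r)^2 + \sum_{r \in \term(R_0,\tree)} \sum_{i \in r}(y_i - \bar{y}_r)^2.
\end{equation}

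Next I would expand the right-hand side $\gain_R(y,j,s) + h(R_1,\tree,y) + h(R_0,\tree,y)$ using the two definitions and watch for telescoping. The term $\gain_R(y,j,s)$ contributes $\sum_{i \in R}(y_i - \bar{y}_R)^2$ together with $-\sum_{i \in R_1}(y_i - \bar{y}_{R_1})^2 - \sum_{i \in R_0}(y_i - \bar{y}_{R_0})^2$, while $h(R_1,\tree,y)$ contributes $+\sum_{i \in R_1}(y_i - \bar{y}_{R_1})^2$ and $h(R_0,\tree,y)$ contributes $+\sum_{i \in R_0}(y_i - \bar{y}_{R_0})^2$. Thus the within-child sum-of-squares terms cancel exactly, leaving $\sum_{i \in R}(y_i - \bar{y}_R)^2$ minus the two terminal sums from $h(R_1,\tree,y)$ and $h(R_0,\tree,y)$. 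Applying the partition identity above recombines those two terminal sums into $\sum_{r \in \term(R,\tree)} \sum_{i \in r}(y_i - \bar{y}_r)^2$, and the result is precisely $h(R,\tree,y)$ as defined in \eqref{eq:tstgain}.

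There is no genuine obstacle here: the argument is a one-step telescoping cancellation, and the sole point requiring care is the disjoint-union claim $\term(R,\tree) = \term(R_1,\tree) \sqcup \term(R_0,\tree)$. I would make sure to state why disjointness holds (the children are disjoint subsets of $\mathbb{R}^p$, so no region can be a terminal descendant of both) and why the union is exhaustive (every terminal region strictly below $R$ sits inside one child), since these are the facts that license combining the two child-level terminal sums back into the single sum for $R$. Everything else is bookkeeping on the sum-of-squares terms.
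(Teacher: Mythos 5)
Your proof is correct and takes essentially the same route as the paper: both arguments hinge on the decomposition $\term(R,\tree) = \term(R\cap\chi_{j,s,1},\tree)\cup\term(R\cap\chi_{j,s,0},\tree)$ together with an add-and-subtract (telescoping) of the within-child sums of squares $\sum_{i \in R\cap\chi_{j,s,e}}(y_i - \bar{y}_{R\cap\chi_{j,s,e}})^2$. The only difference is cosmetic --- the paper adds and subtracts these terms in the definition of $h(R,\tree,y)$ while you expand the right-hand side and cancel, and you spell out the disjointness/exhaustiveness of the terminal-region partition that the paper simply asserts.
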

\begin{proof}
The result follows from adding and subtracting $\sum_{i \in R \cap \chi_{j,s,1}} (y_i - \bar{y}_{R \cap \chi_{j,s,1}})^2$ and $\sum_{i \in R \cap \chi_{j,s,0}} (y_i - \bar{y}_{R \cap \chi_{j,s,0}})^2$ in \eqref{eq:tstgain} and noting that $\term(R, \tree) = \term(R \cap \chi_{j,s,1}, \tree) \cup \term(R \cap \chi_{j,s,0})$. 
\end{proof}

Recall that $\mathcal{B} = \left( (j_1,s_1,e_1), \ldots, (j_L,s_L,e_L) \right)$ and that $\mathcal{R}(\mathcal{B}) = \{R^{(0)}, R^{(1)}, \ldots, R^{(L)}\}$. Lemma~\ref{lemma_decompose_specific} follows from Lemma~\ref{lemma_decompose} and the fact that, due to the form of the vector $\nu$, there are many regions $R$ for which $h\left\{R,  \tree(\mathcal{B},\nu,\lambda), y'(\phi,\nu) \right\}$ does not depend on  $\phi$.  

\begin{lemma}
\label{lemma_decompose_specific}
For any $\tilde{\phi} \in S_{grow}(\mathcal{B},\nu)$, we can decompose $h\{R^{(l)}, \tree(\mathcal{B},\nu,\lambda),y'(\phi,\nu)\}$ as 
\begin{align*}
h\left\{R^{(l)}, \tree(\mathcal{B},\nu,\lambda),y'(\phi,\nu) \right\} &= \sum_{l'=l+1}^{L} h\left\{R^{(l')}_{sib},\tree(\mathcal{B},\nu,\lambda),y'(\tilde{\phi},\nu)\right\} \\
&+ h\left\{R^{(L)}, \tree(\mathcal{B},\nu,\lambda),y'(\tilde{\phi},\nu)\right\} \\
&+  \sum_{l'=l}^{L-1} \gain_{R^{(l')}}\{y'(\phi,\nu), j_{l'+1}, s_{l'+1}\},
\end{align*}
where $R^{(l)}_{sib}$ is the sibling of $R^{(l)}$ in $\tree(\mathcal{B},\nu,\lambda)$. 
\end{lemma}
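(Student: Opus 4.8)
The plan is to iterate the additive decomposition of Lemma~\ref{lemma_decompose} down the branch, and then to show that every resulting term except the $\gain$ terms is invariant to $\phi$, so that each such term may be evaluated at any fixed $\tilde\phi \in S_{grow}(\mathcal{B},\nu)$. First recall from the construction of $\tree(\mathcal{B},\nu,\lambda)=\tree_{K-L}$ in the proof of Proposition~\ref{prop_s.pruning} that for $l'=0,\ldots,L-1$ the region $R^{(l')}$ appears in $\tree(\mathcal{B},\nu,\lambda)$ together with its two children $R^{(l'+1)}$ and $R^{(l'+1)}_{sib} = R^{(l')}\cap\chi_{j_{l'+1},s_{l'+1},1-e_{l'+1}}$. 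Writing $y=y'(\phi,\nu)$ for brevity, I would apply Lemma~\ref{lemma_decompose} to $R^{(l)}$ with the split $(j_{l+1},s_{l+1})$, then to $R^{(l+1)}$, and so on down to $R^{(L-1)}$; each application peels off one $\gain$ term and one sibling term while leaving the ``main'' region to be expanded at the next step. Since $R^{(L)}$ may have further descendants in $\tree(\mathcal{B},\nu,\lambda)$, the recursion terminates with $h\{R^{(L)},\tree(\mathcal{B},\nu,\lambda),y\}$ intact. A short downward induction on $l$ then yields
$$h\{R^{(l)},\tree(\mathcal{B},\nu,\lambda),y\} = \sum_{l'=l}^{L-1}\gain_{R^{(l')}}(y,j_{l'+1},s_{l'+1}) + h\{R^{(L)},\tree(\mathcal{B},\nu,\lambda),y\} + \sum_{l'=l+1}^{L} h\{R^{(l')}_{sib},\tree(\mathcal{B},\nu,\lambda),y\}.$$

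Next, I would fix an arbitrary $\tilde\phi \in S_{grow}(\mathcal{B},\nu)$ and argue that every $h$-term in the two sums above is independent of $\phi$. By Condition~\ref{cond_nuprop}, $y'(\phi,\nu)$ differs from the unperturbed data only by a constant $c_1$ added to the observations in $R^{(L)}$ and a constant $c_2$ added to the observations in $R^{(L)}_{sib}=R^{(L-1)}\cap\chi_{j_L,s_L,1-e_L}$. I would split into two cases. For a sibling $R^{(l')}_{sib}$ with $l'\le L-1$, note that both $R^{(L)}$ and $R^{(L)}_{sib}$ are contained in $R^{(L-1)}\subseteq R^{(l')}$, while $R^{(l')}_{sib}$ is disjoint from $R^{(l')}$; hence no observation in $R^{(l')}_{sib}$ is shifted, and since $h(R,\tree,\cdot)$ depends only on the responses of observations in $R$, the term is literally unchanged. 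For $R^{(L)}$ and for the level-$L$ sibling $R^{(L)}_{sib}$, the entire region is shifted by a single constant, so its mean and the mean of every terminal descendant shift by that same constant; each squared deviation $(y_i-\bar y)^2$ in the definition of $h$ is therefore preserved, and the term is again independent of $\phi$.

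Combining the telescoped identity with these two observations, each $h$-term may be evaluated at $\tilde\phi$ in place of $\phi$, which produces exactly the claimed expression. The main obstacle is the bookkeeping in the second step: one must correctly recognize that the siblings above level $L$ lie entirely outside the support of the perturbation, whereas $R^{(L)}$ and its level-$L$ sibling are each perturbed by a \emph{uniform} shift, and then invoke the constant-shift invariance of within-region sums of squared errors to conclude that $h$ is unchanged in the latter case. Everything else reduces to the already-established decomposition in Lemma~\ref{lemma_decompose} and routine telescoping.
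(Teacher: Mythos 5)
Your proof is correct and follows essentially the same route as the paper's: first telescope Lemma~\ref{lemma_decompose} down the branch to peel off the $\gain$ terms and sibling terms, then use Condition~\ref{cond_nuprop} to show each $h$-term is $\phi$-invariant, since siblings at levels $l'\le L-1$ lie outside the support of the perturbation while $R^{(L)}$ and $R^{(L)}_{sib}$ are each shifted by a uniform constant that preserves within-region sums of squared errors. Your write-up is in fact slightly more explicit than the paper's (e.g., in justifying that $\tree(\mathcal{B},\nu,\lambda)$ contains each $R^{(l')}$ together with both children, so Lemma~\ref{lemma_decompose} applies), but there is no substantive difference in the argument.
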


\begin{proof}
Repeatedly applying Lemma~\ref{lemma_decompose} yields 
\begin{align} 
h\left\{R^{(l)}, \tree(\mathcal{B},\nu,\lambda),y'(\phi,\nu) \right\} &= \sum_{l'=l+1}^{L} h\left\{R^{(l')}_{sib}, \tree(\mathcal{B},\nu,\lambda),y'(\phi, \nu) \right\}\nonumber \\
&+ h\left\{R^{(L)}, \tree(\mathcal{B},\nu,\lambda),y'(\phi, \nu) \right\} \nonumber \\
&+  \sum_{l'=l}^{L-1} \gain_{R^{(l')}}\{y'(\phi,\nu), j_{l'+1}, s_{l'+1}\}. \label{eq:part1} 
\end{align} 
For $l' = l+1, \ldots, L-1$, 
\small $h\left\{R^{(l')}_{sib}, \tree(\mathcal{B},\nu,\lambda),y'(\phi, \nu)\right\} =h\left\{R^{(l')}_{sib}, \tree(\mathcal{B},\nu,\lambda),y'(\tilde{\phi}, \nu)\right\}$ \normalsize because $R^{(l')}_{sib}$ only contains observations where $[y'(\phi,\nu)]_i = [y'(\tilde{\phi},\nu)]_i$. Similarly, \\ $h\left\{R^{(L)}, \tree(\mathcal{B},\nu,\lambda), y'(\phi, \nu) \right\} = h\left\{R^{(L)},\tree(\mathcal{B},\nu,\lambda),y'(\tilde \phi, \nu)\right\}$ \normalsize and  \small 
$h\left\{R^{(L)}_{sib}, \tree(\mathcal{B},\nu,\lambda), y'(\phi, \nu) \right\} = h\left\{R^{(L)}_{sib},\tree(\mathcal{B},\nu,\lambda),y'(\tilde \phi, \nu)\right\}$ \normalsize
because for $i \in R^{(L)}$ and $i \in R^{(L)}_{sib} $, $\{y'(\phi,\nu)\}_i$ and $\{y'(\tilde{\phi},\nu)\}_i$ only differ by a constant shift. 
Plugging these two facts into \eqref{eq:part1} completes the proof. 
\end{proof}

We can now write \eqref{eq_sprunesinglelevel} as
\footnotesize
\begin{align}
\nonumber
&\left\{ \phi : g\left\{R^{(l)}, \tree(\mathcal{B},\nu,\lambda), y'(\phi,\nu)\right\}  \geq \lambda \right\} \\
\nonumber 
&= \left\{ \phi :
h\left\{R^{(l)}, \tree(\mathcal{B},\nu,\lambda), y'(\phi,\nu)\right\}  \geq \lambda \left[ |\term\{R^{(l)},\tree(\mathcal{B},\nu,\lambda)\}|-1 \right] \right\} \\
\nonumber
&= \bigg\{ \phi : \sum_{l'=l}^{L-1} \gain_{R^{(l')}}\{y'(\phi,\nu), j_{l'+1},s_{l'+1}\} \geq  \lambda \left[|\term\{ R^{(l)}, \tree(\mathcal{B},\nu,\lambda)\}|-1\right] - \\
\nonumber & \hspace{40mm}  \sum_{l'=l+1}^{L} h\left\{R^{(l')}_{sib},\tree(\mathcal{B},\nu,\lambda),y'(\tilde{\phi}, \nu)\right\} - h\left\{R^{(L)}, \tree(\mathcal{B},\nu,\lambda),y'(\tilde{\phi}, \nu) \right\} \bigg\} \\ 
\label{eq_spruningfinalform}
&=\bigg\{ \phi : \sum_{l'=l}^{L-1} \left[ a\left\{R^{(l')}, j_{l'+1},s_{l'+1}\right\}\phi^2 + b\left\{R^{(l')}, j_{l'+1},s_{l'+1}\right\} \phi +
c\left\{ R^{(l')}, j_{l'+1},s_{l'+1}\right\} \right] \geq \gamma_l \bigg\},
\end{align}
\normalsize
where the functions $a(\cdot), b(\cdot),$ and $c(\cdot)$ were defined in \eqref{eq_abc} in Appendix~\ref{appendix_branch_computation_proofs}, and where 
\footnotesize
\begin{align*}
\gamma_l \equiv \lambda \left[ |\term\{R^{(l)}, \tree(\mathcal{B},\nu,\lambda)\}|-1\right]- \sum_{l'=l+1}^{L} h\left\{ R^{(l')}_{sib},\tree(\mathcal{B},\nu,\lambda),y'(\tilde{\phi}, \nu)\right\} - h\left\{R^{(L)}, \tree(\mathcal{B},\nu,\lambda),y'(\tilde{\phi}, \nu) \right\}
\end{align*}
\normalsize
is a constant that does not depend on $\phi$. The first equality simply applies the definitions of $h(\cdot)$ and $g(\cdot)$. The second equality follows from Lemma~\ref{lemma_decompose_specific} and moving terms that do not depend on $\phi$ to the right-hand-side. The third equality follows from plugging in notation from Appendix~\ref{appendix_branch_computation_proofs} and defining the constant $\gamma_l$ for convenience. Thus, \eqref{eq_sprunesinglelevel} is quadratic inequality in $\phi$. We  now just need to argue that its coefficients can be obtained efficiently. 

We need to compute the coefficients in \eqref{eq_spruningfinalform} for $l=0,\ldots,L-1$. The quantities $a\{R^{(l')},j_{l'},s_{l'+1}\}$,  $b\{R^{(l')},j_{l'+1},s_{l'+1}\}$, and  $c\{R^{(l')},j_{l'+1},s_{l'+1}\}$ for $l'=0,\ldots,L-1$ were already computed while computing $S_{grow}(\mathcal{B},\nu)$. To get the coefficients for the left hand side of \eqref{eq_spruningfinalform} for each $l=0,\ldots,L-1$, we simply need to compute $L$ partial sums of these quantities, which takes $O(L)$ operations. As we are assuming that we have access to $\tree(\mathcal{B},\nu,\lambda)$, 
computing $h\{R, \tree(\mathcal{B},\nu,\lambda), y'(\tilde{\phi},\nu)\}$ requires $O(n)$ operations. Therefore, computing $\gamma_0$ takes $O(nL)$ operations. By storing partial sums during the computation of $\gamma_0$, we can subsequently obtain $\gamma_l$ for $l=1,\ldots,L-1$ in constant time.

We have now seen that we can obtain the coefficients needed to express \eqref{eq_sprunesinglelevel} as a quadratic function of $\phi$ for $l=0,\ldots,L-1$ in $O(n L)$ total operations. Once we have these quantities, we can compute each set of the form \eqref{eq_sprunesinglelevel} in constant time using the quadratic equation. 

It remains to compute
\begin{equation}
\label{eq_restatesprune}
S^\lambda(\mathcal{B},\nu) = S_{grow}(\mathcal{B},\nu) \bigcap \left[ \bigcap_{l=0}^{L-1} \left\{ \phi : g\left\{R^{(l)}, \tree(\mathcal{B},\nu,\lambda), y'(\phi,\nu)\right\}  \geq \lambda \right\}
\right].
\end{equation}
Recall from Proposition~\ref{prop_Sisintersection} that  $S_{grow}(\mathcal{B},\nu)$ is the intersection of $O(npL)$ quadratic sets. Thus, in the worst case, $S_{grow}(\mathcal{B},\nu)$ has $O(npL)$ disjoint components, and so this final intersection involves $O(npL)$ components. Thus, we can compute $S^\lambda(\mathcal{B},\nu)$ in $O\{npL \times log(npL)\}$ operations \citep{bourgon2009intervals}.

The following lemma explains why, similar to Proposition~\ref{prop_quadratic}, computation time can be reduced in the special case where $\nu=\nu_{sib}$ and $\mathcal{B} = \branch\{R_A, \tree^\lambda(y)\}$. 
\begin{lemma}
\label{lemma:nusibprune}	
When $\mathcal{B} = \branch\{R_A, \tree^\lambda(y)\}$, the
set \\
$
\{ \phi : g\{R^{(l)}, \tree(\mathcal{B},\nu,\lambda),y'(\phi,\nu_{sib})\} \geq \lambda \}
$
has the form $(-\infty, a_l) \cup (b_l,\infty)$, where $a_l \leq 0 \leq b_l$. Therefore, we can compute $\bigcap_{l=0}^{L-1} \left\{ \phi : g\left\{R^{(l)}, \tree(\mathcal{B},\nu,\lambda), y'(\phi,\nu)\right\}  \geq \lambda \right\}$ as
$$
\bigcap_{l=0}^{L-1} (-\infty, a_l) \cup (b_l, \infty) = (-\infty, \min_{0 \leq l \leq L-1} a_l) \cup (\max_{0 \leq l \leq L-1} b_{l}, \infty)
$$
in $O(L)$ operations. Furthermore, we can compute \eqref{eq_restatesprune} in constant time. 
\end{lemma}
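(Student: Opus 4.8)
The plan is to reduce everything to the one-variable quadratic already isolated in the proof of Proposition~\ref{prop_pruning.efficient}. By \eqref{eq_spruningfinalform}, the set in question equals $\{\phi : P_l(\phi) \geq \gamma_l\}$, where $P_l(\phi) = \sum_{l'=l}^{L-1}\gain_{R^{(l')}}\{y'(\phi,\nu_{sib}), j_{l'+1}, s_{l'+1}\}$ and $\gamma_l$ is a constant not depending on $\phi$. So the whole statement follows once I understand the scalar function $P_l$, and the crux of the argument is to show that \emph{$P_l$ is an even function of $\phi$}.

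First I would argue that only the final summand depends on $\phi$. For every $l' \leq L-2$, the split $(j_{l'+1},s_{l'+1})$ sends the entire set $R_A \cup R_B$ into the single child $R^{(l'+1)}$, since $R^{(L)} = R_A \subseteq R^{(l'+1)}$ and $R_B \subseteq R^{(L-1)} \subseteq R^{(l'+1)}$. Hence Lemma~\ref{lemma_sameGain} gives $\gain_{R^{(l')}}\{y'(\phi,\nu_{sib}), j_{l'+1}, s_{l'+1}\} = \gain_{R^{(l')}}(y, j_{l'+1}, s_{l'+1})$, independent of $\phi$; write the sum of these terms as $D_l \geq 0$. The only $\phi$-dependent term is $\gain_{R^{(L-1)}}\{y'(\phi,\nu_{sib}), j_L, s_L\}$, the split that separates $R_A$ from $R_B$. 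By Lemma~\ref{lemma_matrixM} this gain is a nonnegative quadratic in $\phi$ whose leading coefficient is $\nu_{sib}^\T M_{R^{(L-1)},j_L,s_L}\nu_{sib}/\|\nu_{sib}\|_2^4 = 1/\|\nu_{sib}\|_2^2 > 0$ (the computation is exactly the one in the proof of Lemma~\ref{lemma_nusibfaster}, using $(\mathcal{P}_{\mathbb{1}(R_A)}+\mathcal{P}_{\mathbb{1}(R_B)})\nu_{sib}=\nu_{sib}$ and $\mathcal{P}_{\mathbb{1}(R^{(L-1)})}\nu_{sib}=0$), and it vanishes at $\phi = 0$ because setting $\phi = 0$ equalizes the means of $R_A$ and $R_B$. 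A nonnegative upward parabola that vanishes at $\phi=0$ has its vertex there, so it equals $\phi^2/\|\nu_{sib}\|_2^2$ exactly. Therefore $P_l(\phi) = D_l + \phi^2/\|\nu_{sib}\|_2^2$, which is even.

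Given evenness, the set $\{\phi : P_l(\phi) \geq \gamma_l\}$ is symmetric about the origin: since $P_l$ is an upward parabola with minimum $D_l$ at $\phi=0$, it equals all of $\mathbb{R}$ when $\gamma_l \leq D_l$, and otherwise equals $(-\infty,-t_l] \cup [t_l,\infty)$ with $t_l = \sqrt{\|\nu_{sib}\|_2^2(\gamma_l - D_l)} > 0$. In either case it takes the stated form $(-\infty, a_l) \cup (b_l, \infty)$ with $a_l = -b_l \leq 0 \leq b_l$ (the degenerate case being absorbed by letting the hole shrink to the single boundary point). Because each set is the symmetric complement of an interval, intersecting the $L$ of them just keeps the widest hole, so $\bigcap_{l=0}^{L-1}[(-\infty,a_l)\cup(b_l,\infty)] = (-\infty, \min_l a_l) \cup (\max_l b_l, \infty)$; each $t_l$ is obtained in $O(1)$ from $D_l,\gamma_l,\|\nu_{sib}\|_2^2$ via a square root, and the running maximum costs $O(L)$. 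Finally, to produce \eqref{eq_restatesprune} I would intersect this symmetric set with $S_{grow}(\mathcal{B},\nu_{sib})$, which by Lemma~\ref{lemma_nusibfaster} is the intersection of a single interval (the levels $l<L$) with one symmetric complement-of-interval (level $l=L$), hence a union of at most two intervals; intersecting a constant number of intervals is a constant-time interval computation.

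The main obstacle is establishing the evenness of $P_l$, which rests on two distinct facts: Lemma~\ref{lemma_sameGain} to kill the $\phi$-dependence of all gains except the last, and the nonnegativity-plus-vanishing-at-zero argument to pin the last gain down to the pure quadratic $\phi^2/\|\nu_{sib}\|_2^2$ with no linear term. The only bookkeeping nuisance is the degenerate case $\gamma_l \leq D_l$, where the set is all of $\mathbb{R}$ and must be folded into the $(-\infty,a_l)\cup(b_l,\infty)$ template by convention; this never affects the intersection formula, since an all-of-$\mathbb{R}$ factor simply imposes no constraint.
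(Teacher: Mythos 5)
Your proof is correct and follows essentially the same route as the paper's: Lemma~\ref{lemma_sameGain} to strip the $\phi$-dependence from all gains except $\gain_{R^{(L-1)}}\{y'(\phi,\nu_{sib}),j_L,s_L\}$, then nonnegativity (Lemma~\ref{lemma_matrixM}) plus vanishing at $\phi=0$ to pin down that quadratic, and finally Lemma~\ref{lemma_nusibfaster} for the constant-time intersection with $S_{grow}(\mathcal{B},\nu_{sib})$. Your only addition is computing the leading coefficient exactly as $1/\|\nu_{sib}\|_2^2$ (via $M_{R^{(L-1)},j_L,s_L}\nu_{sib}=\nu_{sib}$), which yields the slightly sharper symmetric form $a_l=-b_l$ and cleanly rules out a degenerate constant quadratic; the paper settles for the weaker asymmetric statement, which suffices.
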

\begin{proof}

As $\mathcal{R}(\mathcal{B}) \subseteq \tree^\lambda(y)$, we can let $\tilde{\phi} \in S_{grow}(\mathcal{B},\nu)$ from Lemma~\ref{lemma_decompose_specific} be $\nu^T y$ such that $y'(\tilde{\phi},\nu) = y$. We can then apply Lemma~\ref{lemma_sameGain} to note that
\small
$$
 \sum_{l'=l}^{L-1} \gain_{R^{(l')}}\{y'(\phi,\nu_{sib}), j_{l'+1}, s_{l'+1}\} =  \left[ \sum_{l'=l}^{L-2} \gain_{R^{(l')}}\{y, j_{l'+1}, s_{l'+1}\} \right] +  \gain_{R^{(L-1)}}\{y'(\phi,\nu_{sib}), j_{L}, s_{L}\}.
$$
\normalsize

Thus, when $\nu=\nu_{sib}$ and $\mathcal{B} = \branch\{R_A, \tree^\lambda(y)\}$, we can rewrite \eqref{eq_spruningfinalform} with all of the terms corresponding to 
$\gain_{R^{(l')}}\{y'(\phi,\nu_{sib}), j_{l'+1}, s_{l'+1}\}$ for $l' = l+1,\ldots,L-2$ moved into the constant on the right-hand-side. This lets us rewrite \eqref{eq_spruningfinalform} as 
$\{ \phi : Gain_{R^{(L-1)}}\{y'(\phi,\nu_{sib}), j_{L}, s_{L}\} \geq \tilde{\gamma}_l \},
$
where $\tilde{\gamma}_l$ is an updated constant that does not depend on $\phi$.

To prove that $\{ \phi : Gain_{R^{(L-1)}}\{j_{L}, s_{L}, y'(\phi,\nu_{sib})\} \geq \tilde{\gamma}_l \}$ has the form  $(-\infty, a_l) \cup (b_l,\infty)$ for $a_l \leq 0 \leq b_l$, first recall from Lemma~\ref{lemma_matrixM} that $Gain_{R^{(L-1)}}\{j_{L}, s_{L}, y'(\phi,\nu_{sib})\}$ is a quadratic function of $\phi$. It then suffices to show that this quadratic has a non-negative second derivative and achieves its minimum when $\phi=0$. The second derivative of this quadratic is $a\left\{R^{(L-1)}, j_L, s_L\right\} = ||\nu_{sib}||_2^{-4} \nu_{sib}^T M_{R^{(L-1)},j_L,s_L} \nu_{sib}$, which is non-negative by Lemma~\ref{lemma_matrixM}. From Lemma~\ref{lemma_matrixM}, $Gain_{R^{(L-1)}}\left\{j_{L}, s_{L}, y'(\phi,\nu_{sib})\right\}$ is non-negative. It equals $0$ when $\phi=0$, because when $\phi=0$ then $\bar{y}_{R^{(L-1)} \cap \chi_{j_L,s_L,1}} = \bar{y}_{R^{(L-1)} \cap \chi_{j_L,s_L,0}}$. 

Intersecting $L$ sets of the form $(-\infty, a_l) \cup (b_l,\infty)$ for $a_l \leq 0 \leq b_l$ only takes $O(L)$ operations, because we simply need to identify the minimum $a_l$ and maximum $b_l$. Finally, Lemma~\ref{lemma_nusibfaster} ensures that $S_{grow}(\mathcal{B},\nu_{sib})$ has at most two disjoint intervals, and so the final intersection with $S_{grow}(\mathcal{B},\nu_{sib})$ takes only $O(1)$ operations. 
\end{proof}

\subsection{Proof of Proposition~\ref{prop_nuprop}}

Let $\mathcal{B} = \branch\{R_A,\tree^\lambda(y)\}$, let
$\mathcal{R}(\mathcal{B}) = \{R^{(0)},\ldots,R^{(L)}\}$, and let $\nu=\nu_{sib}$ \eqref{eq_nusib}. Applying the expression given for $\{y'(\phi,\nu_{sib})\}_i$ in Section~\ref{subsec_intuition} (which follows from algebra), we immediately see that Condition~\ref{cond_nuprop} holds with $R_A = R^{(L)}$ and $R_B = R^{(L-1)} \cap \chi_{j_L,s_L, 1-e_L}$. 

Let $\mathcal{B} = \pi\left[ \branch\{R_A,\tree^\lambda(y)\}\right]$ and let $\nu=\nu_{reg}$ \eqref{eq_nureg}. Note that $\pi\left[ \branch\{R_A,\tree^\lambda(y)\}\right]$ induces the same region $R^{(L)}$ as the unpermuted $ \branch\{R_A,\tree^\lambda(y)\}$. Regardless of the permutation, the induced $R^{(L)}$ is equal to $R_A$. Applying the expression for $\{y'(\phi,\nu_{reg})\}_i$ given in Section~\ref{subsec_intuition}, Condition~\ref{cond_nuprop} holds with constant $c_2=0$.

\section{Proofs for Section~\ref{subsec_actuallycomputeSreg}}

\subsection{Proof of Proposition~\ref{prop:subsetpermutation}}
\label{proof_subsetpermutations}

As stated in Proposition~\ref{prop:subsetpermutation}, let
\footnotesize
$$
p_{reg}^{Q}(y) = pr_{H_0}\left\{ |\nu_{reg}^\T Y - c | \geq |\nu_{reg}^\T y - c| \mid \bigcup_{\pi \in Q} %\left\{ 
\mathcal{R}(\pi[\branch\{R_A, \tree^\lambda(y)\}]) \subseteq \textsc{tree}^{\lambda}(Y),
%\right\}, 
\mathcal{P}_{\nu_{reg}}^\perp Y = \mathcal{P}_{\nu_{reg}}^\perp {y} \right\}.
$$
\normalsize
First, we will show that the test based on $p_{reg}^Q(y)$ controls the selective Type 1 error rate, defined in \eqref{eq_st1e}; this is a special case of Proposition 3 from \cite{fithian2014optimal}. 

Define $\mathcal{E}_1 = \{Y : R_A \in \textsc{tree}^\lambda(Y)\}$, $\mathcal{E}_2 = \{ Y : \mathcal{P}_{\nu_{reg}}^\perp Y = \mathcal{P}_{\nu_{reg}}^\perp y\}$, and \\
$\mathcal{E}_3 = \left\{Y : 
\bigcup_{\pi \in Q} 
\mathcal{R}\left(\pi\left[\branch\left\{R_A, \tree^\lambda(y)\right\}\right]\right)
 \subseteq \textsc{tree}^\lambda(Y)\right\}$. Recall that the test 
 of $H_0: \nu_{reg}^\T \mu = c$ based on $p_{reg}^Q(y)$ controls the selective Type 1 error rate if, for all $\alpha \in [0,1]$, $pr_{H_0}\{p_{reg}^Q(Y) \leq \alpha \mid \mathcal{E}_1\} \leq \alpha$. By construction, 
$$
pr_{H_0}\left\{ p_{reg}^Q(Y) \leq \alpha \mid \mathcal{E}_2 \cap \mathcal{E}_3 \right\} = E\left[1_{\{pr_{H_0}\left( |\nu_{reg}^\T Y - c | \geq |\nu_{reg}^\T y - c| \mid \mathcal{E}_2 \cap \mathcal{E}_3 \right) \leq \alpha\}}  \mid \mathcal{E}_2 \cap \mathcal{E}_3 \right] = \alpha.
$$
Let $\psi_{R_A}^Q = 1_{\{pr_{H_0}\left( |\nu_{reg}^\T Y - c | \geq |\nu_{reg}^\T y - c| \mid \mathcal{E}_2 \cap \mathcal{E}_3 \right) \leq \alpha\}}$. An argument similar to that of Lemma~\ref{lemma_permutations} indicates that  $\mathcal{E}_3 \subseteq  \mathcal{E}_1$. The law of total expectation then yields
\small
$$
E(\psi_{R_A}^{Q} \mid \mathcal{E}_1) =  E\left\{ E(\psi_{R_A}^{Q}\mid  \mathcal{E}_1 \cap \mathcal{E}_2 \cap \mathcal{E}_3) \mid  \mathcal{E}_1 \right\} = E\left\{E(\psi_{R_A}^{Q}\mid \mathcal{E}_2 \cap \mathcal{E}_3) \mid  \mathcal{E}_1  \right\} = E(\alpha \mid  \mathcal{E}_1) = \alpha.
$$
\normalsize
Thus, the test based on $p_{reg}^Q(y)$ controls the selective Type 1 error rate. We omit the proof that $p_{reg}^{Q}(y)$ can be computed as
$$p_{reg}^{Q}(y)= pr_{H_0}\left\{ |\phi- c | \geq |\nu_{reg}^\T y - c| \mid \phi \in \bigcup_{\pi \in Q} S^\lambda\left( \pi[\branch\{R_A, \tree^\lambda(y)\}], \nu_{reg}\right) \right\}$$ 
for $\phi \sim N(c, \|\nu_{reg}\|_2^2 \sigma^2)$, as the proof is similar to the proof of Theorem~\ref{theorem_1jewell}.

\section{Effect of Using the Computationally Efficient Alternative in Section~\ref{subsec_actuallycomputeSreg}}
\label{appendix:permutation_sims}

In this section, we investigate the effect of using $p_{reg}^{\mathcal{I}}(y)$ from \eqref{eq_identitypvalue} rather than $p_{reg}(y)$ from \eqref{eq:pvalreg} on power. We also investigate the effect of using \eqref{eq:branchCI} rather than \eqref{eq_mainCIreg} on the width of confidence intervals for $\nu_{reg}^\T \mu$. 

We generate data as described in Section~\ref{subsubsec_datagen}, but for simplicity we restrict our attention to the case where $a=1$ (corresponding to the center panel of Figure~\ref{fig:truetree}).

For each tree that we build, we consider (a) testing $H_0: \nu_{reg}^\T \mu = 0$ and (b) constructing a confidence interval for $\nu_{reg}^\T \mu$, for each region appearing at the third level of the tree. For each test,  we compare the test that uses the full conditioning set (i.e. that uses $p_{reg}(y)$ from \eqref{eq:pvalreg}) to the test that uses the identity permutation only (i.e. that uses $p_{reg}^{\mathcal{I}}(y)$ from \eqref{eq_identitypvalue}). For each interval, we compare the method that uses the full conditioning set (i.e. \eqref{eq_mainCIreg}) to the method that uses the identity permutation only (i.e. \eqref{eq:branchCI}). The results are displayed in Figure~\ref{fig_perm}. 

The left panel of Figure~\ref{fig_perm} shows that the power loss resulting from using \eqref{eq_identitypvalue} instead of \eqref{eq:pvalreg} is negligible. In fact, we need to zoom in on the left panel, as shown in the center panel, to see any separation between the power curves. We see in the center panel that power is lower when \eqref{eq_identitypvalue} is used, though we emphasize that the differences in power are extremely small. 

We see in the right panel of Figure~\ref{fig_perm} that the computationally efficient conditioning set has a more noticeable impact on the median width of our confidence intervals. As expected, the confidence intervals are narrower when we use the full conditioning set (i.e. \eqref{eq_mainCIreg}). However, this difference is most noticeable when $b$ is small. When $b$ is small, in which case the confidence intervals are wide even when the full conditioning set is used. Thus, the amount of precision lost overall by constructing confidence intervals using the identity permutation only (i.e. \eqref{eq:branchCI}) is not of practical importance. 

Based on these results, we recommend using the identity permutation in practice, because it is the most computationally efficient choice \emph{and} does not meaningfully reduce power or precision compared to the full conditioning set.

\begin{figure}
\centering
\includegraphics[width=0.8\textwidth]{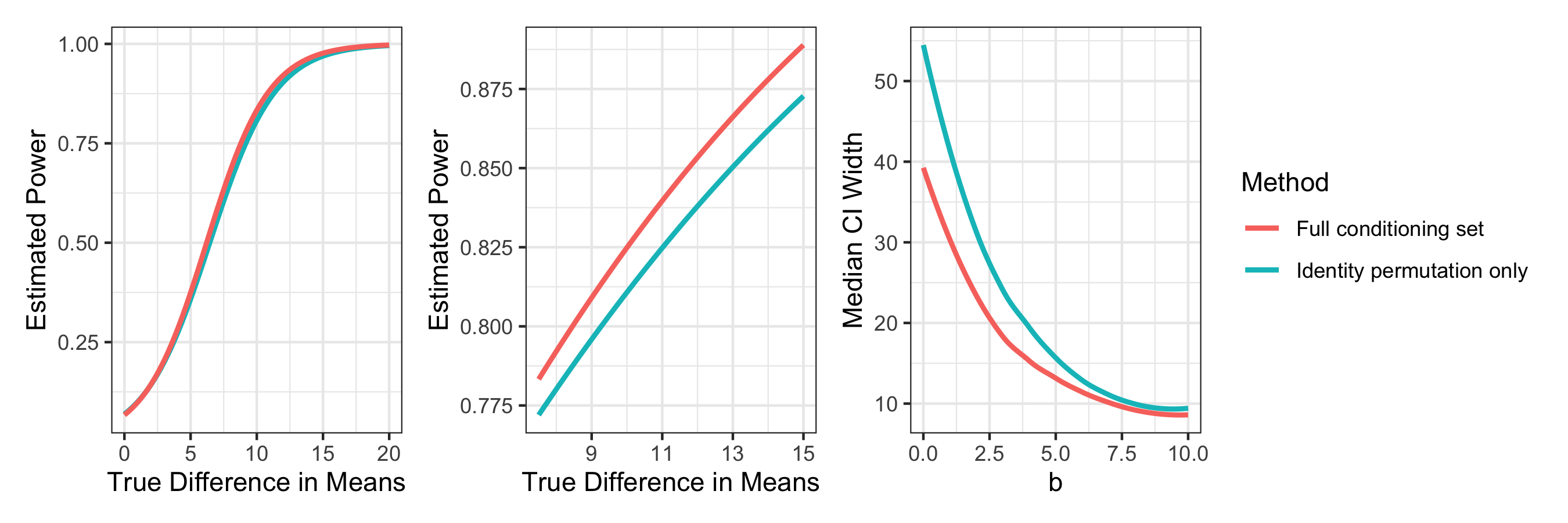}
\caption{
Simulation results comparing inference based on the full conditioning set to inference based on the identity permutation only (see Section~\ref{subsec_actuallycomputeSreg}). The left panel shows power curves. The center panel zooms in on one section of the left panel. The right panel shows median widths of confidence intervals. }	
\label{fig_perm}
\end{figure}

\section{Robustness to Non-Normality}
\label{appendix:non-normal}

In this section, we explore the performance of the selective $Z$-test under a global null when the normality assumption on $Y$ is violated. 
For four choices of cumulative distribution function $F$, we generate $Y_i \overset{\text{i.i.d.}}{\sim} F$ such that all observations have the same expected value. We set $n=200,p=10$, and we grow trees to a maximum depth of $3$. We plug in $(n-1)^{-1} \sum_{i=1}^n (y_i -\bar{y})^2$ as an estimate of $\sigma^2$, as it does not make sense to assume known variance for distributions with a mean-variance relationship. Figure~\ref{fig_nonnormal} displays quantile-quantile plots of the p-values for testing $H_0: \nu_{sib}^\T \mu = 0$, using the test in \eqref{def_mainpval}. 

Figure~\ref{fig_nonnormal} shows that despite the fact that our proposed selective inference framework was derived under a normality assumption, it yields approximately uniformly distributed p-values for Poisson(10), Bernoulli(0.5), and Gamma(1,10) data.  We suspect that this is because $\mathcal{P}_\nu^\perp Y$ is approximately independent of $\nu^\T Y$ for these distributions (see the proof of Theorem~\ref{theorem_1jewell} in Appendix~\ref{appendix:section3proofs}). In the case of Bernoulli(0.1) data, which represents a particularly extreme violation of the normality assumption, the p-values from our selective $Z$-test are not uniformly distributed. As mentioned in Section~\ref{section_disc}, future work could involve characterizing conditions for $F$ under which our selective $Z$-tests will approximately control the selective Type 1 error. 

\begin{figure}
\centering
\includegraphics[width=0.8\textwidth]{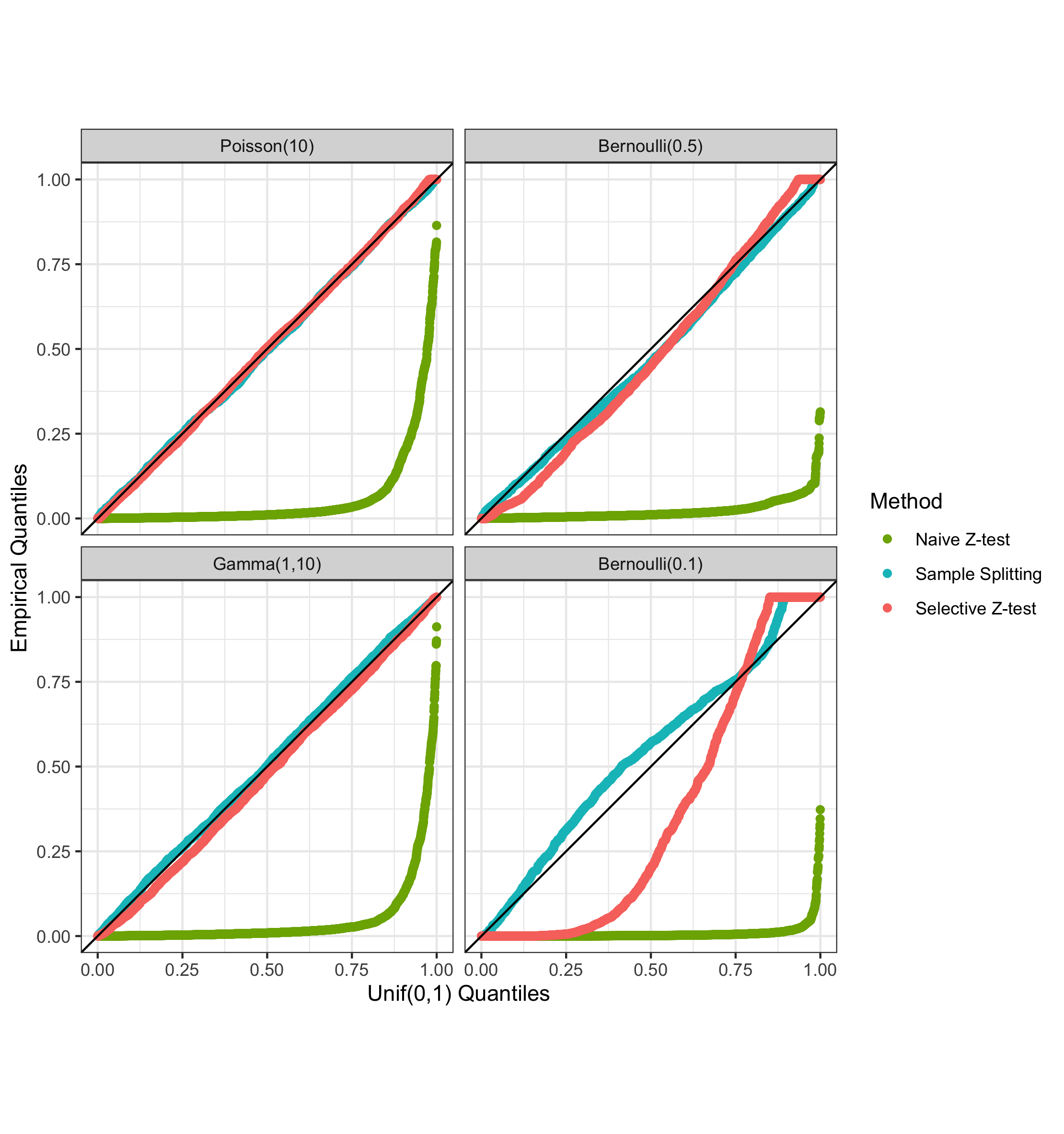}
\caption{Quantile-quantile plots of the p-values for testing $H_0: \nu_{sib}^\T \mu = 0$ under a global null. A naive Z-test (green), sample splitting (blue), and selective Z-test (pink) were performed; see Section \ref{subsubsec_methods}.}	
\label{fig_nonnormal}
\end{figure}

\section{Alternate Box Lunch Study Analysis}
\label{appendix_bls}

Figure~\ref{fig_bls_alt} is the same as the left panel of Figure~\ref{fig_BLStrees}, but the selective $Z$-inference is carried out with $\hat{\sigma}_{\text{cons}}$, from Section~\ref{subsec:unknownvar}, rather than $\hat{\sigma}_{\text{SSE}}$. The takeaways presented in Section~\ref{section_realData} do not change.

\begin{figure}
\centering
\includegraphics[width=0.6\textwidth]{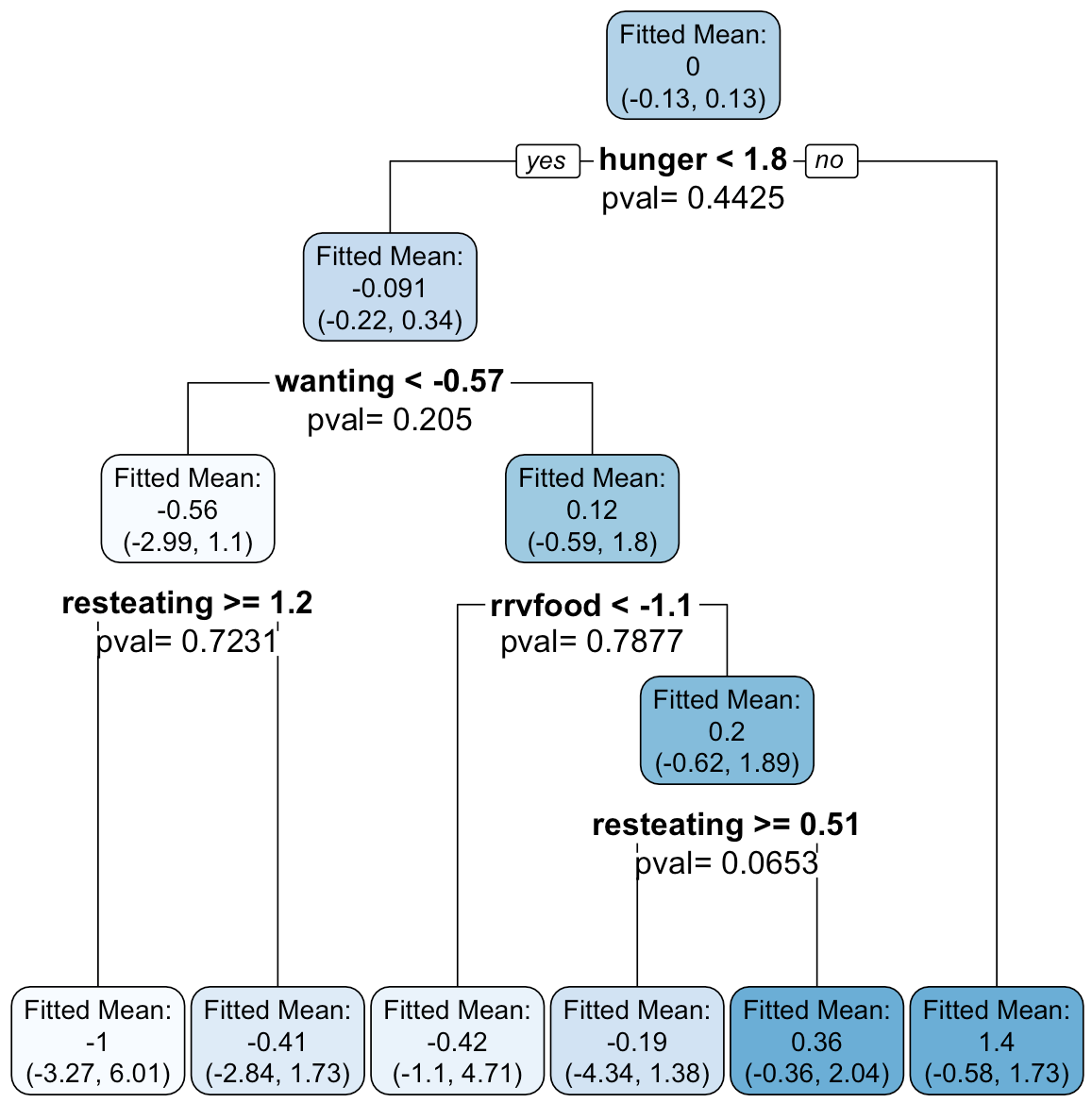}
\caption{A CART tree fit to the Box Lunch Study data. Each split has been labeled with a p-value \eqref{def_mainpval}, and each region has been labeled with a confidence interval \eqref{eq:branchCI}. Inference is carried out by plugging in $\hat{\sigma}_{\text{cons}}$, from Section~\ref{subsec:unknownvar}, as an estimate of $\sigma$. }
\label{fig_bls_alt}	
\end{figure}

\end{document}